\documentclass[11pt]{article}
\usepackage[english]{babel}
\usepackage[colorlinks,linkcolor=blue,citecolor=blue,urlcolor=blue]{hyperref}
\usepackage{amsmath,amssymb,graphicx,stmaryrd,cleveref,latexsym,fullpage}
\usepackage{amsopn,amsthm,upref}
\usepackage{microtype}

\usepackage{cite}
\usepackage{url}
\usepackage[caption=false]{subfig}

\usepackage{tikz}
\usetikzlibrary{cd}
\usetikzlibrary{decorations.pathreplacing}
\usepackage{ifthen}

\tikzset{vertex/.style={ draw , circle , fill , inner sep=0em , minimum size=0.3em}}
\tikzset{empty/.style={inner sep=0em, outer sep=0em, minimum size=0em}}

\newcommand{\Iff}{\textbf{if\textcompwordmark f} }

\newcommand{\nin}{\not\in}

\newcommand{\nocomma}{}
\newcommand{\nospace}{}
\newcommand{\tmem}[1]{{\em #1\/}}
\newcommand{\tmmathbf}[1]{\ensuremath{\boldsymbol{#1}}}
\newcommand{\ort}{\tmmathbf{o}}
\newcommand{\tmop}[1]{\ensuremath{\operatorname{#1}}}
\newcommand{\tmscript}[1]{\text{\scriptsize{$#1$}}}


\newcommand{\absbr}[1]{\left|#1\right|}
\newcommand{\rbr}[1]{\left(#1\right)}

\newcommand{\floorbr}[1]{\left\lfloor #1\right\rfloor}
\newcommand{\ceilbr}[1]{\left\lceil #1\right\rceil}

\newcommand{\Nv}{N_{\uparrow}}

\newcommand{\dLM}{d_{\mathrm{LM}}}

\let\le\leqslant
\let\ge\geqslant
\let\phi\varphi

\newcommand{\eps}{\varepsilon}
\newcommand{\der}{\mathbf{D}}

\newcommand{\tanner}{\mathbf{T}}
\newcommand{\fT}{\mathfrak{T}}
\newcommand{\card}{\operatorname{card}}

\newcommand{\inc}{\mathrm{inc}}

\newcommand{\adj}{\leftrightarrow}
\newcommand{\Mod}[1]{\ (\mathrm{mod}\ #1)}

\newcommand{\Ev}{E_{\uparrow}}
\newcommand{\Eh}{E_{\rightarrow}}

\theoremstyle{plain}
\newtheorem{lemma}{Lemma}
\newtheorem{proposition}{Proposition}
\newtheorem{theorem}{Theorem}
\newtheorem{corollary}{Corollary}

\theoremstyle{definition}

\newtheorem*{definition}{Definition}

\theoremstyle{remark}
\newtheorem{remark}{Remark}
\newtheorem{example}{Example}
\newtheorem*{remark*}{Remark}

\newcommand{\RR}{\mathbb{R}}
\newcommand{\FF}{\mathbb{F}}
\newcommand{\ZZ}{\mathbb{Z}}
\newcommand{\NN}{\mathbb{N}}

\newcommand{\abs}[1]{\left\lvert #1 \right\rvert}

\newcommand{\HX}{H_X}
\newcommand{\HZ}{H_Z}
\newcommand{\CX}{\mathcal{C}_X}
\newcommand{\CZ}{\mathcal{C}_Z}
\newcommand{\dX}{d_X}
\newcommand{\dZ}{d_Z}
\newcommand{\wt}{\mathbf{wt}}

\newcommand{\T}{*}
\newcommand{\id}{\mathrm{id}}

\newcommand{\zm}{0}

\newcommand{\cA}{\mathcal{A}}
\newcommand{\cB}{\mathcal{B}}
\newcommand{\cC}{\mathcal{C}}
\newcommand{\cN}{\mathcal{N}}

\newcommand{\hG}{\hat{\Gamma}}
\newcommand{\cQ}{\mathcal{Q}}

\newcommand{\cT}{\ensuremath{\mathcal{T}}}

\newcommand{\cM}{\ensuremath{\mathcal{M}}}

\newcommand{\cF}{\mathcal{F}}

\newcommand{\HP}{\mathrm{HP}}
\newcommand{\LP}{\mathrm{LP}}

\newcommand{\bC}{\mathbf{C}}

\newcommand{\cI}{\mathcal{I}}

\newcommand{\cay}{\mathrm{Cay}}

\newcommand{\Z}{\mathbb{Z}}
\newcommand{\F}{\mathbb{F}}

\DeclareMathOperator{\rk}{rk}
\DeclareMathOperator{\im}{im}
\DeclareMathOperator{\supp}{supp}

\DeclareMathOperator*{\argmin}{arg\,min}

\begin{document}

\title{Asymptotically Good Quantum and Locally Testable Classical LDPC Codes}
\author{Pavel~Panteleev and Gleb~Kalachev\thanks{Pavel~Panteleev and Gleb~Kalachev are with the Faculty of Mechanics and Mathematics, Moscow State University, Moscow, Russia.%
}}

\setcounter{page}{1}
\maketitle

\begin{abstract}
  We study classical and quantum LDPC codes of constant rate obtained by the lifted product construction over non-abelian groups. We show that the obtained families of quantum LDPC codes are asymptotically good, which proves the~qLDPC conjecture. Moreover, we show that the produced classical LDPC codes are also asymptotically good and locally testable with constant query and soundness parameters, which proves a~well-known conjecture in the field of locally testable codes.
  \end{abstract}

\section*{Introduction}

Classical low-density parity-check (LDPC) codes \cite{gallager1963}, as well as their quantum counterparts~\cite{qldpc}, have many important applications in theory and practice. These codes are represented by sparse parity-check matrices,  where the term \emph{sparse} usually means that the corresponding Tanner graphs are of  bounded degree. Besides numerous applications in data storage and transmission systems, such codes are often used to construct classical and quantum locally testable codes~\cite{Kaufman:2007,Aharonov:2015,Eldar:2017a}, where the sparseness of a~code ensures the~constant-query property, also known as the constant locality. Informally speaking, a~classical locally testable code (LTC) is a~code that comes with an~efficient non-deterministic procedure that allows to test with high probability whether a~given sequence is close to some codeword by looking at a~very small, usually constant, number of randomly chosen bits from this sequence. There are several ways how one can formally define LTCs~\cite{Goldreich:2010}. In this paper, we adopt a~very simple combinatorial definition (see~\cite[Definition~11]{Leverrier:2021a}) that implies a~rather strong form of local testability. According to this definition, a~linear code~$\mathcal{C} \subseteq \mathbb{F}_q^n$ is called {\tmem{$(\omega,s)$-locally testable}} if it has a~parity-check matrix $H\in \F_q^{m\times n}$ with rows of weight at most $\omega$ such that for any vector $x \in \mathbb{F}_q^n$ we have
\[ \frac{1}{m} | H x | \geqslant \frac{s}{n} d(x, \mathcal{C}), \]
where $d(x, \mathcal{C}) := \min_{c\in\cC} d(x,c)$, and we denote by $d(\cdot,\cdot)$ and $|\cdot|$  the Hamming distance and the Hamming weigh.
The parameters $\omega$ and $s$ are positive real numbers called the~{\tmem{locality}} and
{\tmem{soundness}}, respectively. As we already mentioned above, this definition implies
a~strong form of local testability. Indeed, if our test procedure picks uniformly at random a~row from $H$ and finds the corresponding syndrome component, then the~probability of rejection
$\tmop{rej}_{H} (x) = \frac{1}{m} | H x |$ grows
{\tmem{at least}} linearly with the~\emph{normalized minimum distance} $\delta (x,
\mathcal{C}) := \frac{1}{n} d (x, \mathcal{C})$ from the~tested vector $x
\in \mathbb{F}_q^n$ to the~code~$\mathcal{C}$. In~fact, for any~family of LDPC codes with $m = \Theta (n)$ where the the weights of rows and columns in~$H$ are bounded from above by~$\omega$ (such codes are called \emph{$\omega$-limited}), it follows that $\frac{1}{m} | H x |$ can not grow {\tmem{more}} than linearly with
$\delta (x, \mathcal{C})$ since for every parity-check matrix~$H$ we get $| Hx | \leqslant \omega \cdot d (x, \mathcal{C})$.

In the case of quantum locally testable codes (qLTCs) introduced in~\cite{Aharonov:2015}, one can give a~similar to the above definition if a~sparse parity-check matrix~$H$ is replaced by a~local Hamiltonian~$\mathcal{H}$ defining the quantum code. However, for a~quantum CSS code $\cQ$ (see~\cite{CSS:1996, CSS2:1996}), obtained from a~pair of classical codes $\cC_X$ and~$\cC_Z$, it is possible~\cite{Aharonov:2015, Leverrier:2021a} to infer the local testability of $\cQ$ from the local testability of $\cC_X$ and~$\cC_Z$. Let us recall that a~quantum CSS code $\cQ$ of~\emph{dimension~$k$} is defined by a~pair of classical linear
codes $\CX, \CZ \subseteq \mathbb{F}_q^n$ such that $\CZ^{\perp} \subseteq
\CX$, and $k = \dim \CX / \CZ^{\perp}$. Its~\emph{minimum distance $d$} is defined as
$\min (\dX, \dZ)$, where $\dX$ and $\dZ$ are the minimal Hamming weights of
the~vectors from $\CX \setminus \CZ^{\perp}$ and $\CZ \setminus \CX^{\perp}$,
respectively. In this case, we often say that $\cQ$ is an~$\llbracket n, k, d
\rrbracket_q$~code. The codes $\CX$, $\CZ$ are usually represented respectively by parity-check
matrices~$\HX$, $\HZ$, and the condition
$\CZ^{\perp} \subseteq \CX$ is equivalent to $\HX \HZ^{\T} = \zm$, where $\HZ^*$ is the transpose of $\HZ$. It was shown in~\cite[Lemma~13]{Leverrier:2021a} that if a~CSS code $\cQ$ is defined by two classical $(\omega,s)$-locally testable codes with parity-check matrices $H_X$, $H_Z$, then the quantum code $\cQ$ is $(\omega, s')$-locally testable, where 
$s' := s\min\left(\frac{m_X}{m_X + m_Z}, \frac{m_Z}{m_X + m_Z}\right)$, and $m_X$ (resp. $m_Z$) is the number of rows in the matrix $H_X$ (resp. $H_Z$). 

Classical and quantum LTCs have many interesting applications in theoretical computer science since they are intimately related to a~number of important problems in complexity theory~\cite{Dinur:2007, Aharonov:2015}. 
A~major open problem is whether there are such codes of \emph{constant} locality $\omega$, \emph{constant} rate, and \emph{constant} normalized minimum distance, sometimes also known as the \emph{$c^3$-conjecture} (in the context of classical codes~\cite{Dikstein:2020}) and \emph{qLTC conjecture} (in the quantum case~\cite{Aharonov:2015}). In this respect, the situation for classical LTCs is much better than for their quantum counterparts since classical LTCs of almost constant rate have been known for a~long time~\cite{Goldreich:2002}. However, in the quantum case, even if the property of local testability is not required, it is still a~widely open problem, known as the \emph{qLDPC conjecture}~\cite{Breuckmann:2021}, to obtain an~\emph{asymptotically good} family of quantum LDPC (qLDPC) codes\footnote{Note that if one goes beyond the standard definition of a~quantum LDPC code then codes with very good parameters were already known to exist~\cite{Bacon:2017, Bohdanowicz:2019}.}, i.e., with the constant rate and normalized minimum distance. Up~until very recently~\cite{Hastings:2021:fiber, Panteleev&Kalachev:2021, Breuckmann:balanced:2021,Hastings:2021}, the best provable lower bounds on the distances of qLDPC codes were, up to polylogarithmic factors, at most of the order $\sqrt{n}$ as the number of qubits $n\to\infty$~\cite{Kitaev:2002, Freedman:2002:best-code, Tillich&Zemor:2009, Guth:2014, Evra:2020, Kaufman:2021}. 
At the same time, asymptotically good families of classical LDPC codes have been known since their introduction by Robert Gallager in the 1960s~\cite{gallager1963}. 

In the current work, we show the existence of classical LTCs of constant rate, constant locality, and constant normalized minimum distance. In particular, we prove the following theorem, which gives a~positive answer to the $c^3$-conjecture. Let us recall that a~classical linear code~$\cC\subseteq\F_q^n$ has the parameters $[n, k, d]_q$ if $k=\dim \cC$ and $d = \min_{c\in\cC\setminus\{0\}} |c|$.

\begin{theorem}\label{th:LTC}
  For every number $R \in (0, 1/2)$ and finite field $\mathbb{F}_q$ it is
  possible to find universal constants $s$ and~$\omega$ such that there exists
  an~explicit family of $(\omega, s)$-locally testable classical LDPC codes with
  the~parameters $[n, k \geqslant Rn, d = \Theta (n)]_q$ as $n \rightarrow
  \infty$.
\end{theorem}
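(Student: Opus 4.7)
The plan is to realize the desired family as the classical $X$-code $\CX$ arising from the paper's lifted product (LP) construction, applied to a carefully chosen non-abelian group~$G$ and two matrices $A\in\F_q[G]^{m_1\times n_1}$, $B\in\F_q[G]^{m_2\times n_2}$ over the group algebra. The parity-check matrix $\HX$ is then the block matrix dictated by the LP of the associated chain complexes, which is automatically sparse because the supports of the entries of $A$ and $B$ in~$G$ are chosen to have constant size. The group~$G$ will be taken from a family of finite quotients of a (non-abelian) group with strong small-set expansion, so that the relevant Cayley graph $\cay(G,S)$ has constant degree and robust spectral expansion as $|G|\to\infty$.

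The verification splits into three parts. First, for the \emph{rate}: the LP chain complex $C_2\to C_1\to C_0$ has dimensions that are linear multiples of $|G|$, and a direct Euler-characteristic calculation gives $\dim\CX = \dim\ker\HX \geqslant Rn$ provided the ratios $m_i/n_i$ and the sizes of the blocks are tuned appropriately; one verifies that every $R\in(0,1/2)$ is reachable. Second, for the \emph{linear distance}: a low-weight codeword $x\in\CX$ would, by the bilinear lifted-product structure of~$\HX$, force the existence of low-weight ``horizontal'' and ``vertical'' fibres that in turn contradict the distance of the base codes of $A$ and $B$; non-commutativity of $G$ is used to prevent the classical obstruction that kept earlier (abelian) lifted products at distance $O(\sqrt{n}\,\polylog n)$.

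The central step is the \emph{soundness inequality} $\frac{1}{m}|\HX x|\geqslant \frac{s}{n}\,d(x,\CX)$ for a universal $s>0$. I would prove it by exhibiting an explicit decoder whose error weight is controlled linearly by the syndrome weight. Locally, the base matrices $A$ and $B$ are chosen so that the classical codes they define are themselves constantly locally testable; the decoder uses this to snap each ``row-fibre'' and each ``column-fibre'' of~$x$ to a nearby base-codeword, incurring in each fibre a correction whose weight is at most a constant times the local syndrome weight. Globally, the corrections from different fibres must be reconciled into a single vector $c\in\CX$; here one uses the LP's robustness together with the expansion of $\cay(G,S)$ to show that the total mismatch between the two local decodings is again linear in the total syndrome weight. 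Adding the two contributions yields $d(x,\CX)\leqslant (1/s)\cdot (n/m)\cdot|\HX x|$ with $s$ depending only on the base codes and on the spectral gap of the Cayley graph.

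The hard part will be the soundness step. The rate follows from linear algebra and the distance from an extension of known LP distance arguments, but establishing \emph{constant} soundness uniformly in $n$ is delicate: a naive fibre-by-fibre argument loses a factor that grows with~$n$, because global inconsistencies between the row- and column-decodings can accumulate. Overcoming this loss is precisely where non-abelianness and expansion of $G$ must enter essentially, via a robustness lemma stating that if the two local decodings disagree on a set $T$, then the LP syndrome restricted to a neighbourhood of $T$ in the underlying bipartite-like structure is itself of weight $\Omega(|T|)$. Proving such a lemma is the main technical obstacle, and is the payoff of using non-abelian groups with strong expansion rather than the abelian groups used in the prior hypergraph-product and earlier lifted-product constructions.
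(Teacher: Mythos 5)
Your proposal targets the wrong code inside the lifted-product complex. The paper's locally testable code is $\ker\partial_2 = \ker H_Z^{\T}$, i.e.\ the code supported on the top-level cells $\cC_2=\F_q^{r'}\Eh$ whose checks are the middle term $\cC_1$; it is \emph{not} the CSS $X$-code $\CX=\ker\partial_1$. The distinction matters: the soundness mechanism lives in a specific degree of the complex (a chain $c\in\cC_2$ with small syndrome $\partial_2 c\in\cC_1$), and the rate bound $R<1/2$ in the theorem comes precisely from counting $\dim\cC_2$ against $\dim\cC_1$ for that code. Also, the paper's construction is not a generic $\LP(A,B)$: it is $\cT(\Gamma;h)\otimes_G\cT^*(\Gamma;h')$ for the \emph{same} double-cover $\Gamma=\cay_2(G,S)$ of a Ramanujan Cayley graph with two \emph{different} local codes, and the asymmetry $\cA\otimes_G\cB^*$ versus $\cA\otimes_G\cB$ is essential to which homological degree carries the LTC.

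The genuine gap is the soundness step, which you correctly flag as the hard part but whose proposed resolution diverges from what actually works. The paper does not build a decoder and does not require the base codes to be locally testable; indeed the local codes are \emph{random} linear codes (which are not LTCs), chosen so that the \emph{pair} $(\ker h,\im h'^*)$ satisfies the product-expansion property (Lemma~\ref{lemma:rand-gh}). Local testability is then obtained purely formally from a lower bound on the first locally minimal distance: Lemma~\ref{lemma:dist-from-lm} shows that if $|\partial_2 c|<\dLM^{(1)}(\cC)$ then $\partial_2 c$ cannot be locally minimal, so a single-cell modification of $c$ strictly decreases the syndrome weight, and induction yields $|\partial_2 c|\ge d(c,Z_2(\cC))$. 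All of the analytic work is then concentrated in proving $\dLM^{(1)}(\cC)=\Omega(n)$ (Proposition~\ref{prop:main}), via the labeled-vertex dichotomy ($m$-edge-expanding versus $s$-face-expanding vertices) played against the edge-expansion of $\Lambda$ \emph{and of its square} $\Lambda^2$. Your ``reconcile the row- and column-decodings using a robustness lemma'' is precisely the content that must be proved, and the fibre-by-fibre picture you describe is the plain tensor-product picture; in the $G$-lifted setting the fibres are twisted by the group action and the accumulation of inconsistencies is controlled not by a two-stage decoder but by the local-minimality descent combined with product-expansion of the local pair and the global mixing of $\Lambda$ and $\Lambda^2$. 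As written, the proposal leaves the theorem's central difficulty open.
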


In the quantum case, we obtained a~somewhat weaker analog of the above theorem, given below, which shows the existence of asymptotically good families of qLDPC codes, not necessarily the locally testable ones. This gives an~affirmative answer to the qLDPC conjecture.

\begin{theorem}\label{th:qLDPC}
For every number $R \in (0, 1)$ and finite field $\mathbb{F}_q$ there exists an~explicit
family of quantum LDPC codes over~$\mathbb{F}_q$
with the~parameters {$\llbracket n, k \geqslant Rn, d = \Theta (n)
\rrbracket_q$} as $n \rightarrow \infty$.
\end{theorem}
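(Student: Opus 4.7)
The plan is to extract Theorem~\ref{th:qLDPC} from the same lifted product machinery used to establish Theorem~\ref{th:LTC}, rebalanced to exploit the fact that the quantum conclusion does not require local testability. Fix a family of non-abelian groups $\{G\}$ and sparse matrices $A,B$ over the group algebra $\F_q[G]$; the lifted product construction assembles from $A$ and $B$ a three-term chain complex
\[
\F_q^{n_2}\xrightarrow{\partial_2}\F_q^{n_1}\xrightarrow{\partial_1}\F_q^{n_0},\qquad \partial_1\partial_2=0,
\]
and setting $\HZ:=\partial_1,\ \HX:=\partial_2^{\T}$ yields an explicit CSS code $\cQ$ on $n=n_1$ qudits with $\HX\HZ^{\T}=0$ automatic. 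The LDPC property is immediate from sparseness of $A,B$ combined with the bounded degree of the Cayley graphs of $G$.

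Next I would read off the rate and distance from the chain complex. Euler-characteristic accounting gives $k=n-\rk\HX-\rk\HZ$; by choosing the aspect ratios of $A$ and $B$ I can make $\rk\HX+\rk\HZ\le(1-R)n$ for any target $R<1$, which gives the desired $k\ge Rn$. This flexibility, absent in the classical LTC setting of Theorem~\ref{th:LTC} which forced $R<1/2$, is precisely what widens the allowed rate range here: one can take $H_X$ much taller than $H_Z$ or vice versa without breaking the quantum analysis. The distance $\dX=\Theta(n)$ then amounts to showing that every nontrivial class in $\ker\partial_2^{\T}/\im\partial_1^{\T}$ has Hamming weight $\Omega(n)$, and $\dZ=\Theta(n)$ is the symmetric statement for the middle homology $\ker\partial_1/\im\partial_2$.

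The main obstacle is exactly this linear lower bound on the middle (co)homology. Tillich-Z\'emor-style arguments for product codes yield only $d=\Omega(\sqrt{n})$, so to break the square-root barrier I would invoke a \emph{product-expansion} property of the pair $(A,B)$ over $\F_q[G]$: informally, every low-weight vector in $\ker\partial_1$ must admit a sparse decomposition along the rows and columns of the lifted product grid, and for suitably chosen non-abelian $G$ the expansion of the associated bipartite Cayley complex forces any such decomposition to be absorbed into an image of $\partial_2$, so the vector already lies in $\im\partial_2$ unless its weight is $\Omega(n)$. The existence of $A,B,G$ with this product-expansion property, which is precisely what the non-abelian choice of $G$ makes available and which is established in the course of proving Theorem~\ref{th:LTC}, is the heart of the argument; once it is in hand, the uniform bounds $\dX,\dZ=\Theta(n)$ follow, completing the proof.
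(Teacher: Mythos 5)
Your proposal follows essentially the same route as the paper: a lifted product of two Tanner codes over a non-abelian group, with the rate obtained by dimension counting on the three-term complex and the linear distance obtained by combining a product-expansion property with the expansion of the underlying Cayley graph (the paper packages this as Proposition~\ref{prop:main}, proved via locally minimal chains and shared with the proof of Theorem~\ref{th:LTC}). The only imprecision is that in the paper product-expansion is a property of the pair of constant-size local codes $(\ker h,\im h'^{*})$ over $\F_q$ rather than of the lifted matrices over $\F_q G$; the global linear bound then comes from marrying this local property to the spectral expansion of the graph, exactly as you indicate.
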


\begin{remark}
  In the case of classical codes from Theorem~\ref{th:LTC}, it is relatively easy to show that an~algorithm, similar to the bit-flipping algorithm, corrects in linear time any error of weight up to the constant fraction of the code length~$n$. As for the quantum codes from Theorem~\ref{th:qLDPC}, we conjecture that it is also possible with a~variant of the small-set-flip decoding algorithm from~\cite{Leverrier:2015} (see also~\cite{Evra:2020}). 
\end{remark}

The codes from the above two theorems are obtained using the recently introduced  lifted product construction~\cite{Panteleev&Kalachev:2021}, which can be seen as a~generalization of the (tensor) product construction for classical codes~\cite{MacWilliams:1977, Ben-Sasson:2006} and the hypergraph product construction for quantum codes~\cite{Tillich&Zemor:2014}. This product operation is a~special case of the balanced product  from~\cite{Breuckmann:balanced:2021} and best understood in terms of homological algebra\footnote{In this text, we assume that the reader is familiar with the standard notions of homological algebra such as a~(co)chain complex and the corresponding  (co)homology groups. See Appendix~\ref{sc:chain} for the relevant definitions and~\cite{Brown:1982} for a~short introduction into this subject.}. Let us briefly recall that a~chain complex is a~sequence 
\[\cdots\xrightarrow{\partial_{i+1}} \cC_{i} \xrightarrow{\partial_{i}} \cC_{i-1} \xrightarrow{\partial_{i-1}}\cdots\]
of abelian groups and morphisms called \emph{boundary maps} such that $\partial_{i}\circ \partial_{i+1} = 0$ for all $i\in\Z$. The term $\cC_i$ in a~complex~$\cC$ is called the group of \emph{$i$-chains} and the assertion $\partial_{i}\circ \partial_{i+1} = 0$ is equivalent to $\im \partial_{i+1} \subseteq \ker \partial_i$, which allows us to consider for every $i\in\Z$ the quotient group $H_i(\cC) = \ker \partial_i / \im \partial_{i+1}$ called the~\emph{$i$-th homology group} of the complex~$\cC$. The abelian groups in a~complex often come with some additional algebraic structure that makes them vector spaces over a~field~$\F$ or modules over a~ring~$R$, in which case it is further assumed that all boundary maps are linear maps. In the context of error correcting codes, we are interested in the complexes with $\tau$ non-zero terms (\emph{$\tau$-term} complexes) where each term~$\cC_i$ can be naturally identified with~$\F_q^{n_i}$ and interpreted as a~space of $n_i$ symbols over~$\F_q$ (code symbols or parity-checks of the code). In such cases, it is natural to represent an~$\tau$-term complex $\cC$ by the corresponding $\tau$-partite graph called its \emph{Tanner graph}, where the edges connect only the parts corresponding to adjacent terms $\cC_i$, $\cC_{i-1}$ and the connection is governed by $\partial_i\in \F_q^{n_{i-1}\times n_{i}}$ considered as a~biadjacency matrix if we replace each non-zero entry by 1.

Given two classical linear codes invariant under a~free action of a~group~$G$ on their index sets\footnote{A~classical linear code $\cC\subseteq \F_q^n$ is \emph{invariant} under an~action of a~group $G$ on the index set $[n]$ if for every $g\in G$ and $(c_i)_{i\in[n]}\in\cC$ it follows that $(c_{\pi_g(i)})_{i\in[n]}\in\cC$, where $\pi_g$ is the permutation corresponding to the action of $g$ on~$[n]$. If the action of $G$ is free then each orbit has $|G|$ elements, and $\cC$ can be considered as a~subspace of $R^s$, where $R = \F_qG$ is the group algebra over $\F_q$ for $G$, and $s := n/|G|$. If $G$ is a~cyclic group then such codes correspond to the class of \emph{quasi-cyclic} codes, which contains classical \emph{cyclic} codes as a~special case when $s=1$.}, we can represent them by $2$-term chain complexes $\cA\colon R^{n_a} \xrightarrow{A} R^{m_a}$ and $\cB\colon R^{n_b} \xrightarrow{B} R^{m_b}$ over the~group algebra $R=\F_qG$, where $A\in R^{m_a\times n_a}$, $B\in R^{m_b\times n_b}$ are the corresponding parity-check matrices\footnote{If $G$ is non-abelian then when we multiply a~vector over $R=\F_q G$ by the matrix $A$ (resp. $B$), we assume that we multiply by the elements from $R$ from the right (resp. from the left). See Appendix~\ref{sc:lp-mat} for more details on the definition of the lifted product in terms of the parity-check matrices.}. The \emph{lifted product over $R$} is defined as the tensor product complex\footnote{The general definition of the \emph{tensor product complex $\cA \otimes_R \cB$} over an~arbitrary ring~$R$ can be found in~\cite[p.~7]{Brown:1982}.} $\cC = \cA \otimes_R \cB$ over the ring $R$, i.e., the $3$-term complex 
\[
R^{n_a n_b} \xrightarrow{\partial_2} R^{n_a m_b} \oplus R^{m_a n_b} \xrightarrow{\partial_1} R^{m_a m_b}
\]
with the boundary map $\partial\colon \cC \to \cC$ given by the following diagram
\[\begin{tikzcd}
	{R^{n_a m_b}} & {R^{m_a m_b}} \\
	{R^{n_a n_b}} & {R^{m_a n_b}}
	\arrow["{A\otimes_R \id}"', from=2-1, to=2-2]
	\arrow["{-\id \otimes_R B}", from=2-1, to=1-1]
	\arrow["{A\otimes_R \id}", from=1-1, to=1-2]
	\arrow["{\id \otimes_R B}"', from=2-2, to=1-2]
\end{tikzcd},
\]
which means that $\partial_2 := \left[\begin{smallmatrix} A\otimes_R \id \\ -\id \otimes_R B  \end{smallmatrix}\right]$, $\partial_1 := [A\otimes_R \id,  \id \otimes_R B]$.
One can easily check that $\partial_1\circ\partial_2 = A \otimes_R B - A \otimes_R B = 0$, and $\cC$ is indeed a chain complex.
Now we can consider the classical code $\ker \partial_2$ with the parity-check matrix $\partial_2$ and the quantum CSS code $\cQ(\partial_1,\partial_2^*)$ where  $\CX := \ker \partial_1$ and $\CZ := \ker \partial_2^*$. We can naturally identify these codes with the second homology group $H_2(\cC)$ and the first homology group $H_1(\cC)$ of the complex~$\cC$, and we use them to obtain the classical codes from~Theorem~\ref{th:LTC} and the quantum ones from~Theorem~\ref{th:qLDPC}, respectively. Note that when~$G$ is a trivial group, i.e., $|G| = 1$, then $R \cong \F_q$, and one can see that $\ker \partial_2$ and $\cQ(\partial_1,\partial_2^*)$ are respectively the tensor product and the hypergraph product of the two classical codes $\ker A$ and $\ker B$. Hence the lifted product complex $\cA \otimes_R \cB$, which we also sometimes denote by $\LP(A,B)$, can be seen as a~generalization of these two constructions, where instead of individual symbols from $\F_q$ we have blocks of $|G|$ symbols represented by elements from~$\F_q G \cong \F_q^{|G|}$. In fact, lifted product can also be used with arbitrary finite-dimensional associative algebra $R$ over $\F_q$, not necessary equal to $\F_q G$.
In~the current paper, if $R = \F_q G$ we call this operation \emph{lifted product over~$G$} or \emph{$G$-lifted product} and denote the corresponding lifted product complex by $\cA \otimes_G \cB$.

The idea of the lifted product was used recently in~\cite{Panteleev&Kalachev:2021} to obtain the~first family of qLDPC codes with almost linear distance. In the follow-up paper~\cite{Breuckmann:balanced:2021}, where some of the ideas from~\cite{Panteleev&Kalachev:2021} were developed independently, a~very similar construction called \emph{balanced product} was used to get qLDPC codes of very large distances\footnote{Note that the codes from~\cite{Panteleev&Kalachev:2021} are CSS codes, while the codes from~\cite{Breuckmann:balanced:2021} are in general from a~wider class of quantum codes called \emph{subsystem codes}.}. As in the case of lifted product, the balanced product $\cA \otimes_G \cB$ of two chain complexes $\cA$ and~$\cB$ can also be considered as the tensor product complex $\cA \otimes_R \cB$ over the the group algebra $R = \F_q G$, but this time $\cA$ and $\cB$ are arbitrary (i.e., not necessary free) $R$-modules. 
As it was shown in~\cite{Breuckmann:balanced:2021}, the $G$-lifted product and the balanced product can both be viewed as instances of even more general topological idea called a~\emph{fiber bundle}, proposed as a~way to construct qLDPC codes in the breakthrough paper~\cite{Hastings:2021:fiber}, which first broke the  $n^{1/2} \mathrm{polylog}(n)$ barrier on the distance of qLDPC codes. It is also interesting to note that the codes that were actually used to get the main results in~\cite{Hastings:2021:fiber, Panteleev&Kalachev:2021, Breuckmann:balanced:2021} are equivalent to $\LP(A,b)$ where $A$ is a~sparse matrix over $R=\F_2\mathbf{C}_\ell$, and $b\in R$, where $\mathbf{C}_\ell$ is the cyclic group of order~$\ell$. This more restricted class of lifted product codes were previously studied in~\cite{Panteleev&Kalachev:2019} under the name GHP codes and shown to have surprisingly good error-correcting performance under the BP-OSD decoder.

A~very important ingredient of the constructions from~\cite{Panteleev&Kalachev:2021, Breuckmann:balanced:2021} is expander codes~\cite{Sipser:1996}, which are the Tanner codes~\cite{Tanner:1981} obtained from spectral expander graphs. The individual symbols of the expander code $\cT(\Gamma; h)$ are assigned to the edges of the corresponding  graph~$\Gamma$, and we get a~codeword precisely when for each vertex $v$ from~$\Gamma$ the symbols assigned to the edges connected to~$v$ form a~codeword of the \emph{local code} $\ker h$. In~\cite{Panteleev&Kalachev:2021} expander graphs $\Gamma$ are obtained as \emph{$G$-lifts} (i.e., regular $\abs{G}$-fold covers) of some small base graphs, where $G$ is a~very large group\footnote{In~\cite{Panteleev&Kalachev:2021} this general idea was applied to cyclic groups to obtain the main result.}. It is not hard to see that the obtained in this way expander codes are invariant under the free action of $G$, and thus they are free modules over the group algebra~$\F_q G$. Therefore such codes can be used with the $G$-lifted product to obtain a~$3$-term chain complex~$\cC$, which can also be considered as a~quantum CSS code. 

It is shown in~\cite[Example~3]{Panteleev&Kalachev:2021} that using a~$G$-lifted product of two classical codes  it is possible to obtain qLDPC codes of constant rate\footnote{A~similar observation about balanced products is also made (without a~proof) in~\cite{Breuckmann:balanced:2021}.}. In particular, if $\rho := 1 - m/n$ is the design rate of a~classical code $\ker A$ represented by the~complex $\cA := R^n \xrightarrow{A} R^m$, then the rate of the corresponding quantum code represented by $\cA \otimes_G \cA^*$ is at least $\frac{(n-m)^2}{n^2 + m^2} = \frac{\rho^2}{1 + (1-\rho)^2}$. Here $\cA^* := R^m \xrightarrow{A^*} R^n$ is the \emph{dual chain complex} for $\cA$, i.e., $A^*$ is the transpose of the parity-check matrix $A$, considered as a~matrix over $\F_q$.   Hence the rate of the quantum codes obtained from $\cA\otimes_G\cA^*$ can be arbitrary close to $1$ as $\rho\to 1$.
Moreover, some particular examples of such codes~\cite[Example~4]{Panteleev&Kalachev:2021}, indicate that they may also have very large minimal distances, close to the distances of the classical codes $\ker A$ used in the lifted product. However, if the group $G$ is abelian, then the upper bound on the minimum distance of such classical codes~\cite[Eq.~24]{Panteleev&Kalachev:2021} provides strong evidence that to obtain an~asymptotically good family of qLDPC codes by a~$G$-lifted product one has to use non-abelian groups. 

One particular construction of balanced product codes, analogous to the aforementioned $G$-lifted product $\cA \otimes_G \cA^*$, for non-abelian group $G$, was conjectured in~\cite{Breuckmann:balanced:2021} to give an~asymptotically good family of qLDPC codes. Unfortunately, our proof strategy does not work for complexes $\cA \otimes_G \cA^*$, and we can not prove this conjecture with the methods developed here. Instead, we consider similar complexes $\cA \otimes_G \cB^*$, where $\cA$ and $\cB$ are respectively the expander codes $\cT(\Gamma; h)$ and $\cT(\Gamma; h')$ defined for the \emph{same} expander graph $\Gamma$ but for \emph{different} local codes $\ker h$ and $\ker h'$. It is very simple to show by counting the number of the code symbols and parity-checks in the LTC and the qLDPC code obtained from~$\cA \otimes_G \cB^*$ that these codes have constant rate.  However, for our proof of Theorems~\ref{th:LTC} and~\ref{th:qLDPC} to work, the pair of local codes used in~$\cA \otimes_G \cB^*$ can not be arbitrary and should satisfy some special property we call \emph{product-expansion}, which is similar to the \emph{robust testability} property often used in the context of LTCs~\cite{Ben-Sasson:2006,Dinur:2006}. We prove that a~pair of random linear codes has the product-expansion property with high probability. Informally speaking, the product-expansion of the pair of local codes corresponds to the \emph{local expansion} in the complex $\cA \otimes_G \cB^*$, but to get the main result we also need the~\emph{global expansion} property of the graph~$\Gamma$, which connects the~local codes attached to its vertices.
Our main technical result (Proposition~\ref{prop:main}) shows that the general construction $\cA \otimes_G \cB^*$ can be used with arbitrary regular graphs $\Gamma$ obtained as $G$-lifts if they are sufficiently good \emph{small set expanders}\footnote{Informally this means that every sufficiently small set of vertices has a lot of outgoing edges. See Subsection~\ref{sc:graph-exp} for the relevant definitions and results.}.  We prove that spectral expander graphs and their finite covers are good small set expanders. Hence we can let the graph $\Gamma$ to be the bipartite double-cover of a~Cayley graph for some finite group~$G$. This is important for the $G$-lifted product construction since such graphs $\Gamma$ can be also represented as $G$-lifts\footnote{Note that in most cases a~Cayley graph with $w$ generators can also be viewed as a~$G$-lift of the $w$-bouquet graph~$B_w$. However, this is not true if we have an~order~$2$ generator.}. In particular, we use the  Ramanujan Cayley graphs~\cite{Lubotzky:1988,Margulis:1988}, which were also used in the original construction of the expander codes~\cite{Sipser:1996} and in the mentioned earlier conjecture from~\cite{Breuckmann:balanced:2021}.

The~main technical tool in our proof of Theorems~\ref{th:LTC} and~\ref{th:qLDPC} is the notion of a~\emph{locally minimal} (co)chain, often used in the context of high-dimensional expanders to show expansion properties in simplicial complexes~\cite{Kaufman:2014}. It is known that such expansion properties can be used to show local testability of a~classical code~\cite{Kaufman:2014a} and to give a~lower bound on the minimum distance of a~quantum code~\cite{Evra:2020}. In the current work, we~extend these ideas to a~much more general context of (co)chain complexes with local system of coefficients, which can be considered as high-dimensional analogs of the Tanner codes, similar to the ones studied in~\cite{Meshulam:2018}. Instead of graphs such generalized Tanner codes are defined on high-dimensional complexes. Since the $G$-lifted product is defined for arbitrary complexes, it can naturally be applied to graphs, viewed as $1$-dimensional complexes. If we consider graphs $\Gamma$ and $\Gamma'$ as topological spaces, their $G$-lifted product (as a~topological space) can be viewed as the balanced product $\Gamma \times_G \Gamma'$ of these spaces~\cite{Breuckmann:balanced:2021}. In fact, it can be shown that the products $\Gamma \times_G \Gamma'$ are examples of a~well-known class of $2$-dimensional complexes called \emph{complete square complexes}~\cite{Wise:2007}. The defining property of a~complete square complex is that the links of all its vertices are isomorphic to a~\emph{complete} bipartite graph. Since complete bipartite graphs are perfect expanders, then, in some sense, this property is analogous to the property of other high-dimensional expanders to have links that are good expanders~\cite{Kaufman:2014}.

Using the discussed above $G$-lifted products of expander codes over non-abelian groups $G$ we show that it is possible to obtain qLDPC codes with the parameters as in Theorem~\ref{th:qLDPC}. This gives a~positive answer to the questions posed in~\cite[Conclusion]{Panteleev&Kalachev:2021} and in~\cite[Conjecture]{Breuckmann:balanced:2021} of whether respectively lifted and balanced products of classical codes can give an~asymptotically good family of qLDPC codes.
Moreover, we also show that, under some additional assumptions, if $H_X$ and $H_Z$ are the parity-check matrices of such qLDPC codes, then the classical code $\ker H_Z^\T$ is locally testable with the parameters as in Theorem~\ref{th:LTC}. 

\begin{remark*}
  After the first draft of this manuscript was published we became aware that a~result similar to our Theorem~\ref{th:LTC} for the case of binary field $\F_2$ was independently claimed in~\cite{Dinur:2021}. The $3$-term complex used in~\cite{Dinur:2021} to get the main result is equivalent to the balanced product over~$G$ of the expander codes~\cite{Breuckmann:balanced:2021,Breuckmann:2021}, defined on \emph{two different} Cayley graphs for the \emph{same} group~$G$. It is interesting to note that this construction is similar to the lifted product construction we consider in Remark~\ref{rm:LTC-constr}, where instead of the product $\cA \otimes_G \cB^*$ we propose to use the product $\cA \otimes_G \cB$ and conjecture that this way it is still possible to get asymptotically good LTCs. The diagrams for $\cA \otimes_G \cB$ and $\cA \otimes_G \cB^*$ are shown below 
  \[
  \cA \otimes_G \cB := \begin{tikzcd}
	{R^{n_a m_b}} & {R^{m_a m_b}} \\
	{R^{n_a n_b}} & {R^{m_a n_b}}
	\arrow["{A\otimes_R \id}"', from=2-1, to=2-2]
	\arrow["{-\id \otimes_R B}", from=2-1, to=1-1]
	\arrow["{A\otimes_R \id}", from=1-1, to=1-2]
	\arrow["{\id \otimes_R B}"', from=2-2, to=1-2]
\end{tikzcd},\quad
  \cA \otimes_G \cB^* :=
  \begin{tikzcd}
	{R^{n_a m_b}} & {R^{m_a m_b}} \\
	{R^{n_a n_b}} & {R^{m_a n_b}}
	\arrow["{A\otimes_R \id}"', from=2-1, to=2-2]
	\arrow["{-\id \otimes_R B^*}"', from=1-1, to=2-1]
	\arrow["{A\otimes_R \id}", from=1-1, to=1-2]
	\arrow["{\id \otimes_R B^*}", from=1-2, to=2-2]
\end{tikzcd}.
\]
  In fact, the Tanner graphs of the complexes $\cA \otimes_G \cB$ and $\cA \otimes_G \cB^*$ are isomorphic. What is different is the interpretation of the Tanner graph vertices as \emph{code symbols} and \emph{parity-checks} when we make a~code out of the complex.
  In some sense, the product $\cA \otimes_G \cB$ is better suited for LTCs since it gives classical codes of rate arbitrary close to~$1$ (please, see~Remark~\ref{rm:LTC-constr}). Hence it is an interesting open question whether the approach used in~\cite{Dinur:2021} can also succeed on our codes from Remark~\ref{rm:LTC-constr}.  At the same time, the construction $\cA \otimes_G \cB^*$, which we use to prove the main results, is much better suited for qLDPC codes since it symmetric. This symmetry allows us to prove the lower bound on the $Z$-distance of our qLDPC code in the same way as for the $X$-distance. Besides, we can get equal number of $X$-checks and $Z$-checks, which gives qLDPC codes of rates arbitrary close to~$1$.
\end{remark*}

\section{Preliminaries}

\subsection{Chain complexes}
In recent years, ideas from homological algebra found many interesting applications in the field of classical and quantum codes\cite{Kaufman:2014a, Bravyi:HMP:2014,Pryadko:2019}. A~common approach is to consider some based\footnote{The term \emph{based} means that the vector spaces of a (co)chain complex come with some distinguished bases.  If in a~vector space $V$ we fix a~basis $\tilde{V}\subseteq V$, we can identify $V$ and its dual space $V^*$ with the coordinate space $\F_q^{\dim V}$ in the standard way. This also allows us to identify linear maps between such spaces with the corresponding matrices.} (co)chain complex of finite-dimensional vector spaces over a~finite field~$\F_q$, and use it to define a~code with the desired parameters. 
For example, a~$2$-term chain complex
\[ 
\mathbb{F}_q^{n}
   \xrightarrow{\partial_1} \mathbb{F}_q^m
\]
can be identified with the classical linear code $\ker \partial_1$ defined by the \emph{parity-check matrix} $H:= \partial_1$. Here, the space $\F_q^n$ of 1-chains corresponds to the $n$~bits, while the space $\F_q^m$ of $0$-chains to the $m$~checks. At the same time, a~$3$-term chain complex
\[ 
   \cC := \rbr{\mathbb{F}_q^{m_Z}\xrightarrow{\partial_2} \mathbb{F}_q^n \xrightarrow{\partial_1} \mathbb{F}_q^{m_X}}
\]
can be identified with the quantum CSS $\llbracket n, k, d \rrbracket_q$ code $\cQ = \cQ(H_X,H_Z)$ defined by the parity-check matrices $H_X := \partial_1$ and $H_Z := \partial_2^*$, where $\partial_2^*\colon \F_q^n\to \F_q^{m_Z}$ is the transpose of the map $\partial_2\colon \F_q^{m_Z}\to\F_q^n$. 
In this case, the space $\F_q^n$ of 1-cells corresponds to the $n$ qubits, and the space $\F_q^{m_X}$ of 0-cells (resp. the space $\F_q^{m_Z}$ of 2-cells) to the \emph{$X$-checks} (resp. \emph{$Z$-checks}). The length of $\cQ$ is equal to $n=\dim \F_q^n$, while its dimension~$k$ is equal to the dimension of the first homology group $H_1(\cC) := \ker \partial_1 / \im  \partial_2 = \cC_X / \cC_Z^\perp$, where $\cC_X := \ker \partial_1$ and $\cC_Z := \ker\partial_2^*$.  The minimum distance $d = d(\cQ)$ can also be described in the language of homology groups if we consider the quotient vector space $H_1(\cC)$ as a~metric space, where the distance $d(A,B)$ between homology classes $A,B\in H_1(\cC)$ is defined as $d(A,B) := |A - B|$ using the corresponding quotient Hamming norm $|A| := \min_{a\in A} |a|$. It is easy to see that $d = \min(d(H_1(\cC), d(H_1(\cC^*))$, where
\[ 
   \cC^* := \rbr{\mathbb{F}_q^{m_X}\xrightarrow{\partial_1^*} \mathbb{F}_q^n \xrightarrow{\partial_2^*} \mathbb{F}_q^{m_Z}}
\]
is the \emph{dual chain complex} for $\cC$. The distances $d(H_1(\cC))$ and $d(H_1(\cC^*))$ are sometimes called the \emph{$1$-systolic} and \emph{$1$-cosystolic distances} of $\cC$.

\subsection{Lifted product}
In this work, we consider several new families of classical and quantum LDPC codes of constant rate based on the introduced recently lifted product construction~\cite{Panteleev&Kalachev:2021}, which generalizes many known constructions of quantum LDPC codes~\cite{qldpc,Hagiwara:2007,Tillich&Zemor:2014,Haah:2011,Kovalev:2013, Pryadko:2019}. This construction can be defined in terms of parity-check matrices (see Appendix~\ref{sc:lp-mat}) and in the abstract language of homological algebra, which we prefer in the current work. Before we proceed, let us briefly remind some standard definitions from algebra. Consider some ring~$R$. A~left~$R$-module~$M$ is called {\tmem{free}} if there exists a~set of elements $\{m_1, \ldots, m_r \}
\subseteq M$ called {\tmem{basis}} such that every $m \in M$ is uniquely
represented as:
\[ m = a_1 m_1 + \ldots + a_r m_r, \]
where $a_1, \ldots, a_r \in R$, and the parameter~$r$ is called the \emph{rank}\footnote{Note that there are some infinite non-commutative rings $R$ such that $R^m\cong R^n$ when $m\ne n$. However, all the rings we consider here are either finite or commutative, and hence have the \emph{invariant basis number} (IBN) property that implies that this never happens.} of $M$. Hence $M \cong R^r$, and if the ring $R$ is
a~field, then $M$ is just an~$r$-dimensional vector space over~$R$.
A~canonical example of a~free $R$-module of rank $r$ is the~module $RS$ of
formal $R$-linear combinations of the~elements of some set~$S$, where $\abs{S}
= r$. One can also define free right $R$-modules in a~similar way.

\begin{definition}
Suppose we have a~finite-dimensional associative algebra $R$ over $\F_q$ with some fixed basis $\tilde{R}\subseteq R$. Consider two chain complexes $\cA=\bigoplus_{i=0}^m \cA_i$ and $\cB=\bigoplus_{j=0}^n\cB_j$ over $\F_q$ such that the vector spaces $\cA_i$ and $\cB_j$ are also free $R$-modules with some distinguished bases (over~$R$) $\tilde{\cA}_R \subseteq\cA$ and $\tilde{\cB}_R \subseteq\cB$, and the boundary maps $\partial_{\cA}\colon \cA\to\cA$,  $\partial_{\cB}\colon \cB\to\cB$ are $R$-linear. If the algebra $R$ is not commutative, then we further assume that $R$ acts from the right on $\cA$ and from the left on $\cB$, i.e., $\cA$ is a~right free $R$-module, and $\cB$~is a~left free $R$-module.  The \emph{lifted product} of $\cA$ and $\cB$ \emph{over $R$} is their  \emph{tensor product complex} $\cA\otimes_R\cB$ (see~\cite[p.~7]{Brown:1982}), where for $k=0,1,\dots,m+n$ the space of $k$-chains $(\cA\otimes_R\cB)_k$ is equal to $\bigoplus_{i+j = k} \cA_i\otimes_R\cB_j$, while the  boundary map~$\partial \colon \cA\otimes_R\cB\to \cA\otimes_R\cB$ is defined for $a\in\cA_i$, $b\in\cB_j$ as\footnote{We should note that the sign $(-1)^i$ in this definition is only relevant in the case of finite fields of odd characteristic.} 
\begin{equation}\label{eq:lp-boundary}
\partial (a \otimes_R b) := \partial_{\cA} a \otimes_R b + (-1)^{i} a \otimes_R \partial_{\cB} b,
\end{equation}
and extended by linearity. 
Furthermore, we always assume that the lifted product $\cC = \cA \otimes_R \cB$ is a~\emph{based} chain complex of vector spaces over $\F_q$. By definition its distinguished basis (over~$\F_q$) is given by
\[
\tilde{\cC} := \{a \cdot r \cdot b \mid a\in \tilde{\cA}_R, b\in\tilde{\cB}_R, r\in\tilde{R}\},
\]
where we used a~short-hand notation:
\begin{equation}\label{eq:short-hand}
a \cdot r \cdot b := ar \otimes_R b = a \otimes_R rb.
\end{equation}
From the properties of the tensor product $\otimes_R$ it follows that the map $(a,r,b)\mapsto a \cdot r \cdot b$ is $\F_q$-multilinear, which means that for every $a,a'\in \cA$, $b,b'\in \cB$, and $r,r'\in R$ we have:
\begin{eqnarray*}
  (a + a') \cdot r \cdot b &=& a \cdot r \cdot b + a' \cdot r \cdot b,   \\
  a \cdot (r + r') \cdot b &=& a \cdot r \cdot b + a \cdot r' \cdot b,   \\
  a \cdot r \cdot (b + b') &=& a \cdot r \cdot b + a \cdot r \cdot b',   
\end{eqnarray*}
and for every $\lambda\in \F_q$  we get:
\begin{equation*}
    (\lambda a) \cdot r \cdot b = a \cdot (\lambda r) \cdot b = a\cdot r \cdot (\lambda b) = \lambda (a\cdot r\cdot b). 
\end{equation*}

\end{definition}

We should note that if $R=\F_q$, then the lifted product is equivalent to the product construction from~\cite{Pryadko:2019}, while if, in addition, we have $m=n=1$, then it is the same as the hypergraph product~\cite{Tillich&Zemor:2014}. Moreover, if $m=n=1$ and $R=\F_2[x]/(x^\ell - 1)$, it is essentially equivalent to the hyperbicycle codes construction~\cite{Kovalev:2013}. It is also important to note that when $m=n=1$ the complexes 
\[
\cA:=\left(\cA_1\xrightarrow{A} \cA_0\right)\text{ and }\cB:=\left(\cB_1\xrightarrow{B} \cB_0\right)
\]
are uniquely defined by the corresponding matrices $A$, $B$ over $R$. In this case, we denote the lifted product $\cA \otimes_R \cB$  as $\LP(A,B)$ and usually identify it with the corresponding CSS code. Note that this code also has a~concise description in terms of the parity-check matrices $H_X$ and $H_Z$ (see~\cite[Eq.~12]{Panteleev&Kalachev:2021} and Eq.~\ref{eq:LP} from Appendix~\ref{sc:lp-mat}).

Though the lifted product can be defined over an~arbitrary finite-dimensional associative algebra~$R$,  the most interesting case~\cite{Panteleev&Kalachev:2021,Breuckmann:balanced:2021} is when $R$ is the group algebra $\F_qG$ for some finite group~$G$. 
The~elements of $\F_q G$ are formal sums $\sum_{g \in G} \alpha_g
g$, where $\alpha_g \in \F_q$. Consider elements $a = \sum_{g \in G} \alpha_g
g$ and $b = \sum_{g \in G} \beta_g g$ from $\F_q G$. Their sum $a + b$ and
product $ab$ are defined as follows:
\[ a + b := \sum_{g \in G} (\alpha_g + \beta_g) g, \quad ab := \sum_{g \in G}
   \left( \sum_{hr = g} \alpha_h \beta_r \right) g. \]
In this case, the condition that the vector spaces $\cA$ and $\cB$ over $\F_q$ are free $\F_q G$-modules is equivalent to the condition that the group~$G$ has a~free action\footnote{A~\emph{left} (resp. \emph{right}) action of a~group~$G$ on a~set~$S$ is called \emph{free} if for every $g\in G$ when we have $gs = s$ (resp. $sg = s$) for some $s\in S$, then $g$ is the identity element of $G$. Note that the sizes of all orbits of a~free action are the same and equal to $\abs{G}$.} on the their bases over~$\F_q$ (from the right for $\cA$ and from the left for $\cB$), which is extended by linearity to $\cA$ and $\cB$. Moreover, the boundary map~$\partial$ is $\F_q G$-linear \Iff it is an~$\F_q$-linear map that commutes with the action of the group $G$. Therefore in~what follows, in tensor products over $R=\F_qG$ instead of $\otimes_R$ we write $\otimes_G$, and assume that $\tilde{R} := G$. Let $\tilde{\cA}_G = \bigsqcup_{i\in\Z} \tilde{\cA}_{G,i}$ and $\tilde{\cB}_G = \bigsqcup_{j\in\Z} \tilde{\cB}_{G,j}$ be respectively the distinguished bases (over $\F_q G$) of the the free $\F_qG$-modules $\cA = \bigoplus_{i\in\Z} \cA_i$ and $\cB = \bigoplus_{j\in\Z} \cB_j$. It is clear that the elements $a g$ (resp. $g b$), where $a\in\tilde{\cA}_{G,i}$, $g\in G$, $b\in\tilde{\cB}_{G, j}$, constitute the basis for $\cA_i$ (resp. $\cB_j$), considered as a~vector space over~$\F_q$.  Moreover, we see, using short-hand notation~(\ref{eq:short-hand}), that the distinguished basis of $\cA \otimes_G \cB$ over $\F_q$ consists of the elements $a \cdot g \cdot b$, where $a\in\tilde{\cA}_{G,i} $, $g\in G$, $b\in \tilde{\cB}_{G,j}$; $i,j\in\Z$. Furthermore, we can express the boundary operator given in equation~(\ref{eq:lp-boundary}) as follows:
\begin{equation}\label{eq:lp-boundary2}
\partial (a \cdot g \cdot b) := (\partial_{\cA} a) \cdot g \cdot b + (-1)^{i} a \cdot g \cdot (\partial_{\cB}b).
\end{equation}

We can also express the boundary operator $\partial$ as 
\[
\partial := \partial_{\cA} \otimes_G \id  + \id \otimes_G \partial_{\cB} 
\]   
if, by definition, assume that $(\partial_{\cA} \otimes_G \id) (a \cdot g \cdot b) := (\partial_{\cA} a) \cdot g \cdot b$ and $(\id \otimes_G \partial_{\cB}) (a \cdot g \cdot b) := (-1)^{i} a \cdot g \cdot (\partial_{\cB}b)$.

\begin{remark}\label{rm:left-right}
  For any chain complex $\cC$ we can consider its \emph{dual chain complex} $\cC^*$ obtained from $\cC$ if we replace the boundary map $\partial$ of $\cC$ by its transpose map $\partial^*$ (see Appendix~\ref{sc:chain}). It is not hard to see that if $\cC$ is a~left (resp. right) $G$-module, then $\cC^*$ is a~right (resp. left) $G$-module. Therefore if chain complexes $\cA$ and $\cB$ are right $G$-modules, we can consider the $G$-lifted product $\cA \otimes_G \cB^*$. In~fact, for any set $S$ with a~left action $(g, s)\mapsto g\cdot s$ (resp. a~right action $(s, g)\mapsto s\cdot g$) of a~group $G$ we can also consider the corresponding right (resp. left) action of $G$ defined as $(s, g)\mapsto g^{-1} \cdot s$ (resp. $(g, s)\mapsto s\cdot g^{-1}$). Therefore if a~group $G$ has a~right free action on a~chain complex $\cC$, then it also has the corresponding left free action on $\cC$, and vice versa. This allows us to apply $G$-lifted product $\cA\otimes_G \cB$ to two right $G$-modules $\cA$ and $\cB$, if we use the corresponding left action of $G$ on $\cB$. 
\end{remark}
\begin{remark}
Let us note that $G$-lifted product is a~special case of balanced product from~\cite{Breuckmann:balanced:2021}, where a~non-free action of the group~$G$ is also allowed. We should also emphasize that the first examples of the lifted products over $R=\F_2G$ for a~\emph{non-abelian} group~$G$ were also considered in~\cite{Breuckmann:balanced:2021}, while in~\cite{Panteleev&Kalachev:2021} all the examples were only for the abelian case. In the current work, we also give new examples of non-abelian lifted products  based on the~double-cover of a~Cayley graph, which are similar, though not equivalent, to the horizontal subsystem codes mentioned in the Conjecture from~\cite{Breuckmann:2021}.     
Generally speaking, the term \emph{$G$-lifted product}, used in the current work, may seem redundant since it is just a~special case of the balanced product. However, we think that this special case deserves its own name since the free action of $G$ implies that the obtained complex has a~much more regular structure than in the general case. In~some sense, the relation of the $G$-lifted product to the more general balanced product is similar to the relation of Cayley graphs to Schreier graphs. While the latter are more general, the former are usually much easier to describe and study.   
\end{remark}

\subsection{Expander graphs and lifts}

To produce linear maps $\phi\colon\F_q^n\to\F_q^m$ with good expansion and coexpansion properties it was proposed  in~\cite{Panteleev&Kalachev:2021, Breuckmann:balanced:2021} to use expander codes~\cite{Sipser:1996}, i.e., the Tanner codes~\cite{Tanner:1981} defined on some spectral expander graph. Before we move on, let us recall some standard definitions related to expander graphs and Tanner codes.

Let $\Gamma$ be a~graph\footnote{It may have loops and multiple edges.} with the set of vertices $V(\Gamma)$ and the set of edges $E(\Gamma)$. If vertices $v, v'\in V(\Gamma)$ are connected by an edge $e\in E(\Gamma)$, we call $v, v'$ \emph{adjacent} and denote this fact by $v \leftrightarrow v'$ or by $v\adj_e v'$ when we want to emphasize the edge~$e$. A graph~$\Gamma$ is called \emph{$d$-regular} if all its vertices have degree~$d$.  The \emph{adjacency matrix} of a~graph~$\Gamma$ with $V(\Gamma)=\{v_1,\dots,v_n\}$ is the matrix $A(\Gamma) = (a_{i j})_{n\times n}$, where $a_{i j}$ is the number of edges $e\in E(\Gamma)$ such that $v_i \adj_e v_j$.
Since $A(\Gamma)$ is a symmetric matrix, it has $n$ real-valued eigenvalues $\lambda_1 \ge \dots \ge \lambda_n$. Let~$\lambda_2(\Gamma) := \lambda_2$, and $\lambda(\Gamma) := \max(\abs{\lambda_2}, \abs{\lambda_n})$. It is obvious that $\lambda_2(\Gamma)\le \lambda(\Gamma)$. We call an~$n$-vertex~$d$-regular graph $\Gamma$ an~\emph{$(n, d, \lambda)$-expander} if $\lambda(\Gamma) \le \lambda$. The term expander here means that the graph~$\Gamma$ has a~very good connectivity, which can be quantified by its Cheeger constant.  Consider a~subset of vertices $S\subseteq V(\Gamma)$ in the graph~$\Gamma$. We call the set 
\begin{align*}
    \partial S &:= \{e \in E(\Gamma) \mid v \adj_{e}\! v'\!, v\in S, v'\notin S\}
\end{align*}
the~\emph{edge boundary}, which is the set of all edges that go outside of $S$. The \emph{Cheeger constant $h(\Gamma)$} of the graph~$\Gamma$ is defined as follows: 
\[
h(\Gamma):= \min_{\substack{
0 < \abs{S} \le \frac12\abs{V(\Gamma)} \\
S \subseteq V(\Gamma)}} 
\frac{\abs{\partial S}}{\abs{S}}.
\]
Since for $d$-regular graphs it is known~\cite[Theorem~4.11]{Hoory:2006} that $h(\Gamma)\ge \frac12(d - \lambda_2(\Gamma))$, then the smaller the value of $\lambda_2(\Gamma)$, the higher the Cheeger constant. However, the Alon-Boppana bound~\cite[Theorem~5.3]{Hoory:2006} implies that for $d$-regular graphs with $n$ vertices we have $\lambda_2(\Gamma) \ge 2\sqrt{d-1} - o_n(1)$ as $n\to\infty$. 
There are a~number of different constructions that almost attain this lower bound. In fact, it was shown in~\cite{Friedman:2003} that for any fixed $\eps > 0$, a~random $d$-regular graph with $n$ vertices has $\lambda_2(\Gamma) < 2\sqrt{d-1} + \eps$ with high probability as $n\to\infty$. A~$d$-regular graph $\Gamma$ that satisfy the condition $\lambda(\Gamma) \le 2\sqrt{d - 1}$ is called \emph{Ramanujan}\footnote{In this work we consider only non-bipartite Ramanujan graphs.}. There are a~number of explicit constructions of such graphs~\cite{Margulis:1988, Lubotzky:1988} that use Cayley graphs of some non-commutative groups (see~\cite{Davidoff:2003} for a~good survey). 

We will see later that Tanner codes with such Ramanujan graphs (or their double-covers) can be used with the lifted product construction. The obtained chain complexes, which we can also consider as CSS codes, have very interesting expansion properties, similar to the ones studied in the theory of high-dimensional expanders (HDXs)~\cite{Lubotzky:2018}. We will show later that some of the standard definitions from this theory (e.g., the local minimality of (co)chains) can be naturally extended to a~more broad context of based (co)chain complexes.

\begin{figure}
  \centering
  \begin{tikzpicture}[scale=0.7]
    \node[vertex] (v) at (-0.4,-0.4) [label=below:$v$]{};
    \node[vertex] (v') at (0.4,0.4) [label=$v'$]{} ;
    \draw (v) -- (v');
    \draw (0,0) ellipse[x radius=1.5 cm,y radius=1.7 cm, rotate=-45];    
    \node[empty]  (cap) at (-0.2,-2.2) {base graph $\Gamma$};
  \end{tikzpicture}
  \quad
  \begin{tikzpicture}[scale=0.9]
    \node[vertex] (v1)    at (-1.2,-0.4) [label=below:$\hat{v}_1$]{};
    \node[vertex] (v2)    at (-0.9,-0.4) {};
    \node[empty]  (vdots) at (-0.6,-0.4) {\footnotesize\ldots};
    \node[vertex] (vell)    at (-0.3,-0.4) [label=below:$\hat{v}_\ell$]{};

    \node[vertex] (v1')    at (-1.2+1,0.4) [label=above:$\hat{v}'_1$]{};
    \node[empty]  (vdots') at (-0.9+1,0.4) {\footnotesize\ldots};
    \node[vertex] (v2')    at (-0.6+1,0.4) {};
    \node[vertex] (vell')    at (-0.3+1,0.4) [label=above:$\hat{v}'_\ell$]{};
    \draw (v1) -- (v2');
    \draw (v2) -- (vell');
    \draw (vell) -- (v1');
    \draw (-0.1,0) ellipse[x radius=1.7 cm,y radius=2 cm, rotate=-45];
    \node[empty]  (pi) at (0.8,-0.1) {$\pi\in\mathbf{S}_\ell$};
    \node[empty]  (cap) at (-0.2,-2.2) {$\ell$-lift $\hat{\Gamma}$ of $\Gamma$};
  \end{tikzpicture}
  \caption{Lifting of the base graph $\Gamma$.}
  \label{fg:lifting}
\end{figure}
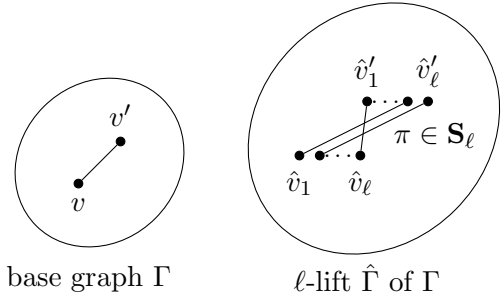

In~\cite{Panteleev&Kalachev:2021}, the graph $\hat{\Gamma}$ for the Tanner code was obtained as an~$\ell$-lift of a~small base graph $\Gamma$ using voltage assignments~\cite{Gross-Tucker:1987} with the cyclic group~$\mathbf{C}_\ell$ as the voltage group. Recall that an~\emph{$\ell$-lift} (also called an~\emph{$\ell$-fold cover}) of a~base graph\footnote{Multiple edges and loops are usually allowed in the base graph~$\Gamma$.} $\Gamma$ is a~graph~$\hat{\Gamma}$ obtained if we replace in the base graph each vertex $v\in V(\Gamma)$ with $\ell$~replicas $\hat v_{1},\dots,\hat v_{\ell}$, and replace each edge $e\in E(\Gamma)$ that connects vertices $v, v'\in V(\Gamma)$ with $\ell$~replicas~$\hat e_1,\dots,\hat e_\ell$ such that $\hat e_i$ connects in $\hat{\Gamma}$ the vertices $\hat v_i$ and ${\hat v}'_{\pi(i)}$, where ${\pi\in\mathbf{S}_\ell}$ is some permutation on the set $\{1,\dots,\ell\}$ (see Fig.~\ref{fg:lifting}). 
Note that the~permutations for different edges may be different and are usually defined~\cite{Gross-Tucker:1987} by a~voltage assignment using some group $G$, in which case we call the obtained graph a~\emph{$G$-lift} of $\Gamma$.

A~\emph{voltage assignment} for a~graph~$\Gamma$  with a~\emph{voltage group} $G$ is a~map $\gamma\colon E(\Gamma)\to G$ . Let us fix some \emph{orientation} of the edges, i.e., a~function $\ort\colon E(\Gamma)\to V(\Gamma)\times V(\Gamma)$, which tells us that the edge $e$ is oriented from $v$ to $v'$ if $\ort(e)=(v,v')$. For any voltage assignment $\gamma$, we can obtain the $G$-lift $\hat{\Gamma}$ of the base graph~$\Gamma$ called  the (\emph{left}) \emph{derived graph} for $\Gamma$ and $\gamma$, which we denote  by $\der(\Gamma; \gamma)$. To define $\hat{\Gamma} = \der(\Gamma;\gamma)$ we first let $V(\hat{\Gamma}):= V(\Gamma)\times G$, $E(\hat{\Gamma}):= E(\Gamma)\times G$, and introduce the following short-hand notations: $\hat{v}_g:= (v, g)$, $\hat{e}_g := (e, g)$, where $v\in V(\Gamma)$, $e\in E(\Gamma)$, $g\in G$. Now, if in the base graph $\Gamma$ an~edge~$e\in E(\Gamma)$ connects vertices $v,v'\in V(\Gamma)$, and  $\ort(e) = (v, v')$, then in the derived graph $\hat{\Gamma}$, for every $g\in G$, the edge $\hat{e}_g$ connects the vertices $\hat{v}_g$ and $\hat{v}'_{\gamma(e)g}$. One can also define the \emph{right derived graph} if the edge $\hat{e}_g$ connects the vertices $\hat{v}_g$ and $\hat{v}'_{g\gamma(e)}$. We call the $G$-lifts obtained from the left and right derived graphs \emph{left} and \emph{right} respectively. 

Note that a~$G$-lift $\hat{\Gamma}$ obtained by a~voltage assignment from a~base graph $\Gamma$ is usually called a~\emph{regular lift} or a~\emph{regular cover} of $\Gamma$. 
If a~group $G$ has a~right (resp. left) free action on the vertices and edges of a~graph, and the condition $v\adj_e v'$ implies $vg \adj_{eg} v'g$ (resp. $gv \adj_{ge} gv'$) for every vertices $v,v'$, edge $e$, and $g\in G$, then we say that $G$ has a~\emph{right} (resp. \emph{left}) \emph{free action} on this graph. One can easily check that for any left $G$-lift we can define a~right action of $G$ if for every $\hat{v}_g\in V(\hat{\Gamma})$, $\hat{e}_g\in E(\hat{\Gamma})$, and $h\in G$ we put $\hat{v}_g h:= \hat{v}_{gh}$, $\hat{e}_g h := e_{g h}$.  In what follows, we consider only left $G$-lifts and omit the word ``left''. Note that when the group $G$ is abelian, then there is no difference between left and right $G$-lifts.

When the voltage group is a~cyclic group~$\mathbf{C}_\ell$, then the corresponding derived graphs are also called \emph{shift $\ell$-lifts} and the assigned voltages are called \emph{shifts}.  In the special case when $\ell=2$, and we assign to each edge $e$ of the base graph $\Gamma$ the non-identity shift from $\bC_2$, we obtain the bipartite graph $\bar{\Gamma}$ called the (\emph{bipartite}) \emph{double-cover of $G$}. Since $\bar{\Gamma}$ is the tensor product of $\Gamma$ and the complete graph~$K_2$, then it is not hard to show that $\lambda_2(\bar{\Gamma}) = \lambda(\Gamma)$. Hence this particular $2$-lift almost preserves the spectral expansion properties. Note that if $\Gamma$ is a~bipartite graph then $\bar{\Gamma}$ is a~disconnected graph. Hence, it does not make a~lot of sense to apply this simple construction more than once since on the second iteration one inevitably obtains a~disconnected graph. However, the situation is not that bad if we apply a~large shift $\ell$-lift only once. 
As it was shown in~Theorem~1.2 from~\cite{Agarwal:2019}, if the base graph~$\Gamma$ has good spectral expansion properties, then by using random shifts the obtained graph $\hat{\Gamma}$ also has good expansion properties, even when the lift size $\ell$ is very large. In~\cite{Panteleev&Kalachev:2021}, such graphs $\hat{\Gamma}$ were used to construct quasi-cyclic expander codes of very large lift size $\ell$ such that the corresponding parity-check matrix $H$ and its transpose $H^*$ have good expansion properties.

In the current work, we also obtain graphs $\hat{\Gamma}$ using voltage assignments. We start from a~very small base graph $\Gamma$ such as the bouquet graph $B_w$ (one vertex, $w$ loops) or the graph $D_w$ (two vertices connected by $w$ multiple edges). 
Then we consider a~finite group $G$ with some fixed $w$-element set of generators $S\subseteq G$ and assign each generator from $S=\{s_1,\dots,s_w\}$ to exactly one of the $w$ edges (see Fig.~\ref{fg:base-graphs}). It is not hard to see that the derived graphs for $B_w$ correspond to Cayley graphs $\cay(G,S)$ if the generating set $S$ is \emph{symmetric}, i.e. $S = \{s^{-1} \mid s\in S\}$, and there are no generators $s\in S$ such that $s = s^{-1}$. Let us remind that, given a~finite group $G$ with some symmetric generating set $S$, the corresponding (\emph{left}) \emph{Cayley graph} is the simple graph $\cay(G, S)$ with the set of vertices $V(\Gamma):= G$ and the set of edges $E(\Gamma):= \{ \{g, sg\}\mid g\in G, s\in S  \}$. Now if we assign the elements of a~symmetric generating set $S$ of some finite group $G$ one-to-one to the $w$ edges of the graph $D_w$ (the orientation is shown in~Fig.~\ref{fg:base-graphs}), then we obtain the graph $\cay_2(G,S)$, which is the double-cover of $\cay(G, S)$. The graph $\cay_2(G,S)$ has the set of vertices $V(\Gamma):= G \times \{0, 1\}$ and the set of edges: 
\[E(\Gamma):= \{ \{(g, 0), (sg, 1)\}\mid g\in G, s\in S\}.\] 
Note that the free right action of the group~$G$ on this graph is defined as $(g, a) h:= (gh, a)$ and $\{(g, 0), (sg, 1)\}h:= \{(gh, 0), (sgh, 1)\}$, where $h,g\in G$, $s\in S$, and $a\in \{0, 1\}$. 

\begin{example}\label{ex:Xpq}
Let us now consider the infinite family of $(p+1)$-regular non-bipartite Ramanujan graphs $X^{p,q}$ from~\cite{Lubotzky:1988}, where $p$ and $q$ are two unequal primes such that $q > 2\sqrt{p}$, $p \equiv q \equiv 1 \Mod{4}$, and $p^{(q-1)/2} \equiv 1 \Mod{q}$. The graph $X^{p,q}$ is obtained in~\cite{Lubotzky:1988} as the~Cayley graph $\cay(G, S_{p,q})$, where\footnote{The group $\mathrm{PSL}(\F_q^2)$ is the \emph{projective special linear} group for $\F_q^2$, i.e. the quotient of the group of matrices $A\in \F_q^{2\times 2}$ with $\det A = 1$ modulo its subgroup $\{\pm \left(\begin{smallmatrix} 1 & 0\\ 0 & 1 \end{smallmatrix}\right) \}$.} $G:= \mathrm{PSL}(\F_q^2)$ and $S_{p,q}$ is some specific symmetric set of $p+1$ generators. Denote by $\bar{X}^{p,q}$ the corresponding double-cover $\cay_2(G, S_{p,q})$. Hence $\bar{X}^{p,q}$ is a~$(p+1)$-regular bipartite graph with $n = 2\abs{G}$ vertices, where $\abs{G} = q(q^2 - 1)/2$. Since it is proved in~\cite{Lubotzky:1988} that $\lambda(X^{p,q}) \le 2\sqrt{p}$, then we also have $\lambda_2(\bar{X}^{p,q}) \le 2\sqrt{p}$. Moreover, the graph $\bar{X}^{p,q}$ is a~$G$-lift of the base graph $D_{p+1}$ from Fig.~\ref{fg:base-graphs}, and the group~$G$ has a~free right action on $\bar{X}^{p,q}$.
\end{example}

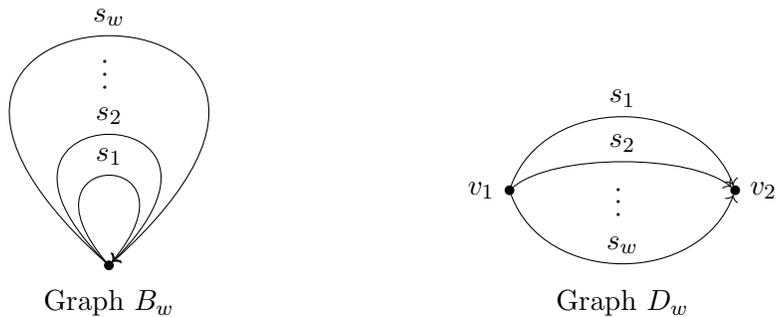
\begin{figure}
  \centering
  \begin{tikzpicture}[scale=1]
    \node[vertex] (v) at (0,0) {};
    \node [anchor=north west][inner sep=0.75pt]  [rotate=-90] at (0,2.8cm) {$\dots$};
    \path[->] (v) edge [in=50,out=130,loop,distance=2cm] node[auto] {$s_1$} (v);
    \path[->] (v) edge [in=45,out=135,loop,distance=3.2cm] node[auto] {$s_2$} (v);
    \path[->] (v) edge [in=42,out=138,loop,distance=6cm] node[auto] {$s_w$} (v);
    \node[empty]  (cap) at (0,-0.5cm) {Graph $B_w$};
  \end{tikzpicture}
  \begin{tikzpicture}[scale=1]
    \node[vertex,label=left:{$v_1$}] (v1) at (0,0) {};
    \node[vertex,label=right:{$v_2$}] (v2) at (3cm,0) {};
    \node [anchor=north west][inner sep=0.75pt]  [rotate=-90] at (1.5cm,0.1cm) {$\dots$};
    \path[->] (v1) edge [in=110,out=70,distance=1.3cm] node[auto] {$s_1$} (v2);
    \path[->] (v1) edge [in=140,out=40,distance=0.7cm] node[auto] {$s_2$} (v2);
    \path[->] (v1) edge [in=-110,out=-70,distance=1.3cm] node[auto] {$s_w$} (v2);
    \node[empty]  (cap) at (1.5cm,-1.5cm) {Graph $D_w$};
  \end{tikzpicture}
    \caption{Voltage assignments for the graphs $B_w$ and $D_w$. The derived graph for $B_w$ corresponds to $\mathrm{Cay}(G,S)$, while the derived graph for $D_w$ is the double-cover of $\mathrm{Cay}(G,S)$. The small arrows shows the orientation that we fix.}
  \label{fg:base-graphs}
\end{figure}

\subsection{Classical codes}
In this subsection we review some standard definitions and terminology related to classical linear codes.
A~{\tmem{linear $[n, k]_q$ code}} is a~$k$-dimensional subspace $\cC \subseteq
\mathbb{F}_q^n$, where the parameters~$n$ and $k$ are called
the~{\tmem{length}} and the~{\tmem{dimension}} of $\cC$, respectively. We
denote the~dimension $k$ of the~code~$\cC$ by $\dim \cC$. The~{\tmem{rate}} of
the~code~$\cC$ is equal to $k / n$. The~elements of $\cC$ are called
{\tmem{codewords}}. The~{\tmem{minimal distance}} $d (\mathcal{C})$ of
the~code~$\mathcal{C}$ is the minimal weight of a non-zero codeword from $\cC$, 
and $d (\cC) := \infty$ when $k = 0$. When a~linear
{$[n, k]_q$} code~$\cC$ has minimal distance~$d$, we say that $\cC$ is an~$[n,
k, d]_q$ code. 

A~linear $[n, k]_q$ code is usually defined either as the row space of
a~matrix $G$ called the {\tmem{generator matrix}}, or as the~kernel of
a~matrix $H$ called the {\tmem{parity-check matrix}}. It~is easy to see that
$GH^* = \zm$, $\rk G = k$, and $\rk H = n - k$. The~code defined by
a~parity-check matrix $H$ is denoted by~$\ker H$. The~vector
space~$\mathbb{F}_q^n$ usually comes with the~standard scalar product $\langle
x, y \rangle = x_1 y_1 + \cdots + x_n y_n$. The {\tmem{dual
code}}~$\cC^{\perp}$ for a~linear $[n, k]_q$ code $\cC$ is the $[n, n - k]_q$
code
\[ \cC^{\perp} = \{x \in \mathbb{F}_q^n \mid \langle x, y \rangle = 0 \text{
   for all } y \in \cC \} . \]
It is not hard to see that a~generator matrix for $\cC$ is a~parity-check matrix for
$\cC^{\perp}$ and vice versa.

\begin{remark}
Note that in the current work it is convenient to consider a~slightly more general case,
where instead of $\mathbb{F}_q^n$ we have an~arbitrary based $n$-dimensional vector
space~$\mathcal{M}$ over~$\mathbb{F}_q$ equipped with some distinguished
basis {$\tilde{\mathcal{M}} = \{ m_1, \ldots, m_n \} \subseteq
\mathcal{M}$}. In this case, $\mathcal{M} \cong \mathbb{F}_q^n$, and we can consider subspaces
$\mathcal{C} \subseteq \mathcal{M}$ as linear codes, and apply all
the~terminology we introduced above to this case as well. 
\end{remark}

In what follows, we will often use the following important definitions.

\begin{definition}
Consider two linear codes $\cC\subseteq \cM$ and $\cC'\subseteq \cM'$, where $\cM$ and $\cM'$ are two $n$-dimensional vector space over $\F_q$ with distinguished bases $\tilde{\cM}$ and $\tilde{\cM}'$ respectively. We say that $\cC$ and $\cC'$ are (\emph{permutation}) \emph{equivalent} and write $\cC\sim \cC'$ if there exists a~linear map $\pi\colon \cM\to\cM'$ such that $\pi(\tilde{\cM}) = \pi(\tilde{\cM}')$ and $\pi(\cC) = \pi(\cC')$. We also say that two $m\times n$ matrices $A$ and $B$ are (\emph{permutation}) \emph{equivalent} and write $A\sim B$ if we can obtain one from another by some row/column permutations. It is clear that if $A\sim B$ then $\ker A \sim \ker B$.

\end{definition}

\subsection{Expander codes}

In this subsection, we describe expander codes, which are Tanner codes obtained from expander graphs. We adopt a~very convenient way, used in~\cite{Meshulam:2018}, \cite{Breuckmann:balanced:2021} to represent these codes in the language of chain complexes and local systems. If $\cF$ is some abelian group and $X$ is some $n$-element set, then we denote by $\cF X$ the abelian group of all formal linear combinations $\sum_{x\in X} a_x x$ of the elements $x\in X$ with coefficients $a_x\in\cF$. When $n=1$, and $X=\{x\}$, we usually write $\cF x$ instead of $\cF\{x\}$.   If $\cF=\F_q$ then the group $\cF X$ is isomorphic to the vector space~$\F_q^n$. When $\cF = \F_q^m$, the group $\cF X$ can be identified with the vector space~$\F_q^{mn}$ of block vectors $(v_1,\dots,v_{n})$ with the blocks $v_i\in\F_q^m$, $i\in [n]$. If $S\subseteq X$ and $a = \sum_{x\in X} a_x x$, then $a|_S := \sum_{x\in S} a_x x$.
Now let us introduce the following important definition.

\begin{definition} Consider a graph $\Gamma = (V, E)$ and a~collection $(\partial^{(v)})_{v\in V}$ of linear maps $\partial^{(v)}\colon\F_q E_v\to\F_q^r v$ called \emph{local boundary maps}, where $E_v$ is the set of edges incident to the vertex $v\in V$. A~\emph{Tanner chain complex $\cT=\tanner_\bullet(\Gamma; (\partial^{(v)})_{v\in V})$} is a~chain complex $\mathbb{F}_q E\xrightarrow{\partial_1} \mathbb{F}_q^r V$ such that for every $e\in E$ that connects $v$ and $v'$ we have: 
\begin{equation}\label{eq:tanner-comp}
\partial e := \partial^{(v)} e + \partial^{(v')} e.    
\end{equation}

\end{definition}

Any Tanner complex $\cT$ defines the~\emph{global} linear code $\cC:= \ker \partial_1$, also known as the~\emph{Tanner code}, and a~number of \emph{local} linear codes $\cC_v:=\ker \partial^{(v)}$, $v\in V$, also known as \emph{subcodes}. We see that $c\in \cC$ \Iff $c|_{E_v}\in \cC_v$ for all $v\in V$. In what follows, we consider Tanner complexes where the matrices of all local boundary maps $ \partial^{(v)}$ are equivalent to one matrix $h\in\F_q^{r\times w}$. Hence all local codes $\cC_v$ are also equivalent to the same linear $[n, k, d]_q$ code $\ker h$. We denote the class of all such Tanner complexes on the graph $\Gamma$ as $\fT(\Gamma; h)$.

We can lift Tanner complexes in a~similar way as we lift graphs using voltage assignments. Consider a~Tanner complex $\cT=\tanner_\bullet(\Gamma; (\partial^{(v)})_{v\in V})$ for the graph $\Gamma$. For any $G$-lift $\hat{\Gamma}=(\hat{V},\hat{E})$, obtained from $\Gamma$ by a~voltage assignment $\gamma\colon E(\Gamma)\to G$, we can define the \emph{$G$-lifted Tanner complex} $\hat{\cT} = \der(\cT; \gamma)$. It is convenient to represent $\hat{\cT}$ as the complex 
\[
 \mathbb{F}_q E\otimes  \F_q G\xrightarrow{\hat{\partial}_1} \mathbb{F}_q^r V \otimes  \F_q G,
\]
where by the tensor product $\otimes$ we mean the tensor product over $\F_q$.
Since $\mathbb{F}_q^r V\otimes  \F_q G\cong \F_q^r \hat{V}$ and $ \mathbb{F}_q E \otimes  \F_q G\cong \F_q \hat{E}$, we can assume that $v\otimes g = (v,g)$ and $e\otimes g = (e, g)$ and still consider $\hat{\cT}$ as a~Tanner complex 
\[\F_q\hat{E}\xrightarrow{\hat{\partial}_1} \F_q\hat{V}\] 
for the graph $\hat{\Gamma}$. The boundary map $\hat{\partial}$ of this complex is defined for every $g\in G$ and $e\in E$ with $\ort(e) = (v, v')$ as \[
\hat{\partial}(e\otimes  g) := \partial^{(v)} e \otimes  g + \partial^{(v')} e\otimes \gamma(e) g,
\]
and extended by linearity (cf. Equation~(\ref{eq:tanner-comp})). Let $\hG = \der(\Gamma;\gamma)$ be a~$G$-lift of a~graph $\Gamma$.  We denote by $\fT_G(\hG; h)$ the class of all $G$-lifted Tanner complexes $\hat\cT=\der(\cT;\gamma)$ where $\cT\in \fT(\Gamma; h)$.

Since the $G$-lifted Tanner complex $\hat{\cT}$ is a~right $G$-module\footnote{We can multiply from the right on its basis as follows: $(e \otimes g) h := e \otimes gh$, $(v \otimes g) h := v \otimes gh$.}, we can use any such complex with the $G$-lifted product construction discussed earlier. Let us now consider a~local $[w, k, d]_q$ code $\ker h$ with the parity-check matrix $h\in \F_q^{r\times w}$, and the Tanner complex $\cT(h) := \left(\mathbb{F}_q E(D_w)\xrightarrow{\partial_1} \mathbb{F}_q^r V(D_w)\right)$ with the boundary map defined as
\[ \partial e_i := h_i v_1 + h_i v_2,\]
where $E(D_w) = \{e_1,\dots,e_w\}$, $V(D_w) = \{v_1, v_2\}$, and $h_i$ is the $i$-th column of the parity-check matrix~$h$. It~is easy to see that the two local codes $\cC_{v_1}$ and $\cC_{v_2}$ of $\cT(h)$ are both equivalent to $\ker h$.
As it was already mentioned, we can obtain the double-cover $\Gamma :=\cay_2(G,S)$ of any Cayley graph $\cay(G,S)$ as the $G$-lift of $D_w$, where $w := \abs{S}$, by a~one-to-one assignment of the $w$ generators from $S$ to the edges of $D_w$ (see~Fig.~\ref{fg:base-graphs}).  Thus we can consider the lifted Tanner complex $\cT(\Gamma;h):=\der(\cT(h);\gamma)$, where $\gamma$ is the corresponding voltage assignment map: $\gamma(e_i):= s_i$, $i\in[w]$. It is not hard to check that the boundary map $\hat{\partial}$ of this lifted complex acts on its bases as follows:
\[
\hat{\partial} (e_i \otimes g) = h_i v_1 \otimes g + h_i v_2 \otimes s_i g, \quad i\in [w].
\]
Let us remind that the chain complex $\cT(\Gamma; h)$ is a~$G$-module.

Let us fix a~graph $\Gamma = \cay_2(G,S)$ and two parity-check matrices $h\in\F_q^{r\times w}$, $h'\in\F_q^{r'\times w}$. We can define the following $3$-term chain complexes using the $G$-lifted product construction: 
\begin{align*}
 \cC_\bullet(\Gamma;h,h') &:= \cT(\Gamma; h) \otimes_G \cT^*(\Gamma; h'),  \\
 \cC'_\bullet(\Gamma;h,h') &:= \cT(\Gamma; h) \otimes_G \cT(\Gamma; h').  
\end{align*}

\begin{remark}\label{rm:LTC-constr}
Let $\bar{X}^{w-1,t} = \cay_2(G, S_{w-1,t})$ be the $w$-regular graph from Example~\ref{ex:Xpq}, where $G = \mathrm{PSL}(\F_t^2)$. Consider the chain complexes $\cC_\bullet(\bar X^{w-1,t};h,h')$ and $\cC'_\bullet(\bar X^{w-1,t};h,h')$ respectively. In the current work, we use the first complex to show the existence of two asymptotically good families of codes: quantum LDPC codes and classical LTCs. However, as we can see from Theorem~\ref{th:LTC}, the rate of the obtained LTCs is bounded above by $1/2$. We conjecture\footnote{Note that a~construction similar to this complex was used in~\cite{Dinur:2021} to produce asymptotically good classical LTCs.} that the complex $\cC'_\bullet(\bar X^{w-1,t};h,h')$ can be used to obtain asymptotically good LTCs of rate arbitrary close to $1$. For example, if $h,h'\in \F_q^{r\times w}$, then the rate of the classical codes $\ker \partial_2$ obtained from~$\cC'_\bullet(\bar X^{w-1,t};h,h')$ is at least $1 - 4r/w$ since we have $w^2|G|$ code symbols and $4wr|G|$ parity-checks. Hence if the rate of the local codes goes to $1$, the same happens with the rate of the obtained LTCs.
\end{remark}

\subsection{Posets and incidence chain complexes}
In this subsection, we consider based chain complexes $\cI$ with integer coefficients\footnote{Chain complexes with integer coefficient are often used in algebraic topology to study the integral homology groups of CW-complexes} and call the elements from the corresponding distinguished basis $\tilde\cI$ \emph{cells}. We say that $\cI$ is an~\emph{incidence chain complex} if the matrix of its boundary map $\partial$ contains only elements from $\{-1,0,1\}$, and for every such a~complex we also define its \emph{cell poset}, which can be viewed as a~combinatorial structure that represents the incidence relation between the cells.  In~some sense, one can view the cell poset with the corresponding incidence chain complex\footnote{In fact, if the reader is only interested in the codes over finite field of even characteristic, then the signs in the matrix~$\partial$ are not relevant, and we can represent every~abstract cell complex by the corresponding cell poset.} as an~\emph{abstract cell complex} (see, e.g.~\cite[Section~2.12]{Hilton:1960}), which generalizes the notion of an~abstract simplicial complex and an~abstract polytope~\cite{McMullen:2002}. 

Let $X$ be a~\emph{poset}, i.e., a~set with a~partial order $\le$. We say that an~element $a\in X$ \emph{covers} an~element $b\in X$ and write $a\prec b$ or $b \succ a$ if $a < b$, and there is no element $c\in X$ such that $a < c < b$. 
It is easy to see that any finite poset can be uniquely defined by its covering relation $\prec$ if we let $a \le b$ \Iff there exists a~sequence $c_0 \prec c_1 \prec \dots \prec c_n$ of elements from $X$ such that $c_0 = a$, $c_n = b$, and $n\ge 0$.  Let $\cC$ be a~based chain complex over some ring\footnote{In this section, we are interested in only two cases: $R=\Z$ and $R=\F_q$.} $R$. We can define the~partial order $\le$ on the distinguished basis~$\tilde{\cC}$ if for every two cells $c,c'\in \tilde{\cC}$ we put $c' \prec c$  \Iff $c'\in \supp \partial c$. We call the poset~$\tilde{\cC}$ with the relation $\le$ the \emph{cell poset of~$\cC$}.  
 
A~\emph{graded poset} is a~poset $X$ equipped with a~map $\rho\colon X\to \Z$ called a~\emph{rank function} such that for any $a,b\in X$ the following conditions hold:
\begin{enumerate}
    \item if $a \le b$ then $\rho(a) \le \rho(b)$;
    \item if $a\prec b$ then $\rho(b) = \rho(a) + 1$.
\end{enumerate}
If $X$ is a~finite graded poset, then it is not hard to see that it can be decomposed as 
\[
X = X(s) \sqcup X(s+1)\sqcup \dots \sqcup X(t),
\] 
where the subset $X(i) := \{a \in X \mid \rho(a) = i \}$ is called the \emph{$i$-th level of $X$}, $i\in [s,t]\cap\Z$. It~is clear that all the elements from $X(s)$ (resp. $X(t)$) are minimal (resp. maximal) elements in $X$. It~is also trivial to check that the cell poset $\tilde{\cC}$ of a~based (co)chain complex $\cC$ is a~graded poset, where the levels correspond to the cells of the same dimension. 

Another example of a~graded poset, often studied in the context of HDXs, is an~(\emph{abstract}) \emph{simplicial complex} on a~finite non-empty set $V$, which is defined as a~closed under taking subsets family $X \subseteq 2^V$. In this case, the partial order $\le$ is just the set inclusion relation $\subseteq$, and $\rho(x) := \abs{x} - 1$ for every $x\in X$. The elements $x\in X$ with $\rho(x) = i$ are called \emph{$i$-dimensional faces} or just \emph{$i$-faces}. The highest dimension of the faces from the simplicial complex $X$ is called its \emph{dimension}. Let us note that a~simple graph can be represented as a~1-dimensional simplicial complex, where the 0-faces and the 1-faces correspond respectively to the vertices and the edges of the graph. Hence we can also view an~undirected~graph $\Gamma$ as the graded poset with the levels $V(\Gamma)$ and $E(\Gamma)$, where for every $v\in V(\Gamma)$ and $e\in E(\Gamma)$ we have $v \prec e$ whenever $v$ is incident to $e$. In fact, 2-level posets are equivalent to the \emph{incidence systems}, and thus can be used to represent undirected multigraphs and hypergraphs as well.

In this work, it is convenient to define objects such as graphs, incidence systems, and simplicial complexes by the corresponding based chain complexes over $\Z$. In some way, we can view such complexes with integer coefficients as a~vast generalization of these objects. For example, for any 2-level poset $X$ with the levels $V$ and~$E$, we can define the based chain complex $\mathcal{C}_{\bullet} (X) := \left( \mathbb{Z}E\xrightarrow{\partial_1} \mathbb{Z}V \right)$ with the distinguished bases $\tilde{\cC}_0 := V$, $\tilde{\cC}_1 := E$, where 
\[
\partial e := \sum_{\substack{v\prec e\\v\in V}} v.
\]
The matrix of $\partial_1$ is a~zero–one matrix usually  called the \emph{incidence matrix of $X$}. For example, since we view an undirected graph $\Gamma$ as a~$2$-level poset, we can consider the corresponding chain complex~$\cC_\bullet(\Gamma)$. Now let $X$ be a~simplicial complex with some fixed linear order $<_V$ on its set of vertices $V = X(0)$. Then we can define the chain complex $\cC_{\bullet}(X)$ by the following diagram
\[
\mathbb{Z}X (n) \xrightarrow{\partial_n} \cdots \xrightarrow{\partial_1}
\mathbb{Z}X (0) \xrightarrow{\partial_0} \mathbb{Z}X (-1),
\]
where for every $k$-face $x=\{v_0, \dots, v_k\}\in X$ such that $v_0 <_V \dots <_V v_k$ the boundary map ${\partial\colon \Z X \to \Z X}$ is defined as $\partial x := \sum_{i=0}^{k} (-1)^i x\setminus\{v_i\}$, and then extended by linearity to all chains from $\Z X$. 
As we can see, the integer coefficients in the matrix of the boundary maps for $\cC_\bullet(\Gamma)$ and $\cC_\bullet(X)$ are from the set $\{-1,0,1\}$. Let us call any based chain complex $\cI$ with this property\footnote{In fact, sometimes it is also convenient to consider arbitrary integer coefficients. But this more general case is not covered here.} an~\emph{incidence complex}. Let $\cI$ be some incidence complex with a~distinguished basis $X$.  It is clear that its boundary map $\partial\colon \Z X \to \Z X$ acts on a cell $x \in X$ as
\begin{equation}\label{eq:incidence-comp}
\partial x = \sum_{\substack{x \succ x'\\ x'\in X}} [x : x'] x',
\end{equation}
where the coefficient $[x : x']\in \{-1, +1\}$ is called the \emph{incidence number} for $x,x'\in X$. It is also convenient to assume that $[x : x']=0$ whenever $x \not\succ x'$. Let us note that since $\partial^2 = 0$, then for every $x,x''\in X$ we obtain
\begin{equation}\label{eq:incidence}
    \sum_{\substack{x \succ x' \succ x''\\x'\in X}} [x : x'][x' : x''] = 0.
\end{equation}

\subsection{Products of graphs and posets}\label{sc:prod-graph}

By interpreting objects like graphs, hypergraphs, or more generally abstract cell complexes as the corresponding incidence complexes allows us to define the lifted product of such objects. 
We say that a~group $G$ \emph{acts} on a~poset~$P$ if it acts on $P$ as on a~set, and for every $g\in G$ if $x \le y$ then $gx \le gy$ (resp. $xg \le yg$ in the case of a~right action). It is readily seen that an~action of a~group on a~graph $\Gamma$ is also an~action on~$\Gamma$ as a~$2$-level poset.
Therefore if $\cI_X$ and $\cI_Y$ are incidence complexes with cell posets $X$ and~$Y$, respectively, where a~group $G$ acts freely (from the right on~$X$ and from the left on~$Y$), then we can define the \emph{lifted product} $X \times_G Y$  of $X$ and~$Y$ \emph{over~$G$} as the cell poset of the complex $\cI_X \otimes_G \cI_Y$. In fact, the lifted product $X \times_G Y$ can be defined for arbitrary finite posets $X$ and $Y$ with a~free action of a~group $G$. Recall that if we have a~free action of a~group~$G$ on a~set~$S$, then the size of each orbit is equal to $|G|$, and we can identify $S$ with $(S/G)\times G$, where $S/G$ is the set of all orbits under the action of $G$. We define the poset $X \times_G Y$ as the set $(X/G)\times G\times (Y/G)$ in terms of the covering relations as follows: we have $(x,g,y) \succ (x',g',y')$ \Iff either $x = x'$ and $(y,g) \succ_Y (y',g')$ or $(x,g) \succ_X (x',g')$ and $y=y'$.  If the posets $X$ and $Y$ are graded, then we can also define the rank function $\rho(\cdot)$ for $X\times_G Y$ in terms of the rank functions of $X$ and $Y$ as follows: $\rho(x,g,y) := \rho_X(x,g) + \rho_Y(y,g)$. If~$|G|=1$ we denote the poset $X \times_G Y$ simply by $X\times Y$.

\begin{remark}
If $X$ and $Y$ are two graphs (considered as $2$-level posets), then from the geometrical point of view the poset $X \times Y$ corresponds to the direct product of $X$ and $Y$ (as topological graphs). At the same time, the geometrical interpretation of the poset $X \times_G Y$ can be given in terms of the balanced product\footnote{A~geometric realization of a~graph can be considered as a~topological space. The \emph{balanced product} of two topological spaces $X$ and $Y$ with a~group $G$ acting on the right on $X$ and on the left on $Y$ is the quotient space $X \times_G Y := X \times Y/\sim$, where the equivalence relation $\sim$ is induced by $(xg,y)\sim (x, gy)$ for $x\in X$, $y\in Y$, $g\in G$.} of graphs~\cite{Breuckmann:balanced:2021}. 
Note that the \emph{$1$-skeleton} of $X \times Y$, i.e., its restriction to the first two levels, is the $2$-level poset representing the graph~$X \mathbin{\Box} Y$, which is usually called the \emph{Cartesian product} of the graphs $X$ and $Y$. Recall that for every $G$-lifted graph $\Gamma$ the group $G$ acts freely on $\Gamma$.
Hence, we can also define the \emph{$G$-lifted Cartesian product} $\hat X \mathbin{\Box_G} \hat Y$ for $G$-lifts $\hat X$, $\hat Y$ of base graphs $X$, $Y$ as the $1$-skeleton of $\hat X \times_G \hat Y$. It~is not hard to check that the graph $\hat X \mathbin{\Box_G} \hat Y$ is a~$|G|$-fold cover for the standard Cartesian product $X \mathbin{\Box} Y$. Furthermore, if $G$ is abelian, then this cover is regular, i.e., $\hat X \mathbin{\Box_G} \hat Y$ is a~$G$-lift of $X \mathbin{\Box} Y$. 
\end{remark}

Suppose that $\hG$ is a~$G$-lift of some base graph $\Gamma$. Consider the cell poset $\tilde X := \hG \times_G \hG$, and let us represent its elements by triples $x\cdot g\cdot y$, where  $x,y\in V(\Gamma)\cup E(\Gamma), g\in G$.   From the  definition of the poset $\tilde X$ it follows that 
$x' \cdot g' \cdot y' \succ x \cdot g \cdot y$ \Iff one of the following conditions hold:  
\begin{enumerate}
    \item $\hat x'_{g'} \succ_{\hG} \hat x_g$ and $y = y'$;
    \item $x = x'$ and $\hat y'_{g'} \succ_{\hG} \hat y_g$;
\end{enumerate}
where $\succ_{\hG}$ is the covering relation in the graph~$\hG$ considered as a $2$-level poset, i.e., its incidence relation. It is convenient to interpret the poset $\tilde X$ as a~$2$-dimensional geometric object.
An~element $x \cdot g \cdot y\in \tilde X$ is called:
\begin{itemize}
    \item a~\emph{vertex} if $x\in V(\Gamma),y\in V(\Gamma)$;
    \item a~\emph{horizontal edge} if $x\in E(\Gamma), y\in V(\Gamma)$;
    \item a~\emph{vertical edge} if $x\in V(\Gamma), y\in E(\Gamma)$;
    \item a~\emph{face} if $x\in E(\Gamma), y\in E(\Gamma)$,
\end{itemize}
and the corresponding subsets of elements are denoted as $V = V(\tilde X)$, $\Eh = \Eh(\tilde X)$, $\Ev = \Ev(\tilde X)$, and $F = F(\tilde X)$. We also define the set $E(\tilde X) = \Eh(\tilde X) \cup \Ev(\tilde X)$.

If $P$ is a~poset we denote by $P^*$ the \emph{dual poset}, i.e., $x \le_{P^*} y$ whenever $y \le_{P} x$. In what follows, we will also need a~poset $X := \hG \times_G \hG^*$, which is defined on the \emph{same} set as $\tilde X = \hG \times_G \hG$ but has \emph{different} partial order and rank function. This means that the grading of $X$ is different from~$\tilde X$. It~is easy to check that the cell poset $\tilde X$ has $3$ levels: $\tilde X(0) := V$, $\tilde X(1) := \Ev\cup \Eh$, and $\tilde X(2) := \Eh$, while the levels for $X$ are as follows: $X(0) := \Ev$, $X(1) := F\cup V$, and $X(2) := \Eh$. 
\begin{remark}
  As we will see in Section~\ref{sc:proof-out}, the poset $X = \hG \times_G \hG^*$ corresponds to the lifted product complex  $\cT(\Gamma; h) \otimes_G \cT^*(\Gamma; h')$, which we use to show the main result. However the levels in the poset $X$ do not correspond to the natural geometrical dimension of the cells, and in the proof of our main result it is more convenient to work with the poset $\tilde X = \hG \times_G \hG$ defined on the same set as $X$, but giving it a~natural geometrical interpretation as a~$2$-dimensional complex. To this end we define the incidence relation $\inc(\cdot,\cdot)$ on the set $V\cup \Eh\cup \Ev\cup F$ in a~standard geometrical sense, i.e., we assume that $\inc(x,y)$ \Iff $x \le y$ or $y \le x$, where $\le$ is the partial order of the poset~$\tilde X = \hG \times_G \hG$. For example, every face can be represented geometrically as a~square incident to two horizontal edges, two vertical edges, and to four vertices. 
  If $x\in X$ and $S,T\subseteq X$, then we also use the following notations:
\begin{align*}
    S_x &:=\{y\in S\mid \inc(x,y)\},\\
    S_T &:=\{y\in S\mid \inc(x,y) \mbox{ for some }x\in T\}.
\end{align*}
Hence $S_x$ is the subset of the elements from~$X$ incident to $x$, and $S_T$ is the the subset of the elements from~$S$ incident to some element from $T$. For example, $X_v = \{v\} \cup E_v\cup F_v$ is the set of all cells incident to $v$ called the \emph{star} of $v$, where $E_v$ (resp. $F_v$) is the set of edges (resp. faces) incident to $v$.   
\end{remark}

For the proof of our main result we will also need the~$1$-skeleton  $\Lambda := \hG \mathbin{\Box_G} \hG$ of $\hG \times_G \hG$ with the set of vertices~$V(\Lambda) := V(\tilde X)$ and the set of edges~$E(\Lambda) := E(\tilde X)$. 
\begin{remark}
In the proof of the main result in~Section~\ref{sc:proof}, when we mention sets $V$, $E$, $\Eh$, $\Ev$, $F$ or a~graph~$\Lambda$, we refer to the corresponding sets and the graph defined for the poset $\hG \times_G \hG$ in this section unless otherwise stated. 
\end{remark}

\subsection{Local systems}

In this subsection, we consider a~generalization of based chain complexes with coefficients from some field or ring to the complexes with \emph{local system of coefficients}, where the chains are formal linear sums of cells with coefficients in arbitrary abelian groups. 
In~fact, in this work, we are interested in the case when all these abelian groups are vector spaces over the same finite field $\F_q$, and thus the corresponding chain complexes can be still considered as complexes of vector spaces over $\F_q$.  In~some sense, a~complex with local coefficients gives us a~high-level view of the corresponding complex over~$\F_q$.

Let $X$ be some finite set, which we are going to use as an~index set. If a~vector space~$\cC$ is the~direct sum $\bigoplus_{x\in X} \cF_x$ of a~collection of vector spaces $\cF = (\cF_x)_{x\in X}$, then we can consider the elements of~$\cC$ as formal sums $\sum_{x\in X} a_x x$ of elements from $X$, where for every $x\in X$ the coefficient $a_x$ is from the vector space~$\cF_x$ called the \emph{local coefficient space} of~$x$. In such cases, we also denote the vector space~$\cC$ by~$\cF X$ or by~$A X$ when all the local coefficient spaces are equal to the same space~$A$. If each local coefficient space $\cF_x$ comes with a~distinguished basis $\tilde{\cF_x}$, then we assume that the distinguished basis for $\cF X$ is the set $\{ax \mid a\in \tilde{\cF_x},\ x\in X \}$, in which case we say that $\cF X$ is \emph{based}.

\begin{definition}
Given a~poset $X$ we say that $\cF$ is a~\emph{local system of coefficients} for $X$ if to each $x\in X$ we assign a~vector space $\cF_x$, and to each $x,x'\in X$ where $x \ge x'$ we assign an~$\F_q$-linear map $\cF_{x\to x'}\colon \cF_x \to \cF_{x'}$ such that whenever $x \ge x' \ge x''$ we have:
\[
\cF_{x'\to x''}\circ\cF_{x\to x'} = \cF_{x\to x''}.
\]
\end{definition}
\begin{remark}
Note that in the language of category theory we can view $\cF$ as a~\emph{functor} from a~poset $X$ to the category of vector spaces over $\F_q$. Here we consider the poset $X$ as a~small category, where the objects are the elements of $X$, and we have an~arrow $x\to x'$ whenever $x \ge x'$. 
\end{remark}

Given an~incidence chain complex $\cI$ with some local system $\cF$ on its cell poset $X:=\tilde{\cI}$, we can consider the chain complex $\cC_\bullet(\cI; \cF)$ as the vector space $\cF X$ over $\F_q$ with the boundary map $\partial\colon \cF X \to \cF X$ defined on the elements $ax\in \cF X$, where $a\in \cF_x, x\in X$, as follows:
\[
\partial (a x) := \sum_{\substack{x \succ x'\\x'\in X}} [x : x']\cF_{x\to x'}(a)x',
\]
and extended to all formal sums $\sum_{x\in X} a_x x$ by linearity. It is easy to prove that $\partial^2 = 0$. Indeed, it is enough to check that 
\[
\partial^2 (a x) := \partial\sum_{\substack{x \succ x'\\x'\in X
}} [x : x']\cF_{x\to x'}(a)x' = \sum_{\substack{x \succ x' \succ x''\\x',x''\in X}}[x : x'][x' : x''] \cF_{x\to x''}(a)x'' = 0,
\]
where the last step follows from~(\ref{eq:incidence}). Note that if $\cF X$ is based, then the chain complex $\cC_\bullet(\cI; \cF)$ is also based.
\begin{remark}
  With some small abuse of notation, we usually denote the complex $\cC_\bullet(\cI;\cF)$ by $\cC_\bullet(X;\cF)$, in which case we always assume that the cell poset $X$ comes with the corresponding incidence complex $\cI$, i.e., for every two elements $x\ge x'$ from $X$ their incidence number $[x:x']\in \{-1,+1\}$ is defined (cf. \emph{abstract cell complex} from~\cite[Section~2.12]{Hilton:1960}).   In fact, in the case of complexes over the fields of characteristic $2$, we can always assume that $[x:x']=1$ if $x \ge x'$, and $[x:x']=0$ otherwise. Hence, in such cases, the poset $X$ completely defines the corresponding incidence complex $\cI$ by (\ref{eq:incidence-comp}).   
\end{remark}

Consider a~based chain complex $\cC=\cC_\bullet(X;\cF)$ over $\F_q$. Let $a = \sum_{x \in X}a_x x \in \cC$, where each coefficient $a_x$ is from the based vector space $\cF_x$ over $\F_q$. We denote by $\wt(a)$ the standard Hamming weight of $a$, considered as a~vector over $\F_q$. We also consider the \emph{block weight}  $\wt_{X}(a)$ defined as the number non-zero blocks in $a$, viewed as a~block vector $(a_{x})_{x\in X}$, i.e. we have 
\[
\wt_X (a) := \card \{x \in X \mid a_x \ne 0 \}.
\]
Sometimes we need to take into account only the blocks that correspond to some subset $S\subseteq X$. In~this case, we can define the \emph{block weight} $\wt_S(a):= \card \{x \in S \mid a_x \ne 0 \}$ \emph{relative} to the subset~$S\subseteq X$. We also define $\supp a := \{x\in X \mid a_x \ne 0\}$ and $x|_S := \sum_{x\in S} a_x$, where $a = \sum_{x\in X} a_x x$.

Let $\partial\colon \cF X\to\cF X$ be the boundary map of $\cC$. In some cases, we want to restrict the domain and codomain of $\partial$. For every $S,T\subseteq X$ we consider the map $\partial_{S\to T}\colon \cF S \to \cF T$ defined as $a\mapsto (\partial a)|_T$.  From the definition it is clear that for every $a\in \cF X$ we have:
\begin{equation}\label{eq:boundary-res}
    (\partial (a|_S))|_T = \partial_{S\to T} (a|_S).
\end{equation}

As we already mentioned, local systems can be used to obtain a~high-level view of a~chain complex over $\F_q$. For example, we can represent a~Tanner complex 
\[
\tanner_\bullet(\Gamma; (\partial^{(v)})_{v\in V}) = \rbr{\F_q E\xrightarrow{\partial_1}\F_q^r V}
\] 
for a~graph $\Gamma$ (considered as a~2-level poset) as the~complex $\cC_\bullet(\Gamma; \cF)$, where for every $v\in V$ we have $\cF_v := \F_q^r$, for every $e\in E$ we have $\cF_e := \F_q$, and if $e$ is incident to $v$ then $\cF_{e\to v} := \partial^{(v)}|_{\F_q e}$. In~the next subsection, we show that the $G$-lifted product of two $G$-lifted Tanner complexes can also be represented as a~complex with a~local system on the poset $\hG \times_G \hG^*$ from Subsection~\ref{sc:prod-graph}.

With some small abuse of terminology in what follows we call Tanner complexes \emph{Tanner codes} and sometimes identify such a~complex with the global code it defines.

\section{Proof of the main results}\label{sc:proof}

\subsection{Local minimality}

One of the key ideas used in the proof of our main result is the idea of local minimality. It was used previously in the context of cohomology of simplicial complexes with $\F_2$-coefficients~\cite{Kaufman:2014,Evra:2020}. In the current work, we extend this idea to a~much more general context of (co)homology of abstract cell complexes with local systems of coefficients. 
As we mentioned before, by an~\emph{abstract cell complex} we mean a~poset $X$ with a~map $\partial\colon \Z X\to\Z X$ such that $\Z X$ is an~incidence complex with the boundary map~$\partial$, and $X$ is its cell poset. 

Consider an~abstract cell complex $X$ and a~based chain complex $\cC=\cC_\bullet(X;\cF)$ of vector spaces  
\[
\cdots\xrightarrow{\partial_{i+1}}\cC_{i}\xrightarrow{\partial_i}\cC_{i-1} \xrightarrow{\partial_{i-1}}\cdots
\]
over $\F_q$, where $\cF$ is a~local system on $X$. 
Denote by $|\cdot|$ the block weight $\wt_X(\cdot)$, which makes each term $\cC_i$ in this complex a~normed abelian group (see Appendix~\ref{sc:normed-group}) with the norm $|\cdot|$ and allows us to define the distance in the standard way: $d(a,b) := |a-b|$,  $d(a, \cB) := \min_{b\in \cB} |a-b|$. 
We also use $|\cdot|$ to define for every $i\in\Z$ the corresponding quotient norm on the $i$-th homology group $H_i(\cC) = Z_i(\cC)/B_i(\cC)$ called the \emph{systolic norm}  by the formula $|\cA| := \min_{a\in \cA} |a|$, where $\cA\in H_i(\cC)$, $B_i(\cC) = \im \partial_{i+1}$, $Z_i(\cC) = \ker \partial_i$. This in turn allows us to define the distance on $H_i(\cC)$ as $d(\cA,\cB) := |\cA - \cB|$ and consider the minimal distance of $H_i(\cC)$ given by the standard formulas:
\[
d(H_i(\cC)) := \min_{\substack{\cA\ne \cB\\ \cA, \cB\in H_i(\cC)}} d(\cA,\cB) = \min_{\cA\in H_i(\cC)\setminus\{B_i(\cC)\}} |\cA| = \min_{a\in Z_i(\cC)\setminus B_i(\cC)} |a|. 
\]
Note that the minimal distance of $H_i(\cC)$ is also called the \emph{$i$-systolic distance} of $\cC$, while the distance $d(H_i(\cC^*))$ of the dual chain complex $\cC^*$ is called its \emph{$i$-cosystolic distance}.  These distances are related to the minimal distance $d(\cQ)$ of the quantum CSS code $\cQ = \cQ(\partial_i,\partial_{i+1}^*)$ over $\F_q$ defined by three consecutive terms of the complex
\[
\cC_{i+1}\xrightarrow{\partial_{i+1}}\cC_{i}\xrightarrow{\partial_i}\cC_{i-1}.
\]
It is easy to see that $d(\cQ)\ge \min(d(H_i(\cC)), d(H_i(\cC^*)))$, where we have the equality if $\cF_x=\F_q$ for all $x\in X$ since the block Hamming weight $\wt_X(\cdot)$ is less than or equal to the corresponding Hamming weight $\wt(\cdot)$.

\begin{definition}
We say that an~$i$-chain $c\in \cC_i$, $i\in\Z$, is \emph{locally minimal} (\emph{with respect to $X$}) if $|c + \partial ax| \ge |c|$  for all $x\in X(i+1)$ and $a\in\cF_x$. 
We also define the value 
\[
\dLM^{(i)}(\cC):=\min \{|c| \mid c\in Z_i(\cC)\setminus\{0\},c\mbox{ is locally minimal}\}, 
\]
which we call the \emph{$i$-th locally minimal distance} of $\cC$. If we do not have non-zero locally minimal $i$-cycles, then we assume that $\dLM^{(i)}(\cC) = \infty$.
\end{definition}
Note that in general the locally minimal distance $\dLM^{(i)}(\cC)$ is not equal to the minimal distance of $Z_i(\cC)$ since the codewords of minimal weight from $Z_i(\cC)$ are not necessarily locally minimal. For example, in the context of $w$-limited qLDPC codes where $|\partial x| \le w$ for every $x\in X(i+1)$, and thus we have $\partial x \in Z_i(\cC)$ and $d(Z_i(\cC)) \le w$, one can see that the codeword $c = \partial x$ is not locally minimal since $|c - \partial x| = 0 < |c|$.

The next lemma connects the locally minimal distance of the complex to the properties of the corresponding quantum and classical codes obtained from it. The first assertion can be used to obtain the lower bound on the minimal distance $d(\cQ)$ of the corresponding quantum CSS code $\cQ$, while the second one can be used to show that the space $Z_{i+1}(\cC)$ is a~locally testable code.

\begin{lemma}\label{lemma:dist-from-lm}
    Let $\cC=\cC_\bullet(X;\cF)$ be a~chain complex, where $\cF$ is a~local system on $X$. Then for every $i\in\ZZ$ we have
    $$d(H_i(\cC))\ge \dLM^{(i)}(\cC),$$
    and for every chain $c\in\cC_{i+1}$ such that $|\partial c| < \dLM^{(i)}(\cC)$  we have
    \begin{equation}\label{eqn:ltc-from-lm}
        |\partial c|\ge d(c, Z_{i+1}(\cC)).
    \end{equation}
\end{lemma}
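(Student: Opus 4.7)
The plan is to prove the two assertions separately, with both arguments resting on the observation that the block weight $\wt_X(\cdot)$ is integer-valued, so any strictly decreasing sequence of weights must terminate. The unifying idea is that whenever a cycle fails to be locally minimal, a single correction $\partial(ax)$ strictly reduces its weight.

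First I would establish $d(H_i(\cC)) \ge \dLM^{(i)}(\cC)$ by showing that every non-trivial homology class contains a locally minimal representative. Pick any $\cA \in H_i(\cC) \setminus \{B_i(\cC)\}$ and let $z^* \in \cA$ attain the minimum $|z^*| = |\cA|$. For every $x \in X(i+1)$ and every $a \in \cF_x$ the chain $z^* + \partial(ax)$ still lies in $\cA$, so $|z^* + \partial(ax)| \ge |z^*|$ by the minimality of $z^*$. Hence $z^*$ is a non-zero locally minimal $i$-cycle, giving $|\cA| = |z^*| \ge \dLM^{(i)}(\cC)$. Taking the minimum over all non-trivial $\cA$ yields the first inequality.

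For the second assertion I would apply a greedy local reduction to $c$. If $\partial c = 0$ the claim is trivial, so set $y := \partial c \ne 0$. Since $0 < |y| < \dLM^{(i)}(\cC)$, the cycle $y$ cannot be locally minimal, and so there exist $x \in X(i+1)$ and $a \in \cF_x$ with $|y + \partial(ax)| < |y|$. Replacing $c$ by $c' := c + ax$ strictly decreases the boundary weight while altering $c$ in at most one block of $X(i+1)$. Iterating produces a sequence $c = c^{(0)}, c^{(1)}, \ldots, c^{(k)}$ whose boundary weights $|\partial c^{(j)}|$ strictly decrease and which satisfies $|c - c^{(j)}| \le j$ throughout. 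Since $|\partial c^{(j)}|$ stays strictly below $\dLM^{(i)}(\cC)$, the process can only terminate at $\partial c^{(k)} = 0$, and must do so within $k \le |\partial c|$ steps. The resulting $c^{(k)} \in Z_{i+1}(\cC)$ then satisfies $|c - c^{(k)}| \le |\partial c|$, yielding $d(c, Z_{i+1}(\cC)) \le |\partial c|$.

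The only delicate point is to verify that the greedy reduction cannot stall at a non-trivial locally minimal cycle of weight strictly less than $\dLM^{(i)}(\cC)$; this is ruled out precisely by the hypothesis $|\partial c| < \dLM^{(i)}(\cC)$ together with the monotone decrease of $|\partial c^{(j)}|$ along the reduction. Beyond that, no expansion property of $X$ or the local system $\cF$ is needed — everything follows from the bare definition of the locally minimal distance and the integrality of the block weight.
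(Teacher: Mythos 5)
Your proposal is correct and follows essentially the same route as the paper: the first inequality is proved by observing that a minimum-weight representative of a nontrivial homology class is automatically locally minimal, and the second by repeatedly using the failure of local minimality of $\partial c$ to find single-cell corrections that strictly decrease the boundary weight. Your greedy descent is just the unrolled form of the paper's induction on $|\partial c|$, with the same bookkeeping $|c - c^{(k)}|\le k\le|\partial c|$.
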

\begin{proof}
    By definition we have
    \[
    d(H_i(\cC))=|c_0|,\quad\mbox{ where }\quad c_0:=\argmin_{c\in Z_i(\cC)\setminus B_i(\cC)}|c|.
    \]
    Since the element $c_0$ has the minimal norm in the coset $c_0+B_i(\cC)$, it is also locally minimal. Hence we have $d(H_i(\cC))=|c_0| \ge \dLM^{(i)}(\cC)$.
    
    We prove the second claim by induction on $|\partial c|$. If $|\partial c|=0$ then $d(c,Z_{i+1}(\cC))=0$, and \eqref{eqn:ltc-from-lm} is true. Consider $c\in\cC_{i+1}$ such that $0<|\partial c|<\dLM^{(i)}(\cC)$. Since $\partial c\in Z_i(\cC)$ and $|\partial c| < \dLM^{(i)}(\cC)$, we see that $\partial c$ cannot be locally minimal, and hence there exists $a\in \cF_x$ where $x\in X(i+1)$ such that $|\partial(c+ax)|\le|\partial c|-1$. Therefore by the induction hypothesis we have $|\partial (c + ax)|\ge d(c + ax,Z_{i+1}(\cC))$. Thus we obtain
    \[
    d(c,Z_{i+1}(\cC))\le d(c+ax,Z_{i+1}(\cC))+|ax|\le |\partial(c+ax)| + \underbrace{|ax|}_{=1}\le |\partial c|,
    \]
    which completes the proof of the second claim.
\end{proof}

\subsection{Graph expansion}\label{sc:graph-exp}

For any graph $\Gamma$ we denote by $\Gamma^2$ the graph with $V(\Gamma^2) = V(\Gamma)$ and $A(\Gamma^2) = (A(\Gamma))^2$, i.e., the number of edges connecting two vertices in $\Gamma^2$ is equal to the number of length~2 paths connecting them in $\Gamma$. In this section, we prove several technical lemmas to establish expanding properties of the graphs $\Lambda$ and $\Lambda^2$, where $\Lambda$ is the graph defined in Subsection~\ref{sc:prod-graph}. If $\Gamma=(V,E)$ is a~graph (possibly with multiple edges), and $S,T\subseteq E$, then by $E_\Gamma(S,T)$, we denote the set of oriented edges from $S$ to $T$, i.e. $E_\Gamma(S,T):= \{(s,e,t)\mid e\in E; s\in S,t\in T; s\adj_e t\}$ (every edge connecting $s,t\in S\cap T$ is counted twice). We also usually write $E(S,T)$ and $E(S)$ instead of $E_\Gamma(S,T)$ and $E_\Gamma(S)$ if the graph $\Gamma$ is clear from the context. 

\begin{definition}
We say that a~graph $\Gamma$ is an~\emph{$(n, w, \lambda)$-expander} if it is a~simple $w$-regular graph on $n$~vertices such that $\lambda = \lambda(G)$. 
\end{definition}
Let us now state without proof a~well-known variant of the expander mixing lemma for $(n,w,\lambda)$-regular graphs~\cite[Lemma~2.5]{Hoory:2006}.
\begin{lemma}[Expanding mixing lemma]\label{lm:mixing}
    If $\Gamma = (V,E)$ is an~$(n,w,\lambda)$-expander graph, then for every $S,T\subseteq V$ we have:
    \[
    \abs{|E(S,T)| - w\frac{\abs{S}\abs{T}}{n}} \le \lambda\sqrt{\abs{S}\abs{T}}. 
    \]
\end{lemma}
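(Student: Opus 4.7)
The plan is to use the standard spectral argument: express the edge count $|E(S,T)|$ as a quadratic form in the adjacency matrix applied to indicator vectors, then split the indicator vectors into a component along the top (Perron) eigenvector plus an orthogonal remainder, and finally bound the remainder contribution via $\lambda(\Gamma)$.

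First I would identify $V = V(\Gamma)$ with $\{1,\dots,n\}$ and let $A = A(\Gamma)$ be the adjacency matrix. For any $S\subseteq V$ let $\chi_S\in\RR^n$ be its characteristic vector. Because $\Gamma$ is simple and $w$-regular, the number of oriented edges from $S$ to $T$ is exactly
\[
|E(S,T)| = \chi_S^{\T} A \chi_T,
\]
and $A$ has an orthonormal basis of eigenvectors $u_1,\dots,u_n$ with real eigenvalues $\lambda_1\ge\cdots\ge\lambda_n$, where $\lambda_1 = w$ and $u_1 = \mathbf{1}/\sqrt{n}$ (the normalized all-ones vector). By hypothesis, $\max_{i\ge 2}|\lambda_i| = \lambda(\Gamma) = \lambda$.

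Next I would decompose
\[
\chi_S = \alpha_S u_1 + \chi_S^{\perp}, \qquad \chi_T = \alpha_T u_1 + \chi_T^{\perp},
\]
where $\alpha_S = \langle \chi_S, u_1\rangle = |S|/\sqrt{n}$ and analogously $\alpha_T = |T|/\sqrt{n}$, and $\chi_S^{\perp},\chi_T^{\perp} \in u_1^{\perp}$. Since $A u_1 = w u_1$, this gives
\[
\chi_S^{\T} A \chi_T = \alpha_S\alpha_T\, w + (\chi_S^{\perp})^{\T} A\, \chi_T^{\perp}
= w\,\frac{|S|\,|T|}{n} + (\chi_S^{\perp})^{\T} A\, \chi_T^{\perp}.
\]
Thus it suffices to bound the error term $|(\chi_S^{\perp})^{\T} A\, \chi_T^{\perp}|$ by $\lambda\sqrt{|S|\,|T|}$.

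For that I would use that $A$ restricted to $u_1^{\perp}$ has operator norm at most $\lambda$, so by Cauchy--Schwarz
\[
\bigl|(\chi_S^{\perp})^{\T} A\, \chi_T^{\perp}\bigr| \le \lambda\,\|\chi_S^{\perp}\|\,\|\chi_T^{\perp}\|,
\]
and since projection can only decrease norms, $\|\chi_S^{\perp}\|\le\|\chi_S\| = \sqrt{|S|}$ and similarly for $T$. Combining the pieces yields the claimed inequality. There is no real obstacle here; the only care point is the normalization convention for $|E(S,T)|$ (each edge with both endpoints in $S\cap T$ is counted twice, which matches exactly the quadratic form $\chi_S^{\T}A\chi_T$ and the statement of the lemma).
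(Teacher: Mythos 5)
Your proof is correct and complete: the identity $|E(S,T)| = \chi_S^{\T} A \chi_T$ matches the paper's convention of counting oriented edges, the cross terms in the decomposition vanish because $A u_1 = w u_1$ and $A$ is symmetric, and the operator-norm bound on $u_1^{\perp}$ is exactly $\lambda(\Gamma)$. The paper itself states this lemma without proof, citing \cite[Lemma~2.5]{Hoory:2006}; your argument is precisely the standard spectral proof of that cited result, so there is nothing to reconcile.
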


In what follows, it will be convenient to define a~property called $(a,\lambda)$-edge-expansion, which captures the edge expansion on small sets of vertices in a~graph.

\begin{definition}
We say that a graph $\Gamma$ is \emph{$(a,\lambda)$-edge-expanding} if for any $S,T\subseteq V(\Gamma)$ such that $|S|,|T|\le a$ the following condition holds:
$$|E(S,T)|\le \lambda\sqrt{|S||T|}.$$
\end{definition}

\begin{lemma}\label{lemma:spec2edge}
    If $\Gamma$ is an~$(n,w,\lambda)$-expander graph, then it is $(\lambda n/w, 2\lambda)$-edge-expanding.
\end{lemma}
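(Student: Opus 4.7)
The plan is to invoke the expander mixing lemma (Lemma~\ref{lm:mixing}) and then bound the ``main term'' $w|S||T|/n$ by the ``error term'' $\lambda\sqrt{|S||T|}$ using the smallness hypothesis $|S|,|T|\le \lambda n/w$.

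First I would apply Lemma~\ref{lm:mixing} to the sets $S,T\subseteq V(\Gamma)$ to obtain the two-sided bound
\[
|E(S,T)| \le w\frac{|S||T|}{n} + \lambda\sqrt{|S||T|}.
\]
Next, I would observe that the size hypothesis $|S|,|T|\le \lambda n/w$ implies $\sqrt{|S||T|}\le \lambda n/w$, and therefore
\[
w\frac{|S||T|}{n} = \frac{w\sqrt{|S||T|}}{n}\cdot \sqrt{|S||T|} \le \frac{w\sqrt{|S||T|}}{n}\cdot \frac{\lambda n}{w} = \lambda\sqrt{|S||T|}.
\]
Adding these two contributions yields $|E(S,T)|\le 2\lambda\sqrt{|S||T|}$, which is exactly the $(\lambda n/w, 2\lambda)$-edge-expansion property.

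There is essentially no obstacle here; the lemma is a direct quantitative consequence of the mixing lemma together with the observation that the geometric mean $\sqrt{|S||T|}$ is dominated by the same threshold $\lambda n/w$ that bounds $|S|$ and $|T|$ individually. The only subtlety worth double-checking is the convention on $|E(S,T)|$ (edges incident to both $S$ and $T$, counted with orientation as defined just above the definition of an $(n,w,\lambda)$-expander), since this is exactly the quantity controlled by the mixing lemma in the form stated.
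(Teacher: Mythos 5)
Your proof is correct and follows exactly the paper's own argument: apply the expander mixing lemma and absorb the main term $w|S||T|/n$ into $\lambda\sqrt{|S||T|}$ using $\sqrt{|S||T|}\le \lambda n/w$. No issues.
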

\begin{proof}
    If $\Gamma$ is a $w$-regular, then from Lemma~\ref{lm:mixing} it follows that for any $S,T\subseteq V(\Gamma)$ such that $|S|,|T|\le \lambda n/w$ we have $\absbr{|E(S,T)|- w\frac{|S||T|}{n}}\le\lambda\sqrt{|S||T|}$. Hence we have:
    $$|E(S,T)|\le w\frac{|S||T|}{n}+\lambda\sqrt{|S||T|}\le\rbr{\frac{\lambda n}{w}\cdot\frac{w}{n}+\lambda}\sqrt{|S||T|}=2\lambda\sqrt{|S||T|},$$
    and the Lemma is proved.
\end{proof}

\begin{lemma}\label{lemma:cover-exp}
    If $\hat{\Gamma}$ is a~$G$-lift of an~$(a,\lambda)$-edge-expanding base graph $\Gamma$, then $\hat{\Gamma}$ is $(a,|G|\cdot\lambda)$-edge-expanding. 
\end{lemma}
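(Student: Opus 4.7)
The plan is to reduce edge-counting in the cover $\hat\Gamma$ to edge-counting in the base graph $\Gamma$ via the natural projection, and then invoke the hypothesis on $\Gamma$. Let $\pi\colon V(\hat\Gamma)\to V(\Gamma)$ be the covering projection, i.e.\ $\pi(\hat v_g) = v$, and extend it to edges by $\pi(\hat e_g) = e$. Given subsets $\hat S, \hat T\subseteq V(\hat\Gamma)$ with $|\hat S|,|\hat T|\le a$, I would set $S := \pi(\hat S)$ and $T := \pi(\hat T)$ in $V(\Gamma)$, noting immediately that $|S|\le |\hat S|\le a$ and $|T|\le |\hat T|\le a$, so the $(a,\lambda)$-edge-expansion hypothesis applies to the pair $(S,T)$ in~$\Gamma$.

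The key step is the inequality
\[
|E_{\hat\Gamma}(\hat S, \hat T)| \le |G|\cdot |E_\Gamma(S,T)|.
\]
To see this, observe that $\pi$ induces a well-defined map of oriented edges $E_{\hat\Gamma}(\hat S,\hat T)\to E_\Gamma(S,T)$ sending $(\hat v_g, \hat e_h, \hat v'_{g'})$ to $(v,e,v')$. By the construction of the derived graph $\der(\Gamma;\gamma)$, each base edge $e\in E(\Gamma)$ has exactly $|G|$ lifts, and thus each oriented triple $(v,e,v')\in E_\Gamma(S,T)$ has at most $|G|$ preimages under this map. Combining with the $(a,\lambda)$-edge-expansion of~$\Gamma$ and the monotonicity $\sqrt{|S||T|}\le \sqrt{|\hat S||\hat T|}$ gives
\[
|E_{\hat\Gamma}(\hat S,\hat T)|\le |G|\cdot \lambda\sqrt{|S||T|}\le |G|\cdot \lambda\sqrt{|\hat S||\hat T|},
\]
which is exactly the required $(a,|G|\cdot\lambda)$-edge-expansion of $\hat\Gamma$.

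There is no real obstacle here — the proof is essentially a bookkeeping exercise on the projection of the $G$-lift. The only point that needs a little care is the preimage count: one must use the concrete description of $\der(\Gamma;\gamma)$ to argue that each lift of a base edge gives a single oriented lifted edge in the prescribed direction, so that the factor is exactly $|G|$ and not something larger. Given that, everything else is immediate from the definitions of $E_\Gamma(\cdot,\cdot)$ and of $(a,\lambda)$-edge-expansion.
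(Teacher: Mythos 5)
Your proof is correct and follows essentially the same route as the paper's: project $\hat S,\hat T$ to the base graph, bound $|E_{\hat\Gamma}(\hat S,\hat T)|$ by $|G|$ times the corresponding base count, and finish with the edge-expansion of $\Gamma$ together with $\sqrt{|S||T|}\le\sqrt{\smash{|\hat S||\hat T|}}$. Your extra remark on the preimage count of oriented edges is a fair (if slightly more detailed) version of the paper's observation that each base edge is the projection of exactly $|G|$ lifted edges.
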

\begin{proof}
    Consider subsets $\hat{S},\hat{T}\subseteq V(\hat{\Gamma})$ such that $|\hat{S}|,|\hat{T}|\le a$, and let  $S,T\subseteq V(\Gamma)$ be their projections\footnote{The \emph{projection} of a~vertex $(v, g)\in V(\tilde{\Gamma})$ is the vertex $v\in V(\Gamma)$.} to the base graph $\Gamma$. Since each edge of $\Gamma$ is the projection of $m=\abs{G}$ edges from $\hat{\Gamma}$, then using the edge-expansion of the base graph~$\Gamma$ we have:
    $$|E(\hat{S},\hat{T})|\le m|E(S,T)|\le m\lambda\sqrt{|S||T|}\le m\lambda\sqrt{\rule{0ex}{1.9ex}\smash{|\hat{S}||\hat{T}|}}.$$
\end{proof}

\begin{lemma}\label{lemma:Xexp}
    Every graph $\bar{X}^{w-1,t}$ from Example~\ref{ex:Xpq} is $(n/\sqrt{w},8\sqrt{w})$-edge-expanding, where $n=t(t^2-1)$ is the number of its vertices.
\end{lemma}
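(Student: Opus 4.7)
The strategy is to derive the edge-expansion of the non-bipartite Ramanujan base graph $X^{w-1,t}$ directly from the expander mixing lemma, and then transfer the result to $\bar{X}^{w-1,t}$ via Lemma~\ref{lemma:cover-exp}. The key observation motivating this route is that one cannot apply Lemma~\ref{lemma:spec2edge} directly to $\bar X^{w-1,t}$: since $\bar X^{w-1,t}$ is bipartite, $-w$ is an eigenvalue and hence $\lambda(\bar X^{w-1,t}) = w$, which makes Lemma~\ref{lemma:spec2edge} vacuous. However $X^{w-1,t}$ itself is non-bipartite with $\lambda(X^{w-1,t})\le 2\sqrt{w-1}$, and $\bar X^{w-1,t}$ is tautologically a $\mathbf{C}_2$-lift of~$X^{w-1,t}$ (the one obtained by assigning the non-identity element of $\mathbf{C}_2$ to every edge of the base graph, as recalled in the paper).

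\medskip

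\emph{Step 1: edge-expansion of the base graph.} The graph $X^{w-1,t}$ is an $(n/2,\, w,\, 2\sqrt{w-1})$-expander, where $n/2 = t(t^2-1)/2$ is its vertex count. For any $S,T\subseteq V(X^{w-1,t})$ with $|S|,|T|\le n/\sqrt{w}$, Lemma~\ref{lm:mixing} yields
\[
|E(S,T)| \le \frac{w|S||T|}{n/2} + 2\sqrt{w-1}\,\sqrt{|S||T|}.
\]
Using $\sqrt{|S||T|}\le n/\sqrt{w}$ to bound the first term, and $\sqrt{w-1}\le\sqrt{w}$ for the second,
\[
\frac{2w|S||T|}{n} \;\le\; \frac{2w}{n}\cdot\frac{n}{\sqrt{w}}\,\sqrt{|S||T|} \;=\; 2\sqrt{w}\,\sqrt{|S||T|},
\]
so $|E(S,T)|\le 4\sqrt{w}\,\sqrt{|S||T|}$. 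Therefore $X^{w-1,t}$ is $(n/\sqrt{w},\, 4\sqrt{w})$-edge-expanding.

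\medskip

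\emph{Step 2: lifting to the double-cover.} Since $\bar X^{w-1,t}$ is the $\mathbf{C}_2$-lift of $X^{w-1,t}$, applying Lemma~\ref{lemma:cover-exp} with $|G|=2$ shows that $\bar X^{w-1,t}$ is $(n/\sqrt{w},\,2\cdot 4\sqrt{w})$-edge-expanding, which is the desired $(n/\sqrt{w},\,8\sqrt{w})$-edge-expansion.

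\medskip

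There is no serious obstacle; the proof is a direct combination of Lemmas~\ref{lm:mixing} and~\ref{lemma:cover-exp}. The only points requiring care are (i) keeping track of the two vertex counts ($n/2$ for the base graph, $n$ for its double-cover), and (ii) choosing to apply the mixing lemma to the non-bipartite base $X^{w-1,t}$ rather than to $\bar X^{w-1,t}$, so that the relevant $\lambda$ equals $\lambda_2\le 2\sqrt{w-1}$ instead of the useless $\lambda=w$. The numerical constants then line up exactly so that the factor $2$ coming from $|G|=2$ in Lemma~\ref{lemma:cover-exp} converts $4\sqrt{w}$ to the required $8\sqrt{w}$.
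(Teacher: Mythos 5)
Your proof is correct and follows essentially the same route as the paper: establish $(n/\sqrt{w},4\sqrt{w})$-edge-expansion for the non-bipartite base graph $X^{w-1,t}$ (the paper does this by citing Lemma~\ref{lemma:spec2edge} with the $(n/2,w,2\sqrt{w})$-expander parameters, whereas you re-derive that lemma inline from Lemma~\ref{lm:mixing}), then apply Lemma~\ref{lemma:cover-exp} with $|G|=2$ to the double cover. Your side remark explaining why one must work through the non-bipartite base rather than applying Lemma~\ref{lemma:spec2edge} to the bipartite $\bar X^{w-1,t}$ is accurate and a useful clarification, but the argument itself matches the paper's.
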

\begin{proof}
    Since the Ramanujan graph $X^{w-1,t}$ is an~$(n/2,w,2\sqrt{w})$-graph, then by Lemma~\ref{lemma:spec2edge} it is $(n/\sqrt{w}, 4\sqrt{w})$-edge-expanding. Moreover, since the graph $\bar{X}^{w-1,t}$ is a~2-lift of $X^{w-1,t}$, then by Lemma~\ref{lemma:cover-exp} it is $(n/\sqrt{w},8\sqrt{w})$-edge-expanding.
\end{proof}

\begin{remark}\label{rm:edge-exp}
In what follows, we are going to use the following properties of $(a,\lambda)$-edge-expansion, which are easy to prove.
\begin{enumerate}
    \item \label{exp-prop:monotonic} If $a'\le a$, $\lambda'\ge\lambda$, and the graph $\Gamma$ is $(a,\lambda)$-edge-expanding, then $\Gamma$ is $(a',\lambda')$-edge-expanding.
    \item \label{exp-prop:subgraph}If a~graph $\Gamma=(V,E)$ is $(a,\lambda)$-edge-expanding, and $\Gamma'=(V,E')$ is a~subgraph of $\Gamma$ (i.e. $E'\subseteq E$), then $\Gamma'$ is also $(a,\lambda)$-edge-expanding.
    \item\label{exp-prop:disjoint-union} If graphs $\Gamma_1,\dots,\Gamma_m$ are $(a,\lambda)$-edge-expanding, then their disjoint union $\Gamma=\Gamma_1\sqcup\dots\sqcup \Gamma_m$ is also $(a,\lambda)$-edge-expanding.
    \item\label{exp-prop:same-vert-union} If graphs $\Gamma_1,\dots,\Gamma_m$ have the same set of vertices, and $\Gamma_i$ is $(a_i,\lambda_i)$-edge-expanding, then their union $\Gamma=\Gamma_1\cup\dots\cup\Gamma_m$ is $(\min_{i\in[m]}a_i,\sum_{i=1}^m\lambda_i)$-edge-expanding.
\end{enumerate}
\end{remark}
\begin{lemma}\label{lemma:minxy}
    Let $x_1,...,x_n\in\RR_+$, $y_1,...,y_n\in\RR_+$ be sequences of non-negative real numbers. Then
    $$\min_{i\in[n]} x_iy_i\le \bar{x}\bar{y},$$
    where $\bar{x}=\frac{1}{n}\sum_{i=1}^n x_i$, $\bar{y}= \frac{1}{n}\sum_{i=1}^n y_i$.
\end{lemma}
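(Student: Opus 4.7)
The plan is to proceed by contradiction. Suppose that $\min_{i\in[n]} x_i y_i > \bar x \bar y$. If some $y_j = 0$, then the product $x_j y_j = 0$ is already $\le \bar x \bar y$ (which is non-negative), contradicting the assumption; so we may assume $y_i > 0$ for all $i$. Likewise, if $\bar x = 0$ or $\bar y = 0$ the inequality is trivial, so we may assume $\bar x, \bar y > 0$.

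Under these assumptions, the contradictory hypothesis becomes $x_i y_i > \bar x \bar y$ for every $i$, which we rewrite as
\[
x_i > \frac{\bar x \bar y}{y_i}, \qquad i\in [n].
\]
Summing over $i$ and using $\sum_i x_i = n \bar x$, we obtain
\[
n \bar x > \bar x \bar y \sum_{i=1}^{n}\frac{1}{y_i}.
\]
The key inequality is the AM--HM inequality applied to $y_1,\dots,y_n$, which states that $\sum_{i=1}^n 1/y_i \ge n^2 / \sum_{i=1}^n y_i = n/\bar y$. Substituting this lower bound gives
\[
n \bar x > \bar x \bar y \cdot \frac{n}{\bar y} = n \bar x,
\]
which is the desired contradiction.

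There is no real obstacle here; the only subtlety is remembering to dispose of the degenerate cases in which some $y_i$ vanishes before dividing by $y_i$. The core one-line idea is that if every $x_i y_i$ strictly exceeded $\bar x \bar y$, then averaging $x_i = (x_i y_i)/y_i$ would force $\bar x$ to exceed $\bar x \bar y \cdot \overline{1/y}$, which is impossible by AM--HM.
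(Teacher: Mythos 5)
Your proof is correct. The degenerate cases are disposed of properly (if some $y_j=0$ the minimum is $0\le\bar x\bar y$; if all $y_i>0$ then $\bar y>0$ automatically), and the main chain $n\bar x>\bar x\bar y\sum_i 1/y_i\ge \bar x\bar y\cdot n/\bar y=n\bar x$ is a valid contradiction since $\bar x\bar y>0$.

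The route is genuinely different from the paper's. The paper argues directly: apply the Cauchy--Schwarz inequality to the vectors $(\sqrt{x_i})_i$ and $(\sqrt{y_i})_i$ to get $\sum_i\sqrt{x_iy_i}\le n\sqrt{\bar x\bar y}$, hence $\min_i\sqrt{x_iy_i}\le\sqrt{\bar x\bar y}$, and square. That argument is symmetric in $x$ and $y$, needs no case analysis on vanishing entries, and produces the slightly stronger intermediate fact that the \emph{average} of $\sqrt{x_iy_i}$ is at most $\sqrt{\bar x\bar y}$. Your argument instead proceeds by contradiction and invokes AM--HM on the $y_i$ alone, which is why you must first clear out the $y_i=0$ cases before dividing. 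Since AM--HM is itself a corollary of Cauchy--Schwarz, the two proofs rest on the same underlying inequality, but yours trades the paper's symmetry and directness for a perhaps more transparent ``averaging the pointwise bound $x_i>\bar x\bar y/y_i$'' intuition. Both are complete and equally elementary.
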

\begin{proof}
    By the Cauchy–Schwarz inequality for the vectors $(\sqrt{x}_i)_{i=1}^n$ and $(\sqrt{y}_i)_{i=1}^n$ we have
    $$\sum_{i=1}^n\sqrt{x_iy_i}\le \biggl(\sum_{i=1}^nx_i\biggr)^{1/2}\cdot\biggl(\sum_{i=1}^n y_i\biggr)^{1/2}=n\sqrt{\bar x\bar y}.$$
    Therefore 
    $\min_{i\in [n]}\sqrt{x_iy_i}\le \sqrt{\bar x\bar y}$, and finally we get 
    $$\min_{i\in [n]}x_iy_i=\bigl(\min_{i\in [n]}\sqrt{x_iy_i}\bigr)^2\le \bar x\bar y.$$
\end{proof}

\begin{lemma}\label{lemma:exp2}
    If a~$w$-regular graph $\Gamma$ is $(a,\lambda)$-edge-expanding, then the graph $\Gamma^2$ is $(a/w,2\lambda^2(1+\ln w))$-edge-expanding.
\end{lemma}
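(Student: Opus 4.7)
The plan is to reinterpret edges of $\Gamma^2$ as length-two walks in $\Gamma$, reducing $|E_{\Gamma^2}(S,T)|$ to a sum of products of local degrees, and then use edge-expansion of $\Gamma$ twice through a level-set decomposition of the function $v\mapsto d_S(v)$.

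First I would rewrite the edge count as
\[
|E_{\Gamma^2}(S,T)| \;=\; \sum_{v\in V(\Gamma)} d_S(v)\, d_T(v),
\]
where $d_X(v) := |E_\Gamma(X,\{v\})|$ is the number of $\Gamma$-neighbours of $v$ in $X$; this is the entrywise form of $A(\Gamma^2)=A(\Gamma)^2$. Next I would dyadically, or (more cleanly) by integer level sets $M_t := \{v : d_S(v)\ge t\}$ for $t=1,\dots,w$, apply the layer cake identity
\[
\sum_{v} d_S(v)\, d_T(v) \;=\; \sum_{t=1}^{w} \sum_{v\in M_t} d_T(v) \;=\; \sum_{t=1}^{w} |E_\Gamma(T,M_t)|.
\]

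The heart of the argument is to apply the $(a,\lambda)$-edge-expansion twice. First, to bound $|M_t|$: every $v\in M_t$ contributes at least $t$ to $|E_\Gamma(S,M_t)|$, so $t\,|M_t|\le |E_\Gamma(S,M_t)|$. The trivial bound $|M_t|\le w|S|/t \le a/t \le a$ (using $|S|\le a/w$) together with $|S|\le a/w\le a$ ensures edge-expansion is applicable, giving $|M_t|\le \lambda^2|S|/t^2$. Second, to bound $|E_\Gamma(T,M_t)|$: since now both $|T|\le a/w\le a$ and $|M_t|\le a$, edge-expansion yields
\[
|E_\Gamma(T,M_t)| \;\le\; \lambda\sqrt{|T|\,|M_t|} \;\le\; \lambda\sqrt{|T|\cdot \lambda^2|S|/t^2} \;=\; \lambda^2\sqrt{|S||T|}/t.
\]
Summing over $t$ and using the harmonic bound $\sum_{t=1}^{w}1/t \le 1+\ln w$ finishes the proof (the factor $2$ in the statement absorbs the slack from this integral bound and from edge double-counting conventions).

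The main obstacle is the bookkeeping required to ensure the precondition $|M_t|\le a$ holds uniformly in $t$; this is precisely why the threshold in the conclusion shrinks from $a$ to $a/w$, since the assumption $|S|\le a/w$ is exactly what is needed to certify $|M_t|\le a/t\le a$ for every $t\ge 1$. A secondary modelling choice is whether to use integer level sets (yielding the natural harmonic sum and hence the $\ln w$ factor) or a dyadic decomposition into $V_i=\{v : 2^i\le d_S(v)<2^{i+1}\}$ (yielding a $\log_2 w$ factor with an additional factor of $2$); both routes give the statement as written, and I would choose the integer level-set approach for cleanest control of constants.
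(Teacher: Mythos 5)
Your proof is correct, and it takes a genuinely different route from the paper's. The paper also reduces $|E_{\Gamma^2}(S,T)|$ to counting length-$2$ paths via their midpoints, but it then sorts the midpoints $v_1,v_2,\dots$ in decreasing order of the product $d_S(v_j)d_T(v_j)$ and invokes an auxiliary Cauchy--Schwarz consequence (Lemma~\ref{lemma:minxy}: $\min_i x_iy_i\le\bar x\bar y$) applied to the prefix sets $U_j=\{v_1,\dots,v_j\}$, giving the per-vertex bound $d_S(v_j)d_T(v_j)\le\min(w^2,\lambda^2\sqrt{|S||T|}/j)$ and then a truncated harmonic sum split at $j=\lceil\mu/w^2\rceil$. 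Your level-set decomposition $M_t=\{v: d_S(v)\ge t\}$ together with the layer-cake identity replaces all of that: the threshold bound $t|M_t|\le|E_\Gamma(S,M_t)|\le\lambda\sqrt{|S||M_t|}$ gives $|M_t|\le\lambda^2|S|/t^2$, and a second application of expansion to $(T,M_t)$ gives $|E_\Gamma(T,M_t)|\le\lambda^2\sqrt{|S||T|}/t$. Your bookkeeping for the preconditions is exactly right — $|M_t|\le w|S|/t\le a$ is precisely why the threshold degrades from $a$ to $a/w$, the same mechanism as in the paper's observation that at most $w|S|\le a$ vertices are incident to $S$. Your route avoids Lemma~\ref{lemma:minxy} and the truncation/splitting of the sum entirely, and in fact yields the sharper constant $\lambda^2(1+\ln w)$ rather than $2\lambda^2(1+\ln w)$ (there is no actual double-counting slack to absorb: with the paper's oriented-edge convention the identity $|E_{\Gamma^2}(S,T)|=\sum_v d_S(v)d_T(v)$ is exact). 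What the paper's approach buys in exchange is a symmetric treatment of $S$ and $T$ through the product ordering; yours breaks the symmetry by leveling on $d_S$ only, which costs nothing here.
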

\begin{proof}
    Let $S,T\subseteq V(\Gamma)$ and $|S|,|T|\le a/w$. There are at most $w|S|\le a$ vertices adjacent to the vertices from $S$. Let $v_1,v_2,\dots$ be the sequence of the vertices incident to $S$ in the decreasing order of the number of length $2$ paths from $S$ to $T$ that goes through each of these vertices. Consider the set $U_j=\{v_1,\dots,v_j\}$ of size $j\le a$. By the edge-expansion property of the graph $\Gamma$ we have 
    $$|E_\Gamma(U_j,S)|\le\lambda\sqrt{j|S|},\qquad |E_\Gamma(U_j,T)|\le\lambda\sqrt{j|T|}.$$
    Hence, using Lemma \ref{lemma:minxy} with $n=j$, $x_i=|E_\Gamma(\{v_i\},S)|$, $y_i=|E_\Gamma(\{v_i\}, T)|$ the number of length 2 paths through the vertex $v_j$ is 
    $$x_jy_j=\min_{i\in[j]}x_i y_i\le \frac{|E_\Gamma(U_j,S)|}{j}\cdot \frac{|E_\Gamma(U_j,T)|}{j}\le\frac{\lambda^2\sqrt{|S||T|}}{j}.$$
    On the other hand, the degree of each vertex in~$\Gamma$ is $w$, and hence the total number of pairs of edges incident to each vertex is $w^2$. Hence, if we let $\mu:=\lambda^2\sqrt{|S||T|}$, then the total number of length 2 paths from $S$ to $T$ can be estimated as 
    \begin{align*}
        |E_{\Gamma^2}(S,T)|&\le \sum_{j=1}^{\floorbr{\mu}}\min\left(w^2, \mu/j\right)=w^2\cdot \frac{\mu}{w^2}+\mu\sum_{j=\ceilbr{\mu/w^2}}^{\floorbr{\mu}}\frac{1}{j}\\
        &< \mu(2+\ln \mu-\ln(\mu/w^2))=2\mu(1+\ln w) = 2\lambda^2(1+\ln w)\sqrt{|S||T|}.
    \end{align*}
    Above we truncate the summation at $j = \floorbr{\mu}$ since for $j > \mu$ the number of length 2 paths going through the vertex $v_j$ is less or equal to $\min(w^2,\mu/j) < 1$, and therefore is equal to $0$. 
    Thus there exist at most $2\lambda^2(1+\ln w)\sqrt{|S||T|}$ edges from $S$ to $T$ in $\Gamma^2$, and $\Gamma^2$ is $(a/w,2\lambda^2(1+\ln w))$-edge-expanding.
\end{proof}

\subsection{Proof outline}\label{sc:proof-out}

In this subsection, we give some definitions and an~informal idea of the proof of our main results. Let $\hG = (\hat E, \hat V)$ be a~$G$-lift of some base graph $\Gamma$. We assume that $\hat\Gamma$ is a~$w$-regular $(a,\lambda)$-edge-expanding simple graph with $n$ vertices. For example, we can use an~infinite family of graphs $\bar X^{w-1,t}$ from Example~\ref{ex:Xpq}, where by Lemma~\ref{lemma:Xexp} we have $a = n/\sqrt{w}$, $\lambda = 8\sqrt{w}$.  

Let $h\in\FF_q^{r\times w}$, $h'\in\FF_q^{r'\times w}$ be some full rank matrices, and $\cA\in\mathfrak{T}_G(\hat\Gamma,h)$, $\cB\in\mathfrak{T}_G(\hat\Gamma,h')$ be the corresponding $G$-lifted Tanner codes: 
\[
\cA = \rbr{\FF_q \hat E\xrightarrow{\partial_A}\FF_q^{r} \hat  V},\qquad \cB = \rbr{\FF_q \hat E\xrightarrow{\partial_B}\FF_q^{r'} \hat V}.
\]
Now since $G$ acts freely on $\cA$ and $\cB$, we can consider their $G$-lifted product complex $\cC := \cA \otimes_G \cB^*$ over~$\F_q$ shown below:
\[
\underbrace{\FF_q \hat E\otimes_G \FF_q^{r'} \hat V}_{\cC_2}
\xrightarrow{\partial_2} \underbrace{\overbrace{\FF_q \hat E\otimes_G \FF_q \hat E}^{\cC_{F}}\oplus \overbrace{\FF_q^{r} \hat V\otimes_G\FF_q^{r'} \hat V}^{\cC_{V}}}_{\cC_1}
\xrightarrow{\partial_1} \underbrace{\FF_q^{r} \hat V\otimes_G \FF_2 \hat E}_{\cC_0}.
\]

It is convenient to represent $\cC$ as the the chain complex $\cC_\bullet(X, \cF)$, where $\cF$ is the~local system on $X := \hG \times_G \hG^*$. Since the poset $X$ has three levels: $X(2) = \Eh$, $X(1) = F \cup V$, and $X(0) = \Ev$,  it is not hard to see that $\cC_\bullet(X, \cF)$ has the following form:
\[
\underbrace{\FF_q^{r'} \Eh}_{\cC_2}
\xrightarrow{\partial_2} \underbrace{\overbrace{\FF_q  F}^{\cC_{F}}\oplus \overbrace{\FF_q^{r\times r'} V}^{\cC_{V}}}_{\cC_1}
\xrightarrow{\partial_1} \underbrace{\FF_q^{r} \Ev}_{\cC_0},
\]
where we identify $\F_q^r\otimes \F_q^{r'}$ with $\F_q^{r\times r'}$.

\begin{figure}
    \centering
    \begin{tikzpicture}
         \begin{scope}[xscale=-1]
            \filldraw[black!10!white] (1,1) -- (1,2) -- (2,2) -- (2,1) -- (1,1);
            \filldraw[black!10!white] (1,4) -- (1,6) -- (2,6) -- (2,4) -- (1,4);
            \filldraw[black!10!white] (4,1) -- (4,2) -- (6,2) -- (6,1) -- (4,1);
            \filldraw[black!10!white] (4,4) -- (4,6) -- (6,6) -- (6,4) -- (4,4);
            \draw[step=1] (0,0) grid (2,2);
            \draw[step=1] (3.99,0) grid (6,2);
            \draw[step=1,shift={(0.25,0)}] (3.99,0) grid (5.75,2);
            \draw[step=1,shift={(0.5,0)}] (3.99,0) grid (5.5,2);
            \draw[step=1,shift={(0.75,0)}] (3.99,0) grid (5.25,2);
            \draw[step=1] (0,3.99) grid (2,6);
            \draw[step=1,shift={(0,0.25)}] (0,3.99) grid (2,5.75);
            \draw[step=1,shift={(0,0.5)}] (0,3.99) grid (2,5.5);
            \draw[step=1,shift={(0,0.75)}] (0,3.99) grid (2,5.25);
            \draw[step=0.25] (3.99,3.99) grid (6,6);
            \draw[->] (1,2.3) --node[left]{$\id\otimes \partial_{\cB^*}$} (1,3.7);
            \draw[->] (5,2.3) --node[left]{$\id\otimes \partial_{\cB^*}$} (5,3.7);
            \draw[->] (3.7,1) --node[above]{$\partial_{\cA}\otimes\id $} (2.3,1);
            \draw[->] (3.7,5) --node[above]{$\partial_{\cA}\otimes\id $} (2.3,5);
            \node at (1.5,1.5) {$v$};
            \node[right] at (0,1) {$V$};
            \node[right] at (0,5) {$E_\uparrow$};
            \node[left] at (6,1) {$E_\rightarrow$};
            \node[left] at (6,5) {$F$};
        \end{scope}
        \begin{scope}[shift={(8,0)},xscale=-1]
        \draw[step=1] (1,1) grid (2,2);
        \draw[step=1] (3.99,1) grid (6,2);
        \draw[step=1,shift={(0.25,0)}] (3.99,1) grid (5.75,2);
        \draw[step=1,shift={(0.5,0)}] (3.99,1) grid (5.5,2);
        \draw[step=1,shift={(0.75,0)}] (3.99,1) grid (5.25,2);
        \draw[step=1] (1,3.99) grid (2,6);
        \draw[step=1,shift={(0,0.25)}] (1,3.99) grid (2,5.75);
        \draw[step=1,shift={(0,0.5)}] (1,3.99) grid (2,5.5);
        \draw[step=1,shift={(0,0.75)}] (1,3.99) grid (2,5.25);
        \draw[step=0.25] (3.99,3.99) grid (6,6);
        \draw[->] (1.5,2.3) --node[left]{$\id\otimes h'^*$} (1.5,3.7);
        \draw[->] (5,2.3) --node[left]{$\id\otimes h'^*$} (5,3.7);
        \draw[->] (3.7,1.5) --node[above]{$h\otimes \id$} (2.3,1.5);
        \draw[->] (3.7,5) --node[above]{$h\otimes \id$} (2.3,5);
        \node at (1.5,1.5) {$v$};
        \node[right] at (1,5) {$E_{\uparrow v}$};
        \node[below] at (5,1) {$E_{\rightarrow v}$};
        \node[left] at (6,5) {$F_v$};
        \end{scope}
    \end{tikzpicture}
    \caption{High-level view of the tensor product complex $\cA\otimes \cB^*$, where $\cA\in \fT(D_w; h)$, $\cB\in \fT(D_w; h')$, and $w=8$ (on the right); and its part that corresponds to the elements from $X$ incident to the vertex $v$ (on the left).}
    \label{fig:local-syst}
\end{figure}
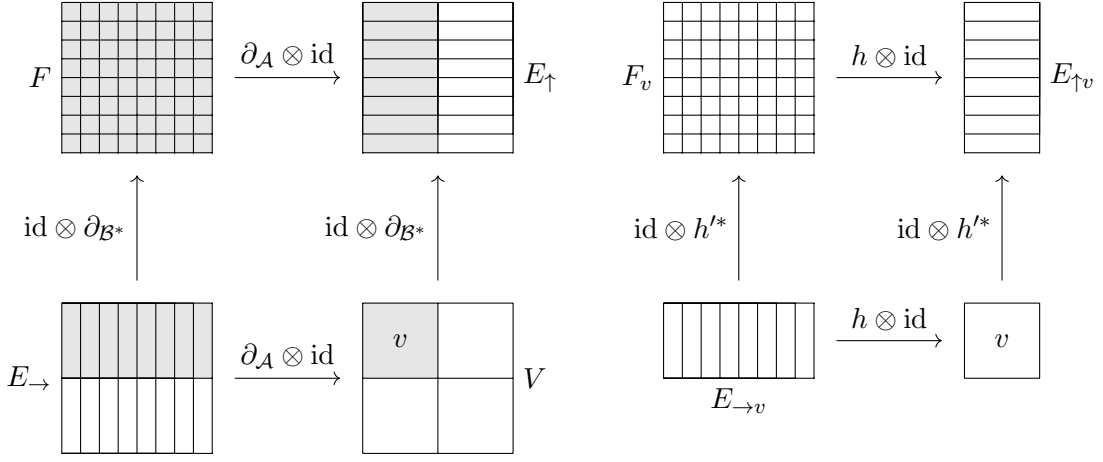

\begin{remark}
As we can see, $\cC_\bullet(X;\cF)$ gives us a~high-level representation of the complex~$\cC$. For example, on the left of~Fig.~\ref{fig:local-syst} you can find a~graphical representation of the tensor product complex $\cA\otimes \cB^*$, where $\cA\in \fT(D_w; h)$, $\cB\in \fT(D_w; h')$, and $w=8$.  For simplicity we consider in this example the tensor product instead of the $G$-lifted product. On the right of Fig.~\ref{fig:local-syst} you can see the ``part'' of this complex corresponding to the faces and edges incident to one particular vertex $v\in V$. 
\end{remark}

Now we consider the classical code $Z_2(\cC) = \ker \partial_2$ and the quantum code $\cQ(\cC) := \cQ(\partial_1,\partial_2^*)$, and show that for some sufficiently large number $w$ we can choose the matrices $h,h'$ such that $Z_2(\cC)$ and $\cQ(\cC)$ satisfy the requirements of Theorems~\ref{th:LTC} and \ref{th:qLDPC}, respectively.
The most difficult part of the proof is to show that $Z_2(\cC)$ is locally testable, and $\cQ(\cC)$ has linear minimum distance. However, from Lemma~\ref{lemma:dist-from-lm} it easily follows that if $\cC$ has the~locally minimal distance $\dLM^{(1)}(\cC) = \Theta(n)$ as $n\to\infty$, then both $Z_2(\cC)$ and $\cQ(\cC)$ have the desired properties.
Therefore we need to show that for every locally minimal 1-cycle $c\in Z_1(\cC)$ such that $|c| = o(n)$ as $n\to\infty$ we have $c=0$, where $|c| = \wt_X(c)$ is the block weight of $c$.  

Let us fix some non-zero locally minimal 1-cycle $c = \sum_{x\in X(1)} c_x x\in Z_1(\cC)$. Hence we have $c \ne 0$ and $\partial c=0$. We have $c = c_F + c_V$, where $c_F := c|_F$ and $c_V := c|_V$. 
Below we give a~number of important definitions used in the rest of the paper. Note that some of them depend on the fixed $1$-cycle $c$. However, for brevity, we usually do not mention $c$.

\begin{definition}
An~element $x\in X(1)$ (a~vertex or a~face) is called \emph{active} if $c_x\ne 0$.
A~vertical edge $e\in \Ev$ is called \emph{active} if it is incident to an~active vertex or an active~face. Furthermore, $e$ is called \emph{face-active} if it is not incident to any active vertex (only to an~active face).
\end{definition}

We also need another type of vertices we call \emph{labeled} that include active vertices as a~special case. However, the number of the labeled vertices is $O(|c|)$, and we can still use the expansion properties of the graphs involved in the proof.
We define the set of labeled vertices as the minimal set of vertices such that:
\begin{enumerate}
    \item every active vertex is labeled;
    \item every vertex of a~face-active edge adjacent to at least $m$ labeled vertices is labeled.
\end{enumerate}

We also consider $2$ types of labeled vertices:
\begin{enumerate}
    \item a~vertex is called \emph{$m$-edge-expanding} if there are at least $m$~edges connecting it to the labeled vertices in $\Lambda$;
    \item a~vertex is called \emph{$s$-face-expanding} if there are at least $s$~edges connecting it to the labeled vertices in $\Lambda^2$.
\end{enumerate}


In the proof outlined below, we consider classical codes that are duals of the product codes. In~Subsection~\ref{sc:local-exp} we define a~special property of such codes called \emph{$(s,m,\beta)$-product-expansion}. Informally speaking, this property corresponds to the local expansion in the complex~$\cC$. In some sense, it plays a~role similar to the role of the minimal distance of the local codes in the classical proof of Sipser and Spielman from~\cite{Sipser:1996}, where it is shown that expander codes have linear minimum distances. 

Fix $\eps := 1/6$, and put $m := w^{1/2+\eps}$, $s := w^{1+\eps}$. From Lemma \ref{lemma:rand-gh} it follows that we can find a~sufficiently large number $w$ and choose matrices $h$ and $h'$ with $w$~columns such that both pairs $(\im h^*, \ker h')$ and $(\ker h, \im h'^*)$ are $(s,2m,\beta)$-product-expanding.

In the proof, we often use expansion properties of the graphs $\Lambda$ and $\Lambda^2$, where $\Lambda := \hG \mathbin{\Box_G} \hG$ is the graph defined in Subsection~\ref{sc:prod-graph}. 
Using the edge expansion of $\hG$ we show in~Lemma~\ref{lemma:expG'} that $\Lambda$ is $(\Theta(n), \lambda')$-edge-expanding where $\lambda'=\Theta(w^{1/2})$. We also show in Lemma~\ref{lemma:expG''} that
$\Lambda^2$ is $(\Theta(n), \lambda'')$-edge-expanding, where $\lambda''=\Theta(w\ln w)$.

Suppose that $|c| = o(n)$, i.e., the number of the active vertices and faces is relatively small.
Then the proof by contradiction contains the following steps.

\begin{enumerate}
    \item Since each labeled vertex is either active itself or incident to an~active face, then the number of the labeled vertices is $O(|c|)=o(n)$. Hence we can use the expansion properties of the graphs $\Lambda$ and $\Lambda^2$ for subsets of labeled vertices.
    \item Using the expansion properties of the graph~$\hG$ it is possible to show that each face-active edge is incident to a~labeled vertex (Lemma \ref{lemma:lbl-edge}).
    \item Note that, by definition, each labeled non-active vertex is $m$-edge-expanding. 
    \item Using local minimality of $c$ and $(s,2m,\beta)$-product-expansion of $(\im h,\ker h')$ we can show that each active vertex is either $m$-edge-expanding or $s$-face-expanding (Lemma \ref{lemma:AB*-vert-cases}---the key lemma).
    \item From the previous 2 items we have that each labeled vertex is either $m$-edge-expanding or $s$-face-expanding (Corollary \ref{col:edge-face-exp}).
    \item Thus using the expansion properties of $\Lambda$ and $\Lambda^2$ we obtain a~contradiction (Lemma \ref{lemma:AB*vzero}):
    \begin{enumerate}
        \item from the $(\Theta(n),\lambda')$-edge expansion of $\Lambda$ we obtain that the ratio of the $m$-edge-expanding labeled vertices is $\Theta(\lambda'/m)<1/2$ for a~sufficiently large $w$ since $\lambda'=\Theta(w^{1/2})$ and $m=\Theta(w^{1/2+\eps})$;
        \item from the $(\Theta(n),\lambda'')$-edge expansion of $\Lambda^2$ we obtain that the ratio of the $s$-face-expanding labeled vertices is $\Theta(\lambda''/s)<1/2$ for a~sufficiently large $w$ since $\lambda''=\Theta(w\ln w)$ and $s=\Theta(w^{1+\eps})$;
        \item the ratio of the labeled vertices that are either $m$-edge-expanding or $s$-face expanding is less than 1, which can be true only when the $1$-cycle $c$ is zero.
    \end{enumerate}
\end{enumerate}
Since we obtained a~contradiction, we have that $|c| = \Theta(n)$, i.e., the locally minimal distance $\dLM^{(1)}(\cC) = \Theta(n)$ as $n\to\infty$, which is in turn of the same order as the length of the classical or quantum codes obtained from the chain complex $\cC$. Hence by Lemma~\ref{lemma:dist-from-lm} we get what we need.

\subsection{Local expansion}\label{sc:local-exp}
In this section, we consider the dual code to the classical product code~\cite{Wolf:1965, Chien:1973} and study its expansion properties\footnote{The property we consider is similar to the \emph{robust testability} property of tensor product codes, often studied in the literature on~LTCs~\cite{Ben-Sasson:2006,Dinur:2006}.}. Such codes are related to the local expansion properties of the $G$-lifted product of two Tanner codes.  Let $\ker h\subseteq \FF_q^{w}$ and $\ker h' \subseteq \FF_q^{w}$ be linear codes with parity-check matrices $h$ and $h'$ respectively. Consider the code $\cC=\ker (h\otimes h')\subseteq \F_q^w\otimes \F_q^w$. We will identify the elements of $\F_q^w\otimes \F_q^w$ with the corresponding matrices $x=(x^i_j)_{i,j=1}^w\in \F_q^{w\times w}$, where $x^i$ is the $i$-th row, and $x_j$ is the $j$-th column. Note that the matrix $h \otimes h'$ is also a~generator matrix for the product of the codes $(\ker h)^\perp = \im h^*$ and $(\ker h')^\perp = \im {h'}^*$ with the generator matrices $h, h'$ respectively, which means that $\cC$ is the dual to this product code.
\begin{remark}
Using matrix representation, it is not hard to check that the codewords of $\cC$ are precisely the matrices $x\in \F_q^{w\times w}$ such that $h' x h^* = 0$. Therefore if $x\in \cC$ then every row of the matrix $s_{\uparrow} := h' x$ is a~codeword from $\ker h$ and every column of the matrix $s_{\rightarrow} := x h^*$ is a~codeword from $\ker h'$ (see Fig.~\ref{fig:hxh'code}).
\end{remark}

\begin{definition}
A~codeword $x\in \cC=\ker (h\otimes h')$ is called \emph{$\Delta$-minimal} if the following conditions hold:
\begin{enumerate}
    \item $\wt(x^i)\le d(x^i, \ker h)+\Delta$ for all $i\in [w]$,
    \item $\wt(x_j)\le d(x_j, \ker h')+\Delta$ for all $j\in [w]$,
\end{enumerate}
which means that we cannot decrease the weight of the matrix $x$ by more than $\Delta$ if we add any~codeword from $\ker h$ (resp. $\ker h')$ to some row (resp. column) of $x$.
A~pair of codes $(\ker h, \ker h')$ is called \emph{$(s,m,\beta)$-product-expanding} if for each non-zero $\beta w$-minimal codeword $x\in \cC$ and for each $A, B\subseteq [w]$ such that $|A|,|B|\ge w-m$ we have $\wt_{A\times B}(x)\ge s$, where $\wt_{A\times B}(x):= \wt(x|_{A\times B})$.
\end{definition}
In this section, we often use the following short-hand notations: $x^I := x|_{I \times [w]}$, $x_J := x|_{[w]\times J}$, and $x_J^I := x|_{I\times J}$, where $x\in \F_q^{w\times w}$, $I,J\subseteq [w]$.
\begin{lemma}\label{lemma:nonzero-block}
    Let $h\in \FF_q^{r\times w}$, $h'\in \FF_q^{r'\times w}$ be parity-check matrices such that $\min(d(\ker h), d(\ker h'))\ge d$, and $x=(x^i_j)_{i,j=1}^w\in \F_q^{w\times w}$ be a $d/3$-minimal 
    codeword of $\ker (h\otimes h')$.
    If there exist $A,B\subseteq [w]$ such that $|A|> w-d/3,|B|\ge w-d+1$ and $x_A^B= 0$ or $x_B^A= 0$, then $x=0$.
\end{lemma}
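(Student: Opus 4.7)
The plan is to exploit the hypothesis, which without loss of generality we take to be $x_A^B=0$ (the case $x_B^A=0$ follows by transposing $x$ and swapping the roles of $h$ and $h'$), together with $d/3$-minimality, first to reduce to the situation where the entire column block $x_A$ vanishes, and then to finish by a short syndrome argument. Write $\tilde A:=[w]\setminus A$ and $\tilde B:=[w]\setminus B$, so $|\tilde A|<d/3$ and $|\tilde B|\le d-1<d$. In particular, since no nonzero codeword of $\ker h'$ can be supported on $\tilde B$, the columns of $h'$ indexed by $\tilde B$ are linearly independent.

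The key step is to show that for every $i\in\tilde B$, the restriction $x^i|_A$ is itself the restriction to $A$ of some codeword of $\ker h$. Since $x\in\ker(h\otimes h')$, every row $(h'x)^l$ lies in $\ker h$, so its restriction to $A$ lies in the subspace $\{c|_A\mid c\in\ker h\}\subseteq\F_q^{A}$; using $x_A^B=0$, that restriction equals $\sum_{i\in\tilde B}h'_{l,i}(x^i|_A)$ for $l\in[r']$. Because the columns of $h'|_{\tilde B}$ are linearly independent, its rows span $\F_q^{\tilde B}$, so each individual $x^i|_A$ with $i\in\tilde B$ can be written as an $\F_q$-linear combination of these $r'$ vectors and therefore also lies in $\{c|_A\mid c\in\ker h\}$.

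For every $i\in\tilde B$ let $y^i\in\ker h$ be the unique codeword with $y^i|_A=x^i|_A$ (uniqueness uses that the restriction to $A$ is injective on $\ker h$, since $|\tilde A|<d$), and set $y^i=0$ for $i\in B$. The matrix $y$ has all rows in $\ker h$, hence $y\in\ker(h\otimes h')$, and $\tilde x:=x-y\in\ker(h\otimes h')$ satisfies $\tilde x_A=0$, so every row $\tilde x^i$ is supported in $\tilde A$ with $\wt(\tilde x^i)\le|\tilde A|<d/3$. Row $d/3$-minimality of $x$ applied with the codeword $y^i$ gives $\wt(x^i)\le\wt(x^i-y^i)+d/3=\wt(\tilde x^i)+d/3<2d/3$; on the other hand, if $y^i\ne 0$ for some $i\in\tilde B$, then $\wt(y^i)\ge d$ forces $\wt(x^i)\ge\wt(y^i)-\wt(\tilde x^i)>d-d/3=2d/3$, a contradiction. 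Hence $y=0$, i.e.\ $x_A=0$ already.

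Once $x_A=0$, each column of $h'x$ indexed by $A$ vanishes, so every row of $h'x$ is a codeword of $\ker h$ supported in $\tilde A$; being of weight strictly less than $d$, it must be zero, so $h'x=0$ and every column $x_j$ lies in $\ker h'$. Column $d/3$-minimality then yields $\wt(x_j)\le d(x_j,\ker h')+d/3=d/3<d$, and since a nonzero codeword of $\ker h'$ has weight at least $d$, this forces $x_j=0$ for all $j$. The main obstacle is the extraction step of the second paragraph: isolating each individual $x^i|_A$ from the aggregated syndromes $\sum_{i\in\tilde B}h'_{l,i}(x^i|_A)$ requires full column rank of $h'|_{\tilde B}$, which is precisely where the hypothesis $|\tilde B|<d$ is used essentially; after that, the finish is routine weight bookkeeping combined with the fact that low-weight vectors in $\ker h$ or $\ker h'$ must vanish.
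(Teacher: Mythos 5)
Your proof is correct, and it takes a route that is recognizably similar in overall structure to the paper's (construct a matrix with rows in $\ker h$ that cancels the $A$-columns, use $d/3$-minimality to force that matrix to vanish, then finish with the column argument) but differs in the key mechanism. The paper chooses information sets $A'\subseteq A$ for $\ker h$ and $B'\subseteq B$ for $\ker h'$, defines the correction explicitly as $\delta := x_{A'}g$ with $g$ a systematic generator matrix, and then has to \emph{prove} that $\delta_A=x_A$ (equivalently $x'_A=0$), which is done indirectly: first establish $h'x'=0$ using $x'_{A'}=0$ and the fact that $A'$ is an information set, then combine with $x'^{B'}_A=0$ and the fact that $B'$ is an information set. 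You instead prove \emph{directly} that each $x^i|_A$ with $i\in\tilde B$ extends to a codeword of $\ker h$, by observing that $(h'x)^l|_A=\sum_{i\in\tilde B}h'_{l,i}\,x^i|_A$ lies in the restriction of $\ker h$, and then inverting the full-column-rank map $h'|_{\tilde B}$. This produces the same correction matrix (for $i\in\tilde B$, the unique $\ker h$-codeword matching $x^i$ on $A$; zero on $B$), but with an existence argument replacing the paper's systematic-generator construction. The two arguments hinge on the same two facts — restriction to $A$ is injective on $\ker h$ because $|\tilde A|<d$, and linear independence of the columns of $h'$ indexed by $\tilde B$ because $|\tilde B|<d$ — packaged either as ``information sets exist'' (paper) or ``syndrome extraction is invertible'' (yours). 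Your syndrome-extraction route is slightly more direct in that $\tilde x_A=0$ holds by construction rather than requiring a separate proof, at the cost of a less explicit formula for the correction. Both are valid; there is no gap.
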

\begin{proof}
    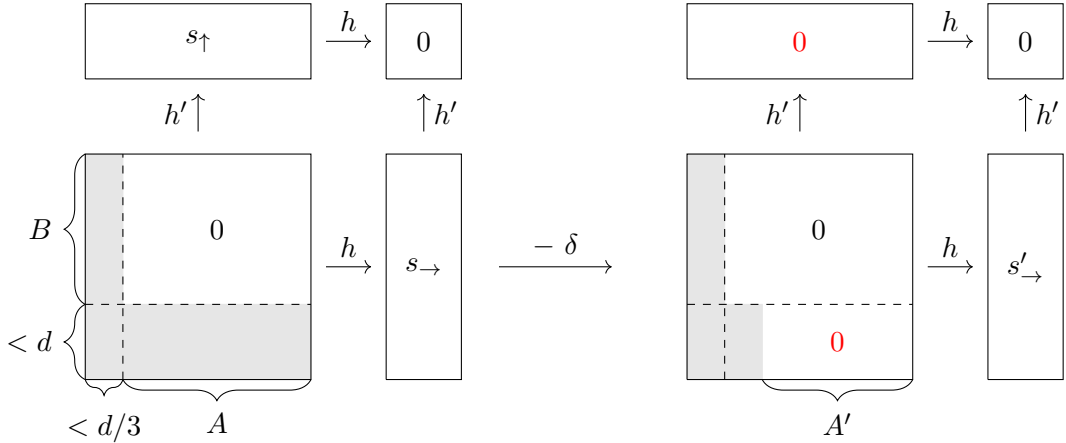
\begin{figure}[ht]
        \centering
        \begin{tikzpicture}
                \filldraw[black!10!white] (0,0) -- (3,0) -- (3,1) -- (0.5,1) -- (0.5,3)--(0,3)--(0,0);
                \draw (0,0)--(3,0)--(3,3)--(0,3)--(0,0);
                \draw (4,0)--(5,0)--(5,3)--(4,3)--(4,0);
                \draw (0,4)--(0,5)--(3,5)--(3,4)--(0,4);
                \draw (4,4)--(4,5)--(5,5)--(5,4)--(4,4);
                \draw[dashed] (0,1)--(3,1);
                \draw[dashed] (0.5,0)--(0.5,3);
                \draw[->] (1.5,3.3) --node[left]{$h'$} (1.5,3.8);
                \draw[->] (4.5,3.3) --node[right]{$h'$} (4.5,3.8);
                \draw[->] (3.2,1.5) --node[above]{$h$} (3.8,1.5);
                \draw[->] (3.2,4.5) --node[above]{$h$} (3.8,4.5);
                \node[] at (4.5,1.5) {$s_\rightarrow$};
                \node[] at (1.5,4.5) {$s_\uparrow$};
                \node[] at (4.5,4.5) {0};
                \node[] at (1.75,2) {0};
                \draw [decorate,decoration={brace,amplitude=8pt}]
                    (0,1) -- (0,3) node [midway,left,xshift=-0.8em] {$B$};
                \draw [decorate,decoration={brace,amplitude=8pt}]
                    (0,0) -- (0,1) node [midway,left,xshift=-0.8em] {$< d$};
                \draw [decorate,decoration={brace,amplitude=8pt}]
                    (3,0) -- (0.5,0) node [midway,below,yshift=-0.8em] {$A$};
                \draw [decorate,decoration={brace,amplitude=6pt}]
                    (0.5,0) -- (0,0) node [midway,below,yshift=-0.8em] {$< d/3$};
    
                \draw[->] (5.5,1.5)--node[above]{$-\ \delta$}(7,1.5);
                \begin{scope}[xshift=8cm]
                    \filldraw[black!10!white] (0,0) -- (0,3) -- (0.5,3) -- (0.5,1) -- (1,1)--(1,0)--(0,0);
                    \draw (0,0)--(3,0)--(3,3)--(0,3)--(0,0);
                    \draw (4,0)--(5,0)--(5,3)--(4,3)--(4,0);
                    \draw (0,4)--(0,5)--(3,5)--(3,4)--(0,4);
                    \draw (4,4)--(4,5)--(5,5)--(5,4)--(4,4);
                    \draw[dashed] (0,1)--(3,1);
                    \draw[dashed] (0.5,0)--(0.5,3);
                    \draw[->] (1.5,3.3) --node[left]{$h'$} (1.5,3.8);
                    \draw[->] (4.5,3.3) --node[right]{$h'$} (4.5,3.8);
                    \draw[->] (3.2,1.5) --node[above]{$h$} (3.8,1.5);
                    \draw[->] (3.2,4.5) --node[above]{$h$} (3.8,4.5);
                    \node[] at (4.5,1.5) {$s'_\rightarrow$};
                    \node[red] at (1.5,4.5) {0};
                    \node[] at (4.5,4.5) {0};
                    \node[] at (1.75,2) {0};
                    \node[red] at (2,0.5) {0};
                    \draw [decorate,decoration={brace,amplitude=8pt}]
                        (3,0) -- (1,0) node [midway,below,yshift=-0.8em] {$A'$};
                \end{scope}
        \end{tikzpicture}
        \caption{Idea of the proof.}
        \label{fig:hxh'code}
    \end{figure}

    Suppose that $x_A^B=0$ for some $A,B\subseteq [w]$ such that $|A|> w-d/3,|B|\ge w-d+1$ (shown on the left of Fig. \ref{fig:hxh'code}). 
    Since $|A|> w-d(\ker h)$, $|B|> w-d(\ker h')$, there exist information\footnote{An~\emph{information set} for a~linear code $\cC\subseteq\F_q^n$ is a smallest by inclusion index set $I\subseteq [n]$ such that for every $c\in\cC$ if $c|_I = 0$ then $c = 0$. It is clear that for every $S\subseteq [n]$ such that $\abs{S} > n - d(\cC)$ if for some codeword $c\in \cC$ we have $c|_S = 0$ then $c=0$. Hence there should exist an~information set $I\subseteq S$.} sets $A'\subseteq A$, $B'\subseteq B$ of the codes $\ker h$ and $\ker h'$ respectively. 
    Let $g$ be the generator matrix in systematic form\footnote{A~generator matrix $g$ is in \emph{systematic form} for an~information set $I$ if the submatrix $g_{I}$ is the identity matrix.} for the information set $A'$. 
    Consider matrices $\delta := x_{A'}g$ and $x' := x-\delta$. 
    
    Let us show that $x'_A=0$. Since $\delta_{A'}=x_{A'}$, we have $x'_{A'}=0$. 
    On the other hand, $\delta h^*= x_{A'}g h^* = 0$, and hence 
    \[
    h'x' h^* = h' (x - \delta) h^* = h' x h^* - h' \delta h^* = 0.
    \] 
    Therefore $(h'x') h^* = 0$, and each row of $h'x'$ is a~codeword from $\ker h$. 
    The condition $x'_{A'} = 0$ implies that $(h'x')_{A'} = 0$, and since $A'$ is an~information set of $\ker h$, we get $h'x' = 0$, which means that every column of $x'$ is a codeword of $\ker h'$ (shown on the right of Fig. \ref{fig:hxh'code}). 
    Since $x^{B'}_{A}=0$ and $x^{B'}_{A'}=0$, we have $\delta^{B'}=x^{B'}_{A'}g=0$, and therefore ${x'}^{B'}_{A}=x^{B'}_{A}-\delta^{B'}_A=0$. Now since $B'$ is an~information set of $\ker h'$, ${x'}_{A}^{B'}=0$, and $h'x'_{A}=0$, we have $x'_{A}=0$. 
    
    Suppose $\delta\ne 0$. In this case, there exists $i\in[w]$ such that $\delta^i\ne 0$. Taking into account that $\delta^i\in \ker h$, we obtain $\wt(\delta^i)\ge d$. But since $x'_{A} = 0$, and $|A|> w-d/3$, we have $\wt({x'}^i)\le w-|A|< d/3$, and thus $\wt(x^i)\ge \wt(\delta^i)-\wt({x'}^i)> 2d/3> \wt({x'}^i)+d/3$, which contradicts the $d/3$-minimality of $x$. 
    Thus $\delta=0$, which implies that $x'=x$ and $h'x=0$, hence $d(x_j,\ker h')=0$ for all $j\in[w]$.
    By $d/3$-minimality of $x$ we have $\wt(x_j)\le d/3 < d(\ker h')$, therefore $x_j=0$ for all $j\in [w]$, i.e. $x=0$. Hence we showed that $x_A^B= 0$ implies $x=0$. Thus to prove the lemma it remains to show that $x_B^A= 0$ also implies $x=0$, which can be shown in a similar way.
\end{proof}

\begin{lemma}\label{lemma:h1x-rank}
    Let $h\in\FF_q^{r\times w}$, $h'\in\FF_q^{r'\times w}$, 
    $d=\min(d(\ker h),d(\ker h'))$, $m\le d/6$. Suppose $x\in\ker (h\otimes h')$ is a $d/3$-minimal non-zero codeword such that $\wt_{A\times B}(x) < s$ for some $A,B\subseteq [w]$, $|A|=|B|=w-m$. 
    Then $\rk h' x\ge \frac{5}{36}\cdot\frac{d^2}{s}$.
\end{lemma}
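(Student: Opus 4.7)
The strategy is to compare a lower and an upper bound on the total Hamming weight of the submatrix $Y := (h'x)|_{[r'] \times B}$ and extract the rank bound by combining them.

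For the lower bound, since $h'xh^* = 0$, every row of $h'x$ lies in $\ker h$, so each non-zero row has Hamming weight at least $d$. Because $|B^c| = m \le d/6 < d$, no non-zero codeword of $\ker h$ can have support entirely inside $B^c$, so restriction to $B$ is injective on the row span of $h'x$. In particular $\operatorname{rk}(Y) = \operatorname{rk}(h'x) =: \rho$, and each of the at least $\rho$ non-zero rows of $Y$ has weight at least $d - m \ge 5d/6$. This gives
\[
\operatorname{wt}(Y) \;\ge\; \tfrac{5d}{6}\,\rho.
\]

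For the upper bound, I would partition $B = J_0 \sqcup J_1$ with $J_0 := \{\,j \in B : x_j|_A = 0\,\}$. Each $j \in J_1$ contributes at least one non-zero entry to $\operatorname{wt}_{A \times B}(x) < s$, so $|J_1| < s$. For $j \in J_0$, $x_j \in \mathbb{F}_q^{A^c}$ has Hamming weight at most $m < d$, so $x_j \notin \ker h'$ unless $x_j = 0$, and in either case $h' x_j$ lies in the $m$-dimensional subspace $U := h'(\mathbb{F}_q^{A^c})$; consequently $\operatorname{rk}(Y|_{[r'] \times J_0}) \le m$. The remaining and main obstacle is to translate this column structure, together with the $d/3$-minimality of $x$, into a quantitative upper bound of the form $\operatorname{wt}(Y) \le \tfrac{6 s \rho^2}{d}$. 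I anticipate that the argument will proceed by an information-set subtraction, analogous to the one used in Lemma~\ref{lemma:nonzero-block}: for each non-zero row of $Y$ one identifies a reduced representative controlled by the $d/3$-minimality, which forces the support to concentrate on the few "cross" columns ($J_1$ and positions determined by the $m$-dimensional image $U$), so that averaging across rows yields the claimed bound on $\operatorname{wt}(Y)$.

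Combining the two estimates gives $\tfrac{5d}{6}\rho \le \operatorname{wt}(Y) \le \tfrac{6 s \rho^2}{d}$, and dividing both sides by $\rho$ and rearranging produces the desired inequality $\operatorname{rk}(h'x) = \rho \ge \tfrac{5d^2}{36 s}$. The technical heart of the proof, and the place where the full hypotheses of $d/3$-minimality and non-vanishing of $x$ must be used in a subtle way, is the upper bound step; the lower bound is essentially a clean consequence of the codeword structure of the rows of $h'x$ in $\ker h$.
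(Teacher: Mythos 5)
Your lower bound is correct but is the easy half: since $(h'x)h^* = 0$, every non-zero row of $h'x$ is a non-zero codeword of $\ker h$, hence has weight at least $d$ and cannot be supported in $B^c$ (of size $m<d$), so $Y$ has at least $\rho=\rk(h'x)$ non-zero rows, each of weight at least $d-m\ge 5d/6$. The entire content of the lemma, however, sits in your upper bound $\wt(Y)\le \frac{6s\rho^2}{d}$, which you do not prove — you explicitly defer it as the "technical heart." The two facts you do establish toward it, namely $|J_1|<s$ and $\rk\bigl(Y|_{[r']\times J_0}\bigr)\le m$, do not lead there: a rank bound on the $J_0$-columns of $Y$ says nothing about their total Hamming weight (an $m$-dimensional column space can still fill every entry of those columns), and the hypothesis $\wt_{A\times B}(x)<s$ constrains $x$, not $h'x$ — for $j\in J_0$ the column $h'x_j = h'(x_j|_{A^c})$ can have weight up to $r'$ even when $\wt(x_j)=1$. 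So there is no visible path from your stated ingredients to the claimed inequality, and since that inequality is strictly stronger than the lemma itself, it is not even clear it is true. As written, the proposal has a genuine gap.

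The paper's proof avoids bounding any weight of $h'x$ at all. It chooses a generating set $h'\tilde{x}_1,\dots,h'\tilde{x}_k$ of the \emph{column} space of $h'x$ that minimizes $\sum_i\wt_A(\tilde{x}_i)$, uses Lemma~\ref{lemma:nonzero-block} together with the $d/3$-minimality of $x$ to show $\sum_i\wt_A(\tilde{x}_i)\ge d/6$, and then shows that at least $d$ columns $x_i$ of $x$ must, in their expansion $x_i=y_i+\tilde{x}a_i$, involve a generator of index at least $k'$ (where $k'$ is a prefix threshold), whence by the minimality of the generating set $\wt_A(x_i)\ge \wt_A(\tilde{x}_{k'})\ge \frac{d}{6k}$. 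At least $d-m$ of those columns lie in $B$, so $s>\wt_{A\times B}(x)\ge\frac{d(d-m)}{6k}$, giving $k\ge\frac{5d^2}{36s}$. If you want to rescue a double-counting scheme, the quantity to count is something like $\sum_{i\in B\cap C}\wt_A(x_i)$ — a sub-sum of $\wt_{A\times B}(x)$ that the hypothesis genuinely controls — rather than $\wt(Y)$.
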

\begin{proof}
    By Lemma \ref{lemma:nonzero-block} each submatrix of $w-d+1$ columns of $x$ must have at least $d/3$ nonzero rows, and each submatrix of $w-d+1$ rows of $x$ must have at least $d/3$ nonzero columns. In particular, $x$ has at least $d$ nonzero columns and at least $d$ nonzero rows. Indeed, otherwise we would have at least $w-d+1$ zero rows or columns, which contradicts what we said earlier.

    Let $k=\rk h' x$, and $\{h'\tilde{x}_1,\dots,h'\tilde{x}_k\}$ be a~generating set for the column space of $h' x$ with the minimal total weight $\wt_A(\tilde{x}):=\wt_A(\tilde{x}_1)+\dots+\wt_A(\tilde{x}_k)$, where $\tilde{x}$ is a~matrix with the columns $\tilde{x}_1,\dots,\tilde{x}_k$.
    Without loss of generality we assume that $\wt_A(\tilde{x}_1)\le \cdots\le \wt_A(\tilde{x}_k)$. 
    
    Let us show that $|\bigcup_{j=1}^k\supp\tilde{x}_j|\ge d/3$.  Denote $U=\bigcup_{j=1}^k\supp\tilde{x}_j$. Suppose $|U|<d/3$. Since $|\bigcup_{i=1}^w \supp x_i|\ge d$, there is a column $x_i$ such that $\supp x_i\not\subseteq U$, hence $x_i\not\in \im\tilde{x}$. However $h'x_i\in \im h' \tilde{x}$, and hence there exists some $y\in\ker h'\setminus\{0\}$ such that $x_i+y\in \im\tilde{x}$. Since $\supp (x_i+y)\subseteq U$, we have $\wt(x_i+y)<d/3$ and $\wt(x_i)\ge \wt(y)-\wt(x_i+y)>2d/3>\wt(x_i+y)+d/3$, which contradicts the $d/3$-minimality of $x$, and hence our assumption is wrong, and $|U|\ge d/3$.
    
    We have
    $$\sum_{i=1}^k\wt_A(\tilde{x}_i)
        \ge\Bigl|\bigcup_{i=1}^k(\supp \tilde{x}_i\cap A)\Bigr| 
        =|U\cap A|=|U\setminus ([w]\setminus A)|\ge |U|-\underbrace{(w-|A|)}_m\ge \frac{d}{3}-m\ge \frac{d}{6}.$$
    Let $k'$ be the minimal number such that $\sum_{j=1}^{k'}\wt_A(\tilde{x}_j) \ge d/6$, then $\wt_A(\tilde{x}_{k'})\ge \frac{d}{6k'}\ge \frac{d}{6k}$. Put $U_0=\bigcup_{j=1}^{k'-1}\supp \tilde{x}_j$. Each column $x_i$ is uniquely represented as $x_i=y_i+\tilde{x}a_i$ where $y_i\in\ker h'$, $a_i\in\FF_q^k$. If $\supp a_i\subseteq [k'-1]$, then 
    $$\wt(\tilde{x}a_i)\le m+\wt_A(\tilde{x}a_i)\le m+\sum_{j=1}^{k'-1}\wt_A(\tilde{x}_j)<d/3,$$
    and hence $y_i=0$, otherwise $\wt(x_i)\ge d-\wt(\tilde{x}a_i)> 2d/3\ge \wt(x_i+y_i)+d/3$ which contradicts the $d/3$-minimality of $x$. Therefore $\supp x_i\subseteq U_0$.

    Since every $w-d+1$ columns of~$x$ have at least $d/3$ nonzero rows, there are at most $w-d$ columns $x_i$ such that $\supp a_i\subseteq [k'-1]$. 
    Hence there exists a~set $C\subseteq [w]$ of size $d$ such that $\max(\supp a_i)\ge k'$ for all $i\in C$. 
    Note that if $j=\max(\supp a_i)$, then $\wt_A(x_i)\ge \wt_A(\tilde{x}_j)$. Indeed, otherwise we can replace $\tilde{x}_j$ by $x_i$ and reduce $\wt_A(\tilde{x})$, which contradicts the minimality of $\wt_A(\tilde{x})$.
    Hence $\wt(x_i)\ge \wt_A(\tilde{x}_{k_1})\ge \frac{d}{6k}$ for all $i\in C$, and
    therefore
    $$\wt_{A\times B}(x)\ge \wt_{A\times (B\cap C)}(x)=\sum_{i\in B\cap C}\wt_A(x_i)\ge \frac{d|B\cap C|}{6k}.$$
    Since $|B\cap C|=|C\setminus ([w]\setminus B)|\ge |C|-(w-|B|)=d-m$, we have
    \begin{equation}
        k\ge \frac{d|B\cap C|}{6\wt_{A\times B}(x)}\ge \frac{d(d-m)}{6s}\ge \frac{5}{36}\cdot\frac{d^2}{s},
    \end{equation}
    and the lemma is proved.
\end{proof}

\begin{lemma}\label{lemma:rand-gh}
Let $\eps\in(0,1/4)$, $\alpha>0$, $\gamma>0$, $R_1\in(0,1)$, $R_2\in(0,1)$. Then there exist $\beta>0$ and $\delta>0$ such that for random\footnote{We suppose that the entries of the both matrices are chosen uniformly and independently at random from~$\F_q$.} matrices $h\in \FF_q^{\floorbr{R_1w}\times w}$, $g'\in \F_q^{\floorbr{R_2 w}\times w}$ the following three conditions hold with high probability as $w\to\infty$\textsc{:}
\begin{enumerate}
    \item $\min(d(\ker h), d(\im g'^*))\ge \delta w$;
    \item the matrices $h$ and $g'$ have full rank;
    \item the pair of codes $(\ker h, \im g'^*)$ is $(\alpha w^{1+\eps}, \gamma w^{1/2+\eps}, \beta)$-product-expanding.
\end{enumerate}   
\end{lemma}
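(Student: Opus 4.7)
The strategy is the probabilistic method, treating the three assertions separately. Conditions (1) and (2) follow from standard arguments about random codes: a uniform $k \times w$ matrix over $\F_q$ is rank-deficient with probability at most $q^{k-w} = o(1)$, giving (2); and for (1), because a uniformly random full-rank $g'$ makes $\im g'^{*}$ a uniformly random subspace of dimension $\floorbr{R_2w}$, we may equivalently replace $g'$ by a uniform full-rank parity-check matrix $h' \in \F_q^{(w-\floorbr{R_2w})\times w}$ for that subspace, and work throughout with independent uniform full-rank $h, h'$. A Gilbert--Varshamov union bound then produces a constant $\delta = \delta(R_1,R_2) > 0$ such that $\min(d(\ker h), d(\ker h'))\ge \delta w$ holds with probability $1-o(1)$.

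For (3), fix $\beta := \delta / 3$, set $s := \alpha w^{1+\eps}$, $m := \gamma w^{1/2+\eps}$, and condition on the high-probability event that (1) and (2) hold. Call a non-zero $x \in \F_q^{w \times w}$ \emph{eligible} if $\wt_{A \times B}(x) < s$ for some $A, B \subseteq [w]$ with $|A| = |B| = w - m$ (shrinking $A, B$ can only decrease $\wt_{A\times B}$, so we may insist on equality). A failure of product-expansion would give an eligible $\beta w$-minimal $x \in \ker(h \otimes h')$. Since $\beta w \le \delta w / 3$ and $m \le \delta w / 6$ for all sufficiently large $w$, Lemma~\ref{lemma:h1x-rank} applies with $d = \delta w$ and forces any such $x$ to satisfy
\[
\rk(h'x) \;\ge\; \frac{5\delta^{2}}{36\alpha}\,w^{1-\eps} \;=:\; C\,w^{1-\eps}.
\]

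The argument concludes with a union bound over eligible $x$. For fixed $x$ and $h'$, the equation $h'x h^{*} = 0$ says the rowspan of $h'x$ lies in $\ker h$, and since $h$ is independent of $h'$ and uniform, any fixed $r$-dimensional subspace of $\F_q^w$ lies in $\ker h$ with probability $q^{-R_1 w r}$; hence
\[
\Pr\bigl[x \in \ker(h \otimes h'),\; \rk(h'x) \ge C w^{1-\eps}\bigr] \;\le\; q^{-C R_1 w^{2-\eps}}.
\]
An eligible $x$ may be parameterised by a choice of $\bar A, \bar B \in \binom{[w]}{m}$, arbitrary entries on $(\bar A \times [w])\cup ([w]\times \bar B)$, and fewer than $s$ nonzero entries on the remaining $(w - m)^{2}$ block, so the number of eligible $x$ is at most $\binom{w}{m}^{2} q^{2mw}\binom{w^{2}}{s}q^{s} = \exp\!\bigl(O(w^{3/2+\eps})\bigr)$. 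The main obstacle is balancing these two exponents: combining them, the expected number of bad configurations is at most $\exp\!\bigl(O(w^{3/2+\eps}) - \Omega(w^{2-\eps})\bigr)$, which tends to zero precisely under the hypothesis $\eps < 1/4$ (equivalently, $2 - \eps > 3/2 + \eps$). By Markov, product-expansion holds with probability $1 - o(1)$, so taking $\beta := \delta / 3$ together with the $\delta$ from the Gilbert--Varshamov step yields the required universal constants.
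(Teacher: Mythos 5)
Your proposal is correct and follows essentially the same route as the paper: Gilbert--Varshamov for conditions (1)--(2), then for (3) a reduction to the parity-check matrix $h'$ of $\im g'^*$, an application of Lemma~\ref{lemma:h1x-rank} to force $\rk(h'x)\ge \Omega(w^{1-\eps})$ for any bad $\beta w$-minimal codeword, a probability bound of $q^{-\Omega(w^{2-\eps})}$ per candidate $x$ over the randomness of $h$, and a union bound over the $\exp(O(w^{3/2+\eps}))$ eligible matrices, closing with $\eps<1/4$. The counting and the exponent comparison match the paper's proof exactly (your bookkeeping of which rate constant enters the exponent is in fact slightly cleaner than the paper's).
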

\begin{proof}
    Let us start the proof by saying that the first two conditions follows from the probabilistic proof of the asymptotic Gilbert-Varshamov bound\footnote{Note that the probabilistic proof of the Gilbert–Varshamov bound can be used with a~random code defined either by a~random parity-check matrix or a~random generator matrix. See~\cite{Barg&Forney:2002} for a~good review of this bound.}. Indeed, it is enough to choose $\delta\le (q-1)/q$ such that 
    $H_q(\delta)=\min(R_1/2, (1-R_2)/2)$, where 
    \[H_q(x):= x\log_q(q-1)-x\log_q x-(1-x)\log_q(1-x)\] is the \emph{$q$-ary entropy function}.
    
    Now put $r_1:=\floorbr{R_1w}$, $r_2:=\floorbr{R_2 w}$, $d:= \delta w$, $\beta:=\delta/3$, and let us fix a~full-rank matrix $g'\in \F_q^{r_2\times w}$ such that $d(\im g'^*)\ge d$. In the rest of the proof, we will consider all the probabilities conditioned on this choice of $g'$. 
    
    Let $h'\in\FF_q^{(w-r_2)\times w}$ be a~parity-check matrix of the code $\im g'^*$, and consider the  code   $\cC:=\ker(h\otimes h')$.
    The entries of the matrix $h$ are independent uniformly distributed elements of $\FF_q$. Now we estimate the probability that the code $\cC$ has a codeword of some particular form. Recall that we interpret elements of $\FF_q^w\otimes \FF_q^w$ as $w\times w$ matrices over $\FF_q$. In this interpretation every $x\in\cC$ satisfies the condition $h'xh^*=0$. Hence, for $x\in\cC$ we have
    \begin{equation}\label{eqn:h2}
        0=h'xh^*=
        s_\uparrow h^*
    \end{equation}
    where $s_\uparrow=h' x$. 
    Let us remind that for matrix $u\in\FF_q^{a\times b}$ by $u_i$ we denote the $i$-th column of $u$ and by $u^j$ we denote the $j$-th row of $u$.
    
    When $h'$ and $x$ are fixed, then \eqref{eqn:h2} defines a system of linear equations on the elements 
    of the matrix $h$. To estimate the number of solutions we need to estimate the rank of this system. For all $j\in [r_2]$ we have $h^j\in\ker s_\uparrow$. Hence, the probability that the equation \eqref{eqn:h2} satisfied is $q^{-r_2\rk s_\uparrow}$.

    Put $\beta=\frac{d}{3w}=\delta/3$, $m=\gamma w^{1/2+\eps}$ and suppose $w$ is sufficiently large such that $m\le d/6$. 
    By Lemma \ref{lemma:h1x-rank} for every $\beta w$-minimal non-zero codeword $x\in\cC$  
    such that $\wt(x)\le \alpha w^{1+\eps}$ we have $\rk h' x\ge \frac{5}{36}\cdot \frac{d^2}{\alpha w^{1+\eps}}=c_1w^{1-\eps}$ where $c_1=\frac{5\delta^2}{36\alpha}$.
    So, to summarize, we proved that if $(\ker h,\ker h')$ is not $(\alpha w^{1+\eps},m,\beta)$-product-expanding, and $m\le \delta w/6$, then one of the following three cases is true:
    \begin{enumerate}
        \item $d(\ker h)<\delta w$;
        \item $d(\ker h')<\delta w$;
        \item there exist subsets $A,B\subseteq [w]$, $|A|=|B|=w-m$ and a~matrix $x\in \FF_q^{w\times w}$ such that $\wt(x|_{A\times B})<\alpha w^{1+\eps}$, $\rk h' x\ge c_1 w^{1-\eps}$, and equation~\eqref{eqn:h2} is satisfied.
    \end{enumerate}
    For every $i\in \{1,2,3\}$ let $p_i$ be the probability that the $i$-th case above holds if we choose the matrices $h$ and $g'$ uniformly at random. Recall that we have already chosen $\delta$ such that $p_1\to 0$ and $p_2\to 0$ as $w\to\infty$. Hence to complete the proof we also need to show that $p_3\to 0$ as $w\to\infty$. To estimate the probability $p_3$ we need to estimate the number of ways one can choose the matrix $x$ such that the third case above holds. 
    It is clear that we have
    \begin{enumerate}
       \item $\binom{w}{m}^2<w^{2m}$ choices for the subsets $A$ and $B$;
       \item less than $q^{2mw}$ choices for the elements of $x$ at the positions from $[w]\times [w]\setminus A\times B$;
       \item less than $\binom{w^2}{\alpha w^{1+\eps}}q^{\alpha w^{1+\eps}}<(qw)^{2\alpha w^{1+\eps}}$ choices for the elements of $x$ at the positions from~$A\times B$.
    \end{enumerate}
    Totally, we have $N$ choices of vector $x$, where
    $$\log_q N\le \log_q \rbr{q^{2mw} w^{2m}(qw)^{2\alpha w^{1+\eps}}} = 2\gamma w^{3/2+\eps}+2\gamma w^{1/2+\eps}\log_q w+2\alpha w^{1+\eps}(1+\log_q w).$$
    For each choice of the vector $x$ the probability that \eqref{eqn:h2} is satisfied equals to $q^{-r_2\rk h x}<q^{-c_1 r_2 w^{1-\eps}}=q^{-c_2 w^{2-\eps}}$ where $c_2=c_1(1-R_2)$.
    Thus, by the union bound, the probability $p_3$ is bounded from above by $N q^{-c_2 w^{2-\eps}}$, and we get 
    \[
    \log_q p_3\le \log_q N - c_2 w^{2-\eps}\le \gamma w^{3/2+\eps} + 2\gamma w^{1/2+\eps}\log_q w+2\alpha w^{1+\eps}(1+\log_q w)\;\underbrace{-\;c_2 w^{2-\eps}}_{\mbox{main term}}.
    \]
    It is easy to see that $\log_q p_3\to -\infty$ as $w\to\infty$ for any constants $\eps<1/4$, $\alpha>0$, $\gamma>0$.
    If $w$ is large enough then $m=\gamma w^{1/2+\eps}<\delta w/6$. Hence
    the probability $p$ that $(\ker h,\ker h')$ is not $(\alpha w^{1+\eps},m,\beta)$-product-expanding is bounded from above by $p_1+p_2+p_3\to 0$ as $w\to\infty$, and the lemma is proved.
\end{proof}

\subsection{Global expansion}\label{sc:global-exp}
\renewcommand{\mark}[1]{\textcolor{red}{#1}}

In this subsection, the graph $\Lambda$ is the graph from Subsection~\ref{sc:prod-graph}.

\begin{lemma}\label{lemma:expG'}
    The graph $\Lambda$ is $(a,2\lambda)$-edge-expanding.
\end{lemma}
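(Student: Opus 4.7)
The plan is to decompose $\Lambda$ into its horizontal and vertical pieces and handle each by Lemma~\ref{lemma:cover-exp}-style reasoning, then recombine via Remark~\ref{rm:edge-exp}. Concretely, let $\Lambda_h = (V(\Lambda), \Eh)$ and $\Lambda_v = (V(\Lambda), \Ev)$ be the subgraphs of $\Lambda$ consisting only of horizontal, respectively vertical, edges; clearly $\Lambda = \Lambda_h \cup \Lambda_v$ as graphs on the same vertex set.

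The key structural observation is that $\Lambda_h$ is a disjoint union of copies of $\hG$. Indeed, recall that $V(\Lambda)$ can be parametrised by triples $v \cdot g \cdot v'$ with $v, v' \in V(\Gamma)$, $g \in G$, and by the description of the covering relation in $\hG \times_G \hG$ a horizontal edge $e \cdot g \cdot v'$ (with $e \in E(\Gamma)$) only connects two vertices sharing the third coordinate $v'$. Hence $\Lambda_h$ decomposes as $\bigsqcup_{v' \in V(\Gamma)} \Lambda_h^{(v')}$, and the map $v \cdot g \cdot v' \mapsto \hat v_g$ is a graph isomorphism between $\Lambda_h^{(v')}$ and $\hG$. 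By symmetry, $\Lambda_v$ is also a disjoint union of copies of $\hG$ (one per choice of the first coordinate).

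Since $\hG$ is $(a, \lambda)$-edge-expanding by assumption, item~\ref{exp-prop:disjoint-union} of Remark~\ref{rm:edge-exp} implies that $\Lambda_h$ and $\Lambda_v$ are each $(a, \lambda)$-edge-expanding. Applying item~\ref{exp-prop:same-vert-union} of the same remark to the union $\Lambda = \Lambda_h \cup \Lambda_v$ yields that $\Lambda$ is $(a, 2\lambda)$-edge-expanding, as claimed. There is no real obstacle here; the only point that deserves a careful sentence is the identification of each connected component of $\Lambda_h$ with $\hG$, which follows directly from the explicit description of covering relations in $\hG \times_G \hG$ given in Subsection~\ref{sc:prod-graph}.
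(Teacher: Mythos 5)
Your proof is correct and follows essentially the same route as the paper: decompose $\Lambda$ into the horizontal and vertical edge subgraphs, identify each as a disjoint union of copies of $\hG$, and combine via properties~\ref{exp-prop:disjoint-union} and~\ref{exp-prop:same-vert-union} of Remark~\ref{rm:edge-exp}. No gaps.
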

\begin{proof}
    Since $E=E_\rightarrow\cup E_\uparrow$, we can split the graph~$\Lambda$ as $\Lambda=\Lambda_\rightarrow\cup\Lambda_\uparrow$, where $$\Lambda_\rightarrow:=V\cup E_\rightarrow=\{x\cdot g\cdot y\mid x\in V(\Gamma)\cup E(\Gamma),y\in V(\Gamma), g\in G\},$$ 
    $$\Lambda_\uparrow:=V\cup E_\uparrow=\{x\cdot g\cdot y\mid x\in V(\Gamma),y\in V(\Gamma)\cup E(\Gamma), g\in G\}.$$
    In terms of graphs, $\Lambda_\rightarrow$ is the subgraph of $\Lambda$ containing only horizontal edges, and $\Lambda_\uparrow$  is the subgraph of $\Lambda$ containing only vertical edges. It is easy to see that 
    $$\Lambda_\rightarrow=\bigsqcup_{y\in V(\Gamma)}\Lambda_\rightarrow^{(y)},\qquad \Lambda_\uparrow=\bigsqcup_{x\in V(\Gamma)}\Lambda_\uparrow^{(x)},$$
    where
    $$\Lambda_\rightarrow^{(y)}=\{x\cdot g\cdot y\mid x\in V(\Gamma)\cup E(\Gamma), g\in G\},$$
    $$\Lambda_\uparrow^{(x)}=\{x\cdot g\cdot y\mid y\in V(\Gamma)\cup E(\Gamma), g\in G\}.$$
    Since the graphs $\Lambda_\uparrow^{(x)}$ and $\Lambda_\rightarrow^{(y)}$ are isomorphic to $\hat \Gamma$, they are $(a,\lambda)$-edge-expanding. Hence, by property~\ref{exp-prop:disjoint-union} of the edge expansion (see Remark~\ref{rm:edge-exp}), their disjoint unions $\Lambda_\rightarrow$ and $\Lambda_\uparrow$ have the same edge expansion. Therefore by property \ref{exp-prop:same-vert-union} of the edge expansion their union $\Lambda$ is $(a,2\lambda)$-edge-expanding.
\end{proof}

\begin{lemma}\label{lemma:expG''}
    The graph $\Lambda^2$ is $(a/2 w, 8\lambda^2(\ln w+2))$-edge-expanding.
\end{lemma}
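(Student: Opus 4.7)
The plan is to combine the previous lemma on edge-expansion of $\Lambda$ with the general squaring lemma (Lemma~\ref{lemma:exp2}). The key observation is that $\Lambda = \hG \mathbin{\Box_G} \hG$ is $2w$-regular: at any vertex $x \cdot g \cdot y \in V$, the horizontal edges contribute the $w$-regularity of $\hG$ in the $x$-coordinate, and the vertical edges contribute another $w$, giving total degree $2w$.

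Next I would recall Lemma~\ref{lemma:expG'}, which says $\Lambda$ is $(a, 2\lambda)$-edge-expanding. Plugging the pair $(2w, 2\lambda)$ into Lemma~\ref{lemma:exp2} (with its $w$ replaced by $2w$ and its $\lambda$ replaced by $2\lambda$) yields that $\Lambda^2$ is
\[
\bigl(a/(2w),\; 2(2\lambda)^2(1 + \ln(2w))\bigr)\text{-edge-expanding} = \bigl(a/(2w),\; 8\lambda^2(1 + \ln(2w))\bigr)\text{-edge-expanding}.
\]

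Finally I would absorb the $\ln 2$ into the additive constant using $1 + \ln(2w) = 1 + \ln 2 + \ln w < 2 + \ln w$, and invoke monotonicity of edge-expansion (property~\ref{exp-prop:monotonic} in Remark~\ref{rm:edge-exp}) to replace $8\lambda^2(1 + \ln(2w))$ with the slightly larger $8\lambda^2(\ln w + 2)$. No step here looks like an obstacle: all the nontrivial work has already been done in Lemmas~\ref{lemma:expG'} and~\ref{lemma:exp2}, and this lemma is essentially a bookkeeping corollary that identifies the correct degree of $\Lambda$ and substitutes into the squaring bound.
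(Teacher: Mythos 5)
Your proof is correct and matches the paper's argument step for step: identify $\Lambda$ as $2w$-regular, apply Lemma~\ref{lemma:exp2} to the $(a,2\lambda)$-expansion from Lemma~\ref{lemma:expG'}, and loosen $1 + \ln(2w)$ to $\ln w + 2$. The only cosmetic difference is that you explicitly cite the monotonicity property from Remark~\ref{rm:edge-exp}, which the paper leaves implicit.
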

\begin{proof}
    By Lemma \ref{lemma:expG'} graph $\Lambda$ is $(a,2\lambda)$-edge-expanding. From the definition of $\Lambda$ it is easy to see that $\Lambda$ is a~$2w$-regular graph. 
    Hence by Lemma \ref{lemma:exp2} the graph ${\Lambda}^2$ is $(a/2 w,8\lambda^2(1+\ln(2w)))$-edge-expanding. 
Since $\ln(2w)< \ln w+1$, we obtain the assertion of the lemma.
\end{proof}

\newcommand{\gve}{\Lambda_{\Box}}

In the rest of this subsection, we assume that $c$ is some fixed locally minimal $1$-cycle in the complex $\cC_\bullet(X;\cF)$ from  Subsection~\ref{sc:proof-out}.

\begin{lemma}\label{lemma:face-active-edge}
    If $\partial c=0$, then each face-active vertical edge is incident to at least $d(\ker h)$ active faces.
\end{lemma}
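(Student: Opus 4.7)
The plan is to localize the cycle condition $\partial c = 0$ at a single vertical edge $e$ and read off a nonzero codeword of $\ker h$ from the face coefficients of $c$ incident to $e$.

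First I would make the restriction of $\partial_1\colon \cC_1 \to \cC_0$ to a vertical edge $e \in \Ev$ explicit. Using the diagram for $\cC = \cA \otimes_G \cB^*$ given in the introduction, $\partial_1$ decomposes on the two summands of $\cC_1 = \cC_F \oplus \cC_V$ as $(A \otimes \id) \oplus (\id \otimes B^*)$, where $A$ and $B$ are the boundary maps of the $G$-lifted Tanner complexes $\cA = \cT(\hG; h)$ and $\cB = \cT(\hG; h')$. Fix $e = u_1 \cdot g \cdot e_2$ with $u_1 \in V(\hG)$ and $e_2 \in E(\hG)$ joining $u_2, u_2' \in V(\hG)$. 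The $w$ faces incident to $e$ have the form $f_j = e_1^{(j)} \cdot g \cdot e_2$, indexed by the $w$ edges $e_1^{(1)}, \ldots, e_1^{(w)}$ of $\hG$ meeting $u_1$; the two vertices incident to $e$ are $v_1 = u_1 \cdot g \cdot u_2$ and $v_2 = u_1 \cdot g \cdot u_2'$. Unpacking $A \otimes \id$ at $u_1$ (whose local boundary is permutation-equivalent to $h$) and $\id \otimes B^*$ at $u_2$ and $u_2'$ (which multiplies each vertex matrix by the column of $h'$ corresponding to $e_2$) yields an identity of the form
\[
  (\partial_1 c)_e \;=\; h\, a_e \;+\; \Phi_e(c_{v_1}, c_{v_2}) \;\in\; \F_q^r,
\]
where $a_e := (c_{f_1}, \ldots, c_{f_w}) \in \F_q^w$ lists the face coefficients around $e$ (in the order prescribed by the local basis at $u_1$) and $\Phi_e\colon (\F_q^{r\times r'})^2 \to \F_q^r$ is an $\F_q$-linear map that vanishes whenever both arguments are zero.

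With this local formula in hand I would then invoke the face-active hypothesis. By definition $e$ is incident to no active vertex, so $c_{v_1} = c_{v_2} = 0$ and $\Phi_e(c_{v_1}, c_{v_2}) = 0$. The cycle condition $\partial c = 0$ reduces to $h\, a_e = 0$, so $a_e$ belongs to a code permutation-equivalent to $\ker h$. Moreover $a_e \neq 0$ because $e$ is face-active, so at least one of the incident faces $f_1, \ldots, f_w$ is active. Hence $a_e$ is a nonzero codeword of a code of minimum distance $d(\ker h)$, which forces $\wt(a_e) \geq d(\ker h)$. Since each nonzero coordinate of $a_e$ corresponds to a distinct active face incident to $e$, the edge $e$ is incident to at least $d(\ker h)$ active faces, as claimed.

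The only step that requires real care is the explicit computation of $(\partial_1 c)_e$ from the tensor-product description of $\cC$; once this local formula is in place, the conclusion is immediate from the definition of the minimum distance of $\ker h$. The same style of local analysis at vertical edges (with the vertex contribution no longer vanishing) is precisely what will be needed in the subsequent key Lemma~\ref{lemma:AB*-vert-cases} when invoking the product-expansion property from Subsection~\ref{sc:local-exp}.
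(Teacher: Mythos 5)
Your proposal is correct and follows essentially the same route as the paper: restrict $\partial c = 0$ to the vertical edge $e$, note that the vertex contributions vanish because $e$ is face-active, and conclude that the vector of face coefficients around $e$ is a nonzero codeword of a code permutation-equivalent to $\ker h$, hence has weight at least $d(\ker h)$. The paper phrases this more compactly via the restricted boundary map $\partial_{F_e\to e}\sim h$ and equation~(\ref{eq:boundary-res}), whereas you unpack the tensor-product structure explicitly, but the argument is the same.
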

\begin{proof}
    Consider a~face-active vertical edge $e$. Then $F_e$ is the set of faces incident to $e$, and $V_e$ is the set of (two) vertices incident to $e$. Since $e$ is face-active, $c|_{V_e}=0$ but $c|_{F_e}\ne 0$.
    Since $(\partial c)|_e$ depends only on $c|_{F_e}$ and $c|_{V_e}$, then using~(\ref{eq:boundary-res}) we have
    \[
    0 = (\partial c)|_e = (\partial (c|_{F_e} + \underbrace{c|_{V_e}}_{=0}))|_e 
    = \partial_{F_e\to e} (c|_{F_e}) 
    \]
Since $\partial_{F_e\to e}\sim h$, $c|_{F_e}\ne 0$, and $\partial_{F_e\to e} (c|_{F_e}) = 0$, we have that the number of active faces incident to the edge~$e$ is 
\[
\wt(c|_{F_e})\ge d(\ker \partial_{F_e\to e}) = d(\ker h),
\]
and the lemma is proved.
\end{proof}

\begin{lemma}\label{lemma:lbl-edge}
    If $d(\ker h)\ge 2m+\lambda$, $\partial c=0$ and $\wt_X(c)\le a/w$, then every active edge is incident to a labeled vertex.
\end{lemma}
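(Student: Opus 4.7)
My plan is a proof by contradiction that combines Lemma~\ref{lemma:face-active-edge} with the $(a,\lambda)$-edge-expansion of $\hG$.

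Active edges incident to an active vertex are immediately done, so only face-active edges require attention. Let $B$ be the set of face-active edges whose two endpoints are both unlabeled, and suppose for contradiction that $B\ne\emptyset$. For any $e\in B$ with endpoints $u,v$, since $u$ and $v$ lie on the face-active edge $e$ but fail to be labeled, the recursive labeling rule gives $|N_{\Lambda}(u)\cap L|\le m-1$ and $|N_{\Lambda}(v)\cap L|\le m-1$, where $L$ is the set of labeled vertices.

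Next I would analyze the active faces at $e$. By Lemma~\ref{lemma:face-active-edge} there are at least $d(\ker h)\ge 2m+\lambda$ of them, and each such face $f$ has a unique second vertical edge $e'_{f}$ whose two endpoints are horizontal neighbors of $u$ and $v$ respectively. Thus $e'_{f}$ has a labeled endpoint for at most $|N_{\Lambda}(u)\cap L|+|N_{\Lambda}(v)\cap L|\le 2m-2$ choices of $f$, and for the remaining $\ge d(\ker h)-(2m-2)\ge \lambda+2$ active faces both endpoints of $e'_{f}$ are unlabeled, i.e.\ $e'_{f}\in B$. These ``sibling'' edges are pairwise distinct because $\hG$ is simple, so two vertical edges can share at most one face.

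The key structural observation is that the sibling relation preserves the vertical base component $y\in E(\Gamma)$ of the edge, and within the slice of vertical edges with fixed $y$, which is canonically identified with $V(\hG)$, the sibling relation agrees with adjacency in $\hG$. Writing $B_y$ for the slice-$y$ part of $B$, the previous step shows every vertex of $B_y$ has degree at least $\lambda+2$ in the induced subgraph of $\hG$, so
\[
|E_{\hG}(B_y,B_y)|\ge(\lambda+2)\,|B_y|.
\]
On the other hand, a double count of (face-active edge, active face) incidences via Lemma~\ref{lemma:face-active-edge} and the hypothesis $\wt_X(c)\le a/w$ gives $|B_y|\le |B|\le 2a/(w\,d(\ker h))\le a$, whereupon the $(a,\lambda)$-edge-expansion of $\hG$ yields $|E_{\hG}(B_y,B_y)|\le\lambda|B_y|$. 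Combining both bounds forces $|B_y|=0$ for every $y$, contradicting $B\ne\emptyset$.

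The main obstacle is the geometric bookkeeping inside the lifted product: one must verify that the ``other vertical edge of a shared face'' operation really preserves the vertical base component and induces exactly a copy of $\hG$ on each slice, so that the single-graph edge-expansion hypothesis applies verbatim. Once this correspondence is cleanly in place, everything else is routine counting, with the crucial slack $\lambda+2>\lambda$ coming entirely from the strict inequalities $|N_{\Lambda}(u)\cap L|,|N_{\Lambda}(v)\cap L|\le m-1$ implied by $u,v\notin L$.
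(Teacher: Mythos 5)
Your proof is correct and takes essentially the same route as the paper's: both rest on Lemma~\ref{lemma:face-active-edge}, on viewing the vertical edges and faces as a disjoint union over $y\in E(\Gamma)$ of copies of $\hG$ (the paper's graph $\Lambda_{\Box}=\bigsqcup_{y}\Lambda_{\Box}^{(y)}$), and on the $(a,\lambda)$-edge-expansion of $\hG$. The only difference is the order in which the contradiction is closed: the paper first applies expansion to get $|E(S,S)|\le\lambda|S|$ and then exhibits one edge with $2m$ neighbours incident to labeled vertices, violating the labelling rule, whereas you use the labelling rule first to show every edge of $B$ has at least $\lambda+2$ neighbours inside $B$ and then violate expansion.
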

\begin{proof}
    The number of active vertical edges is at most $\wt_X(c)w\le a$.
    Let $S\subseteq E_\uparrow$ be the set of active edges that are not incident to a labeled vertex, $A\subset E_\uparrow$ be the set of all active edges. 
    If an~active edge is not incident to labeled vertices, then it is not incident to active vertices (every active vertex is labeled), then by definition it is face-active, hence $S$ is a subset of face-active edges.
    
    Consider the subposet $\gve=E_\uparrow\cup F$ of the poset $X$. Since each face from $F$ is incident to exactly two vertical edges from $E_\uparrow$, $\gve$ can be interpreted as a~graph with $V(\gve)=E_\uparrow$ and $E(\gve)=F$. 
    We have $$\gve=\{x\cdot g\cdot y\mid x\in V(\Gamma)\cup E(\Gamma),y\in E(\Gamma), g\in G\}=\bigsqcup_{y\in E(\Gamma)}\gve^{(y)}$$
    where
    $$\gve^{(y)}=\{x\cdot g\cdot y\mid x\in V(\Gamma)\cup E(\Gamma), g\in G\}\simeq \hat{\Gamma}.$$
    By property \ref{exp-prop:disjoint-union} of edge expansion $\gve$ has the same edge expansion as $\hat\Gamma$, i.e. it is $(a, \lambda)$-edge-expanding.
    The sets $S$ and $A$ can be interpreted as sets of vertices of graph $\gve$.  
    From the edge expansion of $\gve$ we have $|E_{\gve}(S,S)|\le \lambda |S|$. 
    
    On the other hand, by Lemma \ref{lemma:face-active-edge} since each edge $e\in S$ is face-active, it is incident to at least $d=d(\ker h)$ active faces, hence in the graph $\gve$ it is adjacent to at least $d\ge 2m+\lambda$ active edges, therefore $|E_{\gve}(S,A)|\ge (\lambda+2m)|S|$. 
    Thus
    $$|E_{\gve}(S,A\setminus S)|=|E_{\gve}(S,A)|-|E_{\gve}(S,S)|\ge (\lambda+2m)|S|-\lambda|S|=2m|S|.$$
    Suppose, $|S|\ne\varnothing$. 
    Then there exists an~edge $e\in S$ adjacent to $2m$ edges $e_1,\dots,e_{2m}\in A\setminus S$ in $\gve$. 
    By the definition of $A$ and $S$ each of the edges $e_i$ is incident to some labeled vertex $x_i$, which is adjacent to one of the two vertices of $e$ in $\Lambda$. 
    Hence, there are $2m$ different labeled vertices adjacent to one of the vertices of the edge $e$, and therefore one of these vertices is adjacent to at least $m$ labeled vertices, therefore it is labeled by definition.
    This contradicts the fact that the edge $e$ is from $S$ and cannot be incident to labeled vertices. Hence $S=\varnothing$, and the lemma is proved.
\end{proof}

In the next lemma, we need the following definition.
\begin{definition}
For a~given vector $y\in \FF_q^r$ and a~parity-check matrix $h\in\FF_q^{r\times w}$ we say that a~vector $x\in\F_q^w$ is an~\emph{$(y, h)$-coset leader} if it has the minimal possible Hamming weight among the vectors from $\{x\in \F_q^w \mid hx = y \}$. 
\end{definition}

\begin{lemma}\label{lemma:AB*-vert-cases}
    Suppose the pair of codes $(\ker h,\im h'^*)$ is $(s, 2m, \beta)$-product-expanding, $h'$ has full rank, $\beta w\ge 4m+3$, $d=\min(d(\ker h), d(\im h'^*))\ge 4m$, and $m\ge \max(4s/d,\lambda)$.
    If $c$ is a~locally minimal $1$-cycle, and $\wt_X(c)\le a/w$, then for each active vertex $v$ one of the following conditions holds:
    \begin{enumerate}
        \item $v$ is $m$-edge-expanding (i.e., it is adjacent to at least $m$ labeled vertices in $\Lambda$);
        \item $v$ is $s$-face-expanding (i.e., it is adjacent to at least $s$ labeled vertices in $\Lambda^2$).
    \end{enumerate}
\end{lemma}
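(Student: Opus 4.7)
The plan is by contradiction. Assume the active vertex $v$ is neither $m$-edge-expanding nor $s$-face-expanding; I will derive a contradiction. Let $L_\rightarrow, L_\uparrow \subseteq [w]$ denote the local indices of horizontal, respectively vertical, edges at $v$ whose other endpoint is labeled. Since $v$ itself is labeled and has fewer than $m$ labeled neighbors in $\Lambda$, we have $|L_\rightarrow|+|L_\uparrow|<m$, so $A := [w] \setminus L_\rightarrow$ and $B := [w] \setminus L_\uparrow$ both satisfy $|A|,|B|\ge w-m\ge w-2m$.

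The core construction produces, from the local data at $v$ (the face-coefficient matrix $x_v\in\F_q^{w\times w}$ at $v$ and the vertex coefficient $c_v\in\F_q^{r\times r'}$), a non-zero $\beta w$-minimal codeword $\bar x$ of the product code $\ker(h\otimes g')$, where $g'$ is a parity-check matrix of $\im h'^*$. I build $\bar x$ column-by-column: for each $j\in B$, the cycle condition $\partial c=0$ at the vertical edge at $v$ with local index $j$, combined with $c_{v'_j}=0$ for the unlabeled other endpoint $v'_j$, yields $h(u_j + u'_j)=-c_v h'_j$, where $u_j$ is the $j$-th column of $x_v$, $u'_j$ is the corresponding column of $x_{v'_j}$, and $h'_j$ is the $j$-th column of $h'$; I then set the $j$-th column of $\bar x$ to be $u_j + u'_j$. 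For the at most $|L_\uparrow|<m$ columns $j\notin B$, I pick any preimage under $h$ of $-c_v h'_j$, which exists since $h$ is surjective. This ensures $h\bar x = -c_v h'$, so every row of $h\bar x$ lies in $\im h'^*$, placing $\bar x\in\ker(h\otimes g')$; and $\bar x\ne 0$ follows from $c_v\ne 0$ together with the full rank of $h'$. The $\beta w$-minimality of $\bar x$ is inherited from local minimality of $c$: modifying a row of $\bar x$ by a codeword $a^T h'\in\im h'^*$ corresponds, up to bounded block-weight side-effects at the opposite horizontal endpoint $v_h'$ and its face-row, to applying $\partial(a\hat x)$ for a horizontal edge $\hat x$ at $v$ and coefficient $a\in\F_q^{r'}$; since only $|L_\rightarrow|<m$ horizontal neighbors of $v$ are labeled, those side-effects are absorbed by the slack $\beta w\ge 4m+3$.

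Applying $(s,2m,\beta)$-product-expansion to $\bar x$ with the subsets $A$ and $B$ yields at least $s$ nonzero entries of $\bar x$ in $A\times B$. Each such nonzero entry at position $(i,j)$ forces either the face at $v$ with local index $(i,j)$ to be active, or the corresponding face at $v'_j$ to be active. In either case, the active face is incident to a face-active vertical edge at an unlabeled vertex ($v_h'$ in the first case, $v'_j$ in the second), so Lemma~\ref{lemma:lbl-edge} forces the other endpoint of this edge---the diagonal $\Lambda^2$-neighbor of $v$ through the face---to be labeled. Distinct positions $(i,j)$ yield distinct $\Lambda^2$-neighbors, giving at least $s$ labeled $\Lambda^2$-neighbors of $v$, which contradicts the hypothesis that $v$ is not $s$-face-expanding. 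The main obstacle is the $\beta w$-minimality step: tying the local minimality of $c$---which only permits specific $\partial$-modifications simultaneously affecting $c_v$, the face-row at $v$, $c_{v_h'}$, and the face-row at $v_h'$---to row-minimality of $\bar x$ as a codeword of the product code, while carefully accounting for the side-effects at $v_h'$, is where the numerical hypotheses $d\ge 4m$, $m\ge\max(4s/d,\lambda)$, and $\beta w\ge 4m+3$ are consumed.
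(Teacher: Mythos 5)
Your overall skeleton matches the paper's: reduce to the local $w\times w$ face matrix at $v$, feed it to the product-expansion property of $(\ker h,\im h'^*)$, and convert the $\ge s$ nonzero entries in $A\times B$ into $\ge s$ edges to labeled vertices in $\Lambda^2$ via Lemma~\ref{lemma:lbl-edge}. But there are two problems, one structural and one fatal. The structural one: the $w$ faces incident to a vertical edge $e_j$ at $v$ form a \emph{single} set $F_{e_j}=F_v\cap F_{v'_j}$ (this is the complete-square-complex property), so the cycle condition at $e_j$ reads $h(c|_{F_{e_j}})=-c_v h'_j$ with one face vector $f_j:=c|_{F_{e_j}}$; there is no decomposition $u_j+u'_j$ into separate contributions ``at $v$'' and ``at $v'_j$'' (your two vectors are literally the same object). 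This is recoverable --- setting $\bar x_j:=f_j$ for $j\in B$ still gives $h\bar x=-c_vh'$ and hence $\bar x\in\ker(h\otimes g')$ --- but it means your $\bar x$ coincides on the relevant columns with the \emph{actual} face data of $c$.

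That is where the fatal gap sits. The $\beta w$-minimality you need has two halves: columns minimal up to $\beta w$ modulo $\im h'^*$, and rows minimal up to $\beta w$ modulo $\ker h$. Local minimality of $c$ only controls modifications by boundaries of $2$-cells, i.e.\ $\partial(u e_j)$ for horizontal edges $e_j$, which add an element of $\im h'^*$ to a single column of $f$ (plus two vertex terms of block weight $\le 2$); this is exactly the column half, and the slack $\beta w\ge 4m+3$ is consumed there. Nothing in the hypotheses prevents a row of $f$ from being, say, a weight-$d$ codeword of $\ker h$ plus a coset leader --- adding an element of $\ker h$ to a row of $f$ is not a boundary modification, so local minimality says nothing about it, and your sketch (``modifying a row of $\bar x$ by a codeword $a^Th'\in\im h'^*$'') addresses the wrong coset structure for rows. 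The paper closes this hole by introducing an auxiliary matrix $t$ whose rows are $(z_v^i,h)$-coset leaders, hence row-minimal \emph{by construction}, and then running a dichotomy on $f':=f+t$: if at least $2m$ rows $f'^i$ with $i\in B$ are nonzero, each lies in $\ker h\setminus\{0\}$, has weight $\ge d$, forces $\wt(f^i)\ge d/2$, and the bound $m\ge 4s/d$ gives $\wt(f|_{A\times B})\ge s$ \emph{without} product-expansion; otherwise $f$ agrees with $-t$ on all but $2m$ rows, product-expansion is applied to $t$ (not to $f$), and the conclusion is transferred back to $f$ through the agreement set $J_3$. Without this mechanism (or an equivalent one) your application of $(s,2m,\beta)$-product-expansion to $\bar x$ is unjustified, and the proof does not go through.
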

\begin{proof}

    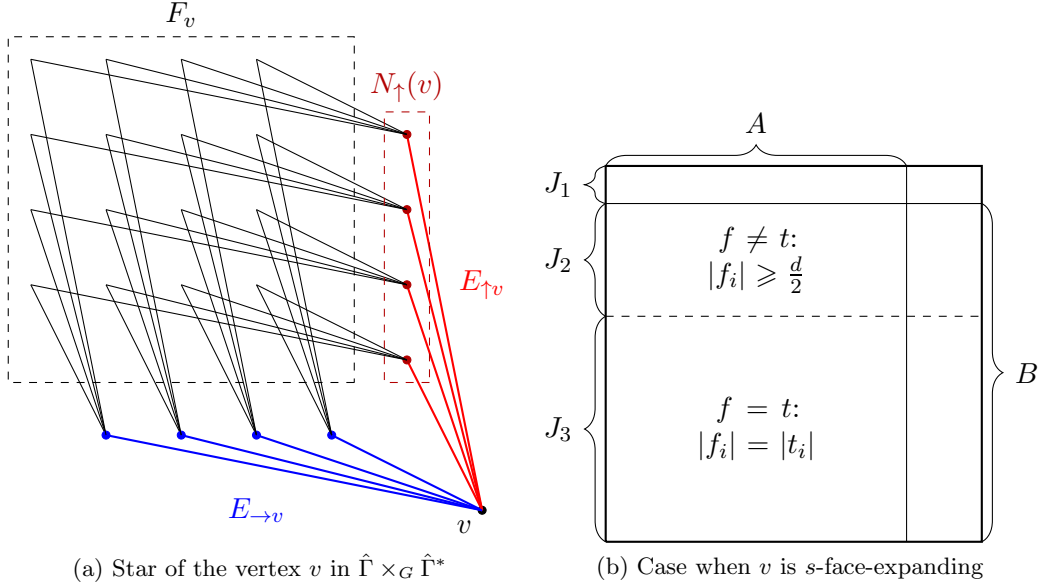
\begin{figure}[ht]
        \centering
        \subfloat[Star of the vertex $v$ in $\hat\Gamma\times_G\hat\Gamma^*$]{
            \begin{tikzpicture}
            \begin{scope}[xscale=-1]
            \filldraw [black] (0,0) circle (1.5pt);
            \foreach \i in {1,...,4}{
                \draw[-,red,thick] (0,0)--(1,1+\i);
                \draw[-,blue,thick] (0,0)--(1+\i,1);
                \filldraw [red!70!black] (1,1+\i) circle (1.5pt);
                \filldraw [blue] (1+\i,1) circle (1.5pt);
                \foreach \j in {1,...,4}{
                    \draw[-,black] (2+\j,2+\i)--(1,1+\i);
                    \draw[-,black] (2+\j,2+\i)--(1+\j,1);
                }
            }
            \draw[-,dashed,red!70!black] (0.7,1.7)--(1.3,1.7)--(1.3,5.3)--node[above]{$N_\uparrow(v)$}(0.7,5.3)--(0.7,1.7);
            \node[red] at (0,3) {$E_{\uparrow v}$};
            \node[blue] at (3,0) {$E_{\rightarrow v}$};
            \node[below left] at (0,0) {$v$};
            \draw[-,dashed,black] (1.7,1.7)--(6.3,1.7)--(6.3,6.3)--node[above]{$F_v$}(1.7,6.3)--(1.7,1.7);
            \end{scope}
        \end{tikzpicture}
        }
        \subfloat[Case when $v$ is $s$-face-expanding]{
        \begin{tikzpicture}
            \draw[-,thick] (0,0)--(5,0)--(5,5)--(0,5)--(0,0);
            \draw[-] (0,4.5)--(5,4.5);
            \draw[-,dashed] (0,3)--(5,3);
            \draw[-] (4,0)--(4,5);
            \draw [decorate,decoration={brace,amplitude=8pt}]
                (0,5) -- (4,5) node [black,midway,above,yshift=0.8em] {$A$};
            \draw [decorate,decoration={brace,amplitude=8pt}]
                (0,0) -- (0,3) node [black,midway,left,xshift=-0.8em] {$J_3$};
            \draw [decorate,decoration={brace,amplitude=8pt}]
                (0,3) -- (0,4.5) node [black,midway,left,xshift=-0.8em] {$J_2$};
            \draw [decorate,decoration={brace,amplitude=8pt}]
                (0,4.5) -- (0,5) node [black,midway,left,xshift=-0.8em] {$J_1$};
            \draw [decorate,decoration={brace,amplitude=8pt,mirror}]
                (5,0) -- (5,4.5) node [black,midway,right,xshift=0.8em] {$B$};
            \node[text width=1.5cm,align=center] at (2,3.75) {$f\ne t$: $|f_i|\ge \frac{d}{2}$};
            \node[text width=2cm,align=center] at (2,1.5) {$f = t$: $|f_i|=|t_i|$};
        \end{tikzpicture}
        }
        \caption{Local expansion for the vertex $v$}
        \label{fig:star}
    \end{figure}

    Before we start, let us fix some active vertex~$v=v'\cdot g\cdot v''$; $v',v''\in V(\Gamma)$, $g\in G$.
    Let $y=c|_v\in \FF_q^{r\times r'} v$, 
    $f=c|_{F_v}\in \FF_q F_v$. Then it is not hard to see that
    \begin{align*}
     E_{\rightarrow v} &= \{e'\cdot g'\cdot v''\in \Eh \mid \hat e'_{g'}\succ_{\hat\Gamma} \hat v'_g\},\\     
     E_{\uparrow v} &= \{v'\cdot g''\cdot e''\in \Ev \mid \hat e''_{g''}\succ_{\hat\Gamma} \hat v''_g\},\\
    F_{v} &= \{e'\cdot g'g^{-1}g''\cdot e''\in F\mid \hat e'_{g'}\succ_{\hat\Gamma} \hat v'_g,\  \hat e''_{g''}\succ_{\hat\Gamma} \hat v''_g \}. 
    \end{align*}
    Since $|E_{\rightarrow v}| = |E_{\uparrow v}| = w$, and each face from $F_v$ is incident to one edge from $\Ev$ and one edge from $\Eh$, the set $F_v$ is in natural one-to-one correspondence\footnote{An~equivalent way to express this property is to say that the $2$-dimensional complex $\tilde X = \hG \times_G \hG$ is a~\emph{complete square complex}~\cite{Wise:2007}, i.e., a~square complex where the link of each vertex is isomorphic to a~complete bipartite graph.} with the set $E_{\rightarrow v}\times E_{\uparrow v}$ (see Fig. \ref{fig:star}(a)). Therefore
    we can represent the restriction $f=c|_{F_v}$ as a~$w\times w$ matrix with the rows and columns indexed by the edges from $E_{\uparrow v}$ and $E_{\rightarrow v}$ respectively, i.e., $f\in\F_q F_v \cong \FF_q(E_{\rightarrow v}\times E_{\uparrow v})$. 
    Define the set 
    \[
    \Nv(v):=\{v'\in V\mid v \adj_e v', \ e\in E_\uparrow\},
    \]
    which consists of the vertices connected to $v$ by vertical edges. Note that the set of elements from $X(1)=V\cup F$ incident to the elements from $E_{\uparrow v}\subseteq X(0)$ is equal to $V_{E_{\uparrow v}}\cup F_{E_{\uparrow v}}$, where $V_{E_{\uparrow v}}=\Nv(v)\cup\{v\}$ and $F_{E_{\uparrow v}}=F_v$. 
    Hence we obtain
    \begin{equation*}
        (\partial c)|_{E_{\uparrow v}}
        =\bigl(\partial (c|_v + c|_{F_v} + c|_{\Nv(v)})\bigr)|_{E_{\uparrow v}}
        =\underbrace{\partial_{v\to E_{\uparrow v}}}_{\id \otimes {\partial_{\cB}^{(v'')}}^*}(y)  
             +\underbrace{\partial_{F_v\to E_{\uparrow v}}}_{\partial_{\cA}^{(v')}\otimes \id}(f)+\partial_{\Nv(v)\to E_{\uparrow v}}(c|_{\Nv(v)}).
    \end{equation*}
    Since $\cA\in\fT_G(\hat\Gamma; h)$, $\cB\in\fT_G(\hat\Gamma;h')$, we have $\partial_{\cA}^{(v')}\sim h$ and $\partial_{\cB}^{(v'')}\sim h'$, therefore with a~proper ordering of the edges in $E_v$ we can identify $\partial_{v\to E_{\uparrow v}}$ with $I_{r}\otimes {h'}^*$ and $\partial_{F_v\to E_{\uparrow v}}$ with $h\otimes I_w$.
    Consider $z_v:=(I_{r}\otimes {h'}^*)y$, $z_F:=(h\otimes I_w)f$, and $z_N:=\partial_{\Nv(v)\to E_{\uparrow v}}(c|_{\Nv(v)})$. Then we have
    $$0=(\partial c)|_{E_{\uparrow v}}=z_v+z_F+z_N.$$
    Since each vertex $v'\in \Nv(v)$ is connected to $v$ by a~single vertical edge\footnote{Here we use the assumption from Subsection~\ref{sc:proof-out} that $\hat\Gamma$ is simple. In fact, the lemma can also be proved in the case of multiple edges in~$\hat\Gamma$.}
    , we have that $|E_{\uparrow v}\cap E_{\uparrow v'}|=1$, $\supp \partial_{v'\to E_{\uparrow v}}(c|_{v'})\subseteq E_{\uparrow v'}\cap E_{\uparrow v}$, and hence $\wt_X(\partial_{v'\to E_{\uparrow v}}(c|_{v'}))\le\wt_X(c|_{v'}) \le 1$. Therefore we get
    \begin{multline*}
    \wt_X(z_N)=\wt_X\Bigl(\sum_{v'\in \Nv(v)}\partial_{v'\to E_{\uparrow v}}(c|_{v'})\Bigr)\le \sum_{v'\in \Nv(v)}\wt_X\rbr{\partial_{v'\to E_{\uparrow v}}(c|_{v'})}\le\\ \le \sum_{v'\in \Nv(v)}\wt_X(c|_{v'})= \wt_{\Nv(v)}(c).        
    \end{multline*}
    Note that $\wt_{\Nv(v)}(c)$ is the number of active vertices adjacent to $v$ by vertical edges. If $\wt_X(z_N)\ge m$, then $\wt_{\Nv(v)}(c)\ge m$, and hence $v$ is $m$-edge-expanding and the lemma is proved. 


    In the rest of the proof, we consider the most complex case when $\wt_X(z_N)<m$.
    Let $A\subseteq E_{\to v}$ (resp. $B\subseteq E_{\uparrow v}$) be the set of horizontal (resp. vertical) edges connecting $v$ with the unlabeled vertices. 
    Each pair of edges in $A\times B$ determines a~face incident to $v$ and not incident to the labeled vertices adjacent to $v$ in $\Lambda$. To prove the $s$-face expansion of $v$, first we need to show that $\wt_{A\times B}(f)\ge s$.
    If $|A|\le w-m$ or $|B|\le w-m$, then there are at least $m$ labeled vertices adjacent to $v$ in $\Lambda$, hence $v$ is $m$-edge-expanding. 
    In the rest of proof, we consider the case when $|A|,|B|>w-m$.
    
    
    It this case, we have 
    $\wt_X(z_F+z_v)=\wt_X(z_N) < m$.
    Let $z_v=(z_v^1,\dots,z_v^w)=(I_{r}\otimes h'^*)y$,
    $t=(t^1,\dots,t^w)\in\FF_q^w\otimes \FF_q^w$, where for each $i\in [w]$ the vector $t^i$ is some~$(z_v^i,h)$-coset leader. Then $(h\otimes I_{w})t=z_v$, and 
    $$(h\otimes g')t=(I_{r}\otimes g')z_v=(I_{r}\otimes g'{h'}^*)y=0,$$
    where $g'$ is a~parity-check matrix for the code $\im h'^*$. Hence $t\in\ker(h\otimes g')$. 
    
    Consider $f'=f+t=({f'}^1,\dots,{f'}^w)$. We call the component ${f'}^i$ the \emph{$i$-th row} of $f'$. We have $(h\otimes I_{w})f'=z_F+z_v$. 
    For each $i\in [w]$ we have one on the following cases:
    \begin{enumerate}
        \item $i\not\in B$: the corresponding vertical edge connects $v$ with an~active vertex; 
        \item $i\in B$ and ${f'}^i\ne 0$: in this case $h {f'}^i=0$, i.e. ${f'}^i\in \ker h\setminus \{0\}$, hence $|{f'}^i|\ge d$;
        \item $i\in B$ and ${f'}^i=0$: in this case $f^i = -t^i$, hence $\wt(f^i|_A)=\wt(t^i|_A)$.
    \end{enumerate}
    Denote by $J_1$, $J_2$, and $J_3$ the sets of indices corresponding to these cases (see Fig. \ref{fig:star}(b)).
    For these sets we have the following conditions:
    \begin{align*}
        &[m]=J_1\sqcup J_2\sqcup J_3,&&B=J_2\sqcup J_3,
        &&|J_1|<w.
    \end{align*}
    There are two cases we need to consider:
    \begin{enumerate}
        \item $|J_3|<w-2m$. Then 
        $$|J_2| = w-|J_1|-|J_3| > w-m-(w-2m)=m \ge 4s/d.$$
        Each of the rows ${f'}^i$ for $i\in J_2$ has weight at least $d$.
        On the other hand, for each $i\in J_2$ since $t^i$ is a~$(z^i_v, h)$-coset leader and ${f'}^i\in\ker h$, we have $\wt(t^i)\le\wt(t^i+{f'}^i)=\wt(f^i)$, and hence $\wt({f'}^i)\le \wt(t^i)+\wt(f^i)\le 2\wt(f^i)$. Therefore $\wt(f^i)\ge \wt({f'}^i)/2\ge d/2$. Thus we obtain 
        $$\wt(f|_{A\times B})\ge\wt(f|_{A\times J_2})\ge |J_2|\rbr{\frac{d}{2}-(w-|A|)}\ge\frac{4s}{d}\underbrace{\left(\frac{d}{2}-m\right)}_{\ge d/4}\ge s.$$
        \item $|J_3|\ge w-2m$. 
        Each row $t^i$ has the minimal weight in the coset $t^i+\ker h$ since it is a~$(z_v^i,h)$-coset leader. Suppose that $t$ is not a~$\beta w$-minimal codeword. Then there exist a~column $t_j$ and a~vector $\Delta t\in\im {h'}^*$ such that $\wt(t_j+\Delta t)\le\wt(t_j)-\beta w$. 
        Since $t_j|_{J_3}=f_j|_{J_3}$, we have 
        $$\wt(f_j+\Delta t)\le \underbrace{\wt(f_j+t_j)}_{\le w-|J_3|\le 2m}+\underbrace{\wt(t_j+\Delta t)}_{\le \wt(t_j)-\beta w}\le 2m+\underbrace{\wt(t_j)}_{\le \wt(f_j)+2m}-\beta w\le \wt(f_j)+4m-\beta w.$$
        Taking into account that $\beta w\ge 4m+3$, we have $\wt(f_j+\Delta t)\le \wt(f_j)-3$. 
        Since $\Delta t\in \im {h'}^*$, there exists $u\in \FF_q^{r'}$ such that $\Delta t={h'}^* u$. Consider $u e_j\in \cC_0$, where $e_j\in E_{\rightarrow v}$ is the $j$-th horizontal edge such that $v\adj_{e_j} v_j$, i.e., $e_j$ is incident to the faces corresponding to the $j$-th column of~$f$. Then we get $\partial_{e_j\to F_{e_j}}\sim {h'}^*$, and $|V_{e_j}|=2$. Therefore we obtain
        $$\wt_{F}(c+\partial (u e_j))-\wt_{F}(c)
        =\wt(\underbrace{c|_{F_{e_j}}}_{=f_j}+\underbrace{\partial_{e_j\to F_{e_j}}(u e_j)}_{={h'}^*u=\Delta t})-\wt(c|_{F_{e_j}})\le -3,$$
        $$\wt_V(c+\partial (u e_j))-\wt_V(c)\le |\supp \partial (u e_j)\cap V|\le |V_{e_j}|=2,$$
        and finally we see that
        $$\wt_{X}(c+\partial (u e_j))-\wt_{X}(c)\le -1$$
        which contradicts the local minimality of $c$.
        Hence our assumption is wrong, and $t$ is a~$\beta w$-minimal codeword. 
        Therefore from the $(s,2m,\beta)$-product-expansion property of $(\ker h, \im h'^*)$ we obtain $\wt(t|_{A\times J_3})\ge s$, and it follows that 
        $$\wt(f|_{A\times B})\ge \wt(f|_{A\times J_3})=\wt(t|_{A\times J_3})\ge s.$$
    \end{enumerate}
    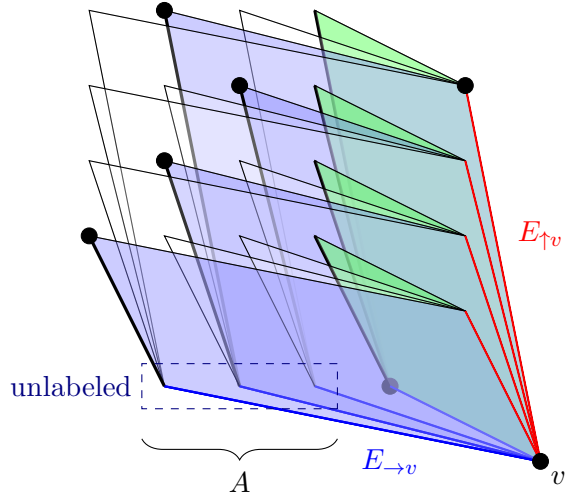
\begin{figure}
        \centering
        \begin{tikzpicture}
            \begin{scope}[xscale=-1]
            \foreach \i in {1,...,4}{
                \filldraw[fill=green,opacity=0.4] (0,0)--(1,6-\i)--(3,7-\i)--(2,1)--(0,0);
                \draw[-,black,very thick] (3,7-\i)--(2,1);
                \filldraw [black] (2,1) circle (3pt);
                \foreach \j in {1,...,4}{
                    \ifthenelse{\i=1 \AND \j=3 \OR \i=\j \AND \j>1}{
                    \filldraw[fill=blue!50!white,opacity=0.4] (0,0)--(1,6-\i)--(2+\j,7-\i)--(1+\j,1)--(0,0);
                    \filldraw [black] (2+\j,7-\i) circle (3pt);
                    \draw[-,black,very thick] (2+\j,7-\i)--(1+\j,1);
                    }{
                    \filldraw[fill=white,opacity=0.2] (0,0)--(1,6-\i)--(2+\j,7-\i)--(1+\j,1)--(0,0);
                    }
                    \draw[-,black] (0,0)--(1,6-\i)--(2+\j,7-\i)--(1+\j,1)--(0,0);
                    \draw[-,blue,thick] (0,0)--(1+\j,1);
                }
                \draw[-,red,thick] (0,0)--(1,6-\i);
            }
            \draw[-,dashed,blue!50!black] (2.7,0.7)--(5.3,0.7)--node[left]{unlabeled}(5.3,1.3)--(2.7,1.3)--(2.7,0.7);
            \node[red] at (0,3) {$E_{\uparrow v}$};
            \node[blue] at (2,0) {$E_{\rightarrow v}$};
            \node[below right] at (0,0) {$v$};
            \filldraw [black] (0,0) circle (3pt);
            \filldraw [black] (1,5) circle (3pt);
            \draw [decorate,decoration={brace,amplitude=8pt}]
                (2.7,0.3) -- (5.3,0.3) node [black,midway,below,yshift=-0.8em] {$A$
                };
            \end{scope}
        \end{tikzpicture}
        \caption{Active elements in the star of $v$: black circles---labeled vertices, green faces---active faces from $(E_{\rightarrow v}\setminus A)\times E_{\uparrow v}$, blue faces---active faces from $A\times E_{\uparrow v}$,
        thick black edges---active vertical edges that are not incident to $v$. Each thick edge is incident to a~labeled vertex which is the opposite to $v$ in this face.}
        \label{fig:active}
    \end{figure}
    Thus in both cases $\wt(f|_{A\times B})\ge s$. 
    Each active face is incident to 2 active vertical edges. Since $d\ge 4m> 2m+\lambda$, the conditions of Lemma \ref{lemma:lbl-edge} satisfied, therefore each of these active edges is incident to a labeled vertex.
    For an~active face $x\in A\times B$ one of its vertical edges is incident to the
    vertex $v$; another vertical edge is incident to some labeled vertex $v_x$ which is not adjacent to $v$ in $\Lambda$ by a~horizontal edge, hence $v_x$ is the opposite vertex to $v$ in the face $x$, i.e. it is connected to $v$ by a~path of length~2 consisting of one horizontal and one vertical edge in the graph $\Lambda$ (see Fig.~\ref{fig:active}). It is not hard to see that all these length~2 paths are different (though some vertices $v_x$ may be equal), and for each $x\in A\times B$ the vertex~$v_x$  is adjacent\footnote{Note that $v_x$ can be equal to $v$, which gives a~loop in~$\Lambda^2$.} to~$v$ in~$\Lambda^2$. Thus $v$ is $s$-face-expanding.  
\end{proof}
From Lemma \ref{lemma:AB*-vert-cases} and the definition of the labeled vertices we obtain the following result.
\begin{corollary}\label{col:edge-face-exp}
    Suppose the pair of codes $(\ker h,\im h'^*)$ is $(s, 2m, \beta)$-product-expanding, $\beta w\ge 4m+3$, $d=\min(d(\ker h), d(\im h'^*))\ge 4m$, and $m\ge \max(4s/d,\lambda)$.
    If $c$ is a~locally minimal $1$-cycle, and $\wt_X(c)\le a/w$, then for each labeled vertex $v$ one of the following conditions holds:
    \begin{enumerate}
        \item $v$ is $m$-edge-expanding (i.e. it is adjacent to at least $m$ labeled vertices in $\Lambda$);
        \item $v$ is $s$-face-expanding (i.e. it is adjacent to at least $s$ labeled vertices in $\Lambda^2$).
    \end{enumerate}
\end{corollary}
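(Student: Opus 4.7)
The plan is to dispatch the corollary by a simple case split on the definition of labeled vertex, invoking Lemma~\ref{lemma:AB*-vert-cases} only in the active case and extracting edge-expansion directly from the inductive clause in the remaining case.

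First I would unpack the definition of labeled vertex. Recall that the labeled vertices form the \emph{minimal} set closed under two rules: (i) every active vertex is labeled, and (ii) if $u$ is a vertex of some face-active edge and $u$ is adjacent (in $\Lambda$) to at least $m$ labeled vertices, then $u$ is labeled. By minimality, any labeled vertex $v$ falls into exactly one of two categories: either $v$ is active (rule~(i) applied), or $v$ is inactive and was added by rule~(ii). Note the hypotheses of Lemma~\ref{lemma:AB*-vert-cases} are precisely the standing hypotheses of the corollary.

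For the first case, suppose $v$ is an active vertex. Then all hypotheses of Lemma~\ref{lemma:AB*-vert-cases} hold (in particular $\wt_X(c) \le a/w$ and local minimality of $c$), so the conclusion of that lemma applies directly to $v$: it is either $m$-edge-expanding or $s$-face-expanding, which is exactly what we want.

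For the second case, suppose $v$ is labeled but not active. Since $v$ was not added via rule~(i), it must have been added via rule~(ii). This means by construction that $v$ is incident to some face-active edge and is adjacent in $\Lambda$ to at least $m$ labeled vertices. The latter condition is precisely the definition of $v$ being $m$-edge-expanding, so the first alternative of the corollary holds. Combining the two cases exhausts the possibilities for a labeled vertex. There is essentially no obstacle here—the content of the argument was already absorbed into Lemma~\ref{lemma:AB*-vert-cases} and into the careful formulation of the labeling rules; the corollary just repackages these for use in the next step of the proof of the main proposition, where one wants a uniform statement over all labeled vertices (rather than only active ones) so that the global expansion bounds on $\Lambda$ and $\Lambda^2$ can be applied to the full labeled set.
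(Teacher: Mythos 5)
Your proposal is correct and follows essentially the same two-case argument as the paper: invoke Lemma~\ref{lemma:AB*-vert-cases} for active labeled vertices, and observe that a non-active labeled vertex must have been added by the second labeling rule and is therefore $m$-edge-expanding by definition. Your unpacking of the minimality of the labeled set is in fact slightly more careful than the paper's one-line proof, which loosely says ``adjacent to at least $m$ active vertices'' where ``labeled'' is what the definition actually gives.
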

\begin{proof}
    If the vertex $v$ is active, then the lemma assertion is true by Lemma \ref{lemma:AB*-vert-cases}. Otherwise, by definition, the vertex $v$ is adjacent to at least $m$ active vertices in $\Lambda$, and hence it is $m$-edge-expanding.
\end{proof}

\begin{lemma}\label{lemma:AB*vzero}
    Suppose the pair of codes $(\ker h,\im h'^*)$ is $(s, 2m, \beta)$-product-expanding, $\beta w\ge 4m+3$, $d=\min(d(\ker h), d(\im h'^*))\ge 4m$, $m\ge \max(4s/d, 2\lambda')$, and $s\ge 2\lambda''$ where\footnote{The parameters $\lambda'$ and $\lambda''$ correspond to the edge expansion of the graphs $\Lambda$ and $\Lambda^2$.} $\lambda'=2\lambda$ and $\lambda''=8\lambda^2(\ln w+2)$. If $c$ is a~locally minimal $1$-cycle, and $\wt_X(c)\le \frac{a}{2w}$, then $c=0$.
\end{lemma}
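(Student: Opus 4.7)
The plan is to argue by contradiction: assume $c\ne 0$ and derive a contradiction. Let $L$ denote the set of labeled vertices, $V_A\subseteq L$ the set of active vertices, and decompose $L=L_E\cup L_F$, where $L_E$ and $L_F$ are, respectively, the sets of $m$-edge-expanding and $s$-face-expanding labeled vertices. Since the hypotheses of Corollary~\ref{col:edge-face-exp} coincide with those of the present lemma (noting that $\wt_X(c)\le a/(2w)\le a/w$), every labeled vertex belongs to $L_E\cup L_F$, so this decomposition is exhaustive.

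The first step is to bound $|L|$ in terms of $|V_A|\le \wt_X(c)$. By the recursive definition of labeling, every $v\in L\setminus V_A$ lies on some face-active edge and has at least $m$ neighbors in $L$ via edges of $\Lambda$, whence $m\,|L\setminus V_A|\le |E_\Lambda(L\setminus V_A,L)|$. The $(a,\lambda')$-edge-expansion of $\Lambda$ (Lemma~\ref{lemma:expG'}), applicable as soon as the sets involved have size at most $a$, then gives $|L\setminus V_A|\le(\lambda'/m)^2|L|$; together with $m\ge 2\lambda'$ this yields $|L|\le \tfrac{4}{3}|V_A|\le \tfrac{4}{3}\wt_X(c)$, keeping $|L|$ inside the small-set regime of both edge-expansion statements.

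Next, I would estimate $|L_E|$ and $|L_F|$ in the same spirit. Every $v\in L_E$ has at least $m$ $\Lambda$-neighbors in $L$, so by Lemma~\ref{lemma:expG'} we have $m|L_E|\le|E_\Lambda(L_E,L)|\le\lambda'\sqrt{|L_E||L|}$, whence $|L_E|\le(\lambda'/m)^2|L|\le|L|/4$. Analogously, every $v\in L_F$ has at least $s$ $\Lambda^2$-neighbors in $L$; the $(a/(2w),\lambda'')$-edge-expansion of $\Lambda^2$ (Lemma~\ref{lemma:expG''}) then gives $s|L_F|\le\lambda''\sqrt{|L_F||L|}$, and the assumption $s\ge 2\lambda''$ yields $|L_F|\le(\lambda''/s)^2|L|\le|L|/4$. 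Summing, $|L|\le|L_E|+|L_F|\le|L|/2$, which forces $L=\emptyset$.

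To conclude $c=0$ from $L=\emptyset$, note first that every active vertex is labeled, so $V_A=\emptyset$ and $c|_V=0$. Second, the conditions $d(\ker h)\ge 4m\ge 2m+\lambda$ (using $\lambda=\lambda'/2$ and $m\ge 2\lambda'$) and $\wt_X(c)\le a/w$ let us invoke Lemma~\ref{lemma:lbl-edge}, which forces every active vertical edge to be incident to a labeled vertex; since $L=\emptyset$, there are no active edges, hence no active faces, and $c|_F=0$ as well, giving $c=0$. The main obstacle will be the bookkeeping in the first step: one must track constants carefully enough to keep $|L|$, $|L_E|$, and $|L_F|$ within the small-set regimes of Lemmas~\ref{lemma:expG'} and~\ref{lemma:expG''}, and the numerical choice $\wt_X(c)\le a/(2w)$ together with the parameter constraints $m\ge 2\lambda'$ and $s\ge 2\lambda''$ appears calibrated precisely to absorb the $\tfrac{4}{3}$-factor inflation of $|L|$ over $|V_A|$ incurred by the iterative labeling process.
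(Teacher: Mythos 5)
Your second half — the decomposition $L=L_e\cup L_f$ via Corollary~\ref{col:edge-face-exp}, the two mixing-type bounds $m|L_e|\le\lambda'\sqrt{|L||L_e|}$ and $s|L_f|\le\lambda''\sqrt{|L||L_f|}$, the conclusion $|L|\le|L|/2$ hence $L=\emptyset$, and the final appeal to Lemma~\ref{lemma:lbl-edge} — is exactly the paper's argument. The gap is in your first step, the bound on $|L|$. Two problems. First, it is circular: to apply the $(a,\lambda')$-edge-expansion of $\Lambda$ to the pair $(L\setminus V_A,\,L)$ you must already know $|L|\le a$, which is precisely what you are trying to establish; the labeled set is defined as a closure/fixed point, so its size is not controlled a priori, and without truncating the iterative construction the expansion inequality cannot be invoked. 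Second, even granting that step, your conclusion $|L|\le\tfrac43\wt_X(c)\le\tfrac{4}{3}\cdot\tfrac{a}{2w}=\tfrac{2a}{3w}$ exceeds the threshold $a/(2w)$ required by Lemma~\ref{lemma:expG''} for the $(a/(2w),\lambda'')$-edge-expansion of $\Lambda^2$, so the bound $|E_{\Lambda^2}(L_f,L)|\le\lambda''\sqrt{|L_f||L|}$ is not justified and the $|L_f|\le|L|/4$ step collapses. The hypothesis $\wt_X(c)\le a/(2w)$ is calibrated for $|L|\le\wt_X(c)$, not for $|L|\le\tfrac43\wt_X(c)$.

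The paper avoids both issues with a purely combinatorial count that uses no expansion: every labeled non-active vertex is, by definition, an endpoint of a face-active edge, and by Lemma~\ref{lemma:face-active-edge} such an edge is incident to at least $d=d(\ker h)$ active faces, all of which are incident to that vertex; since each active face has only four incident vertices, double counting gives $|L\setminus V_A|\le 4\wt_X(c_F)/d\le\wt_X(c_F)$ (using $d\ge 4$), hence $|L|\le\wt_X(c_V)+\wt_X(c_F)=\wt_X(c)\le a/(2w)$ with no loss of constant. Replacing your first step by this count repairs the proof.
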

\begin{proof}
    Let $L$ be the set of labeled vertices.
    Then by Corollary \ref{col:edge-face-exp} each vertex $v\in L$ is either $m$-edge-expanding or $s$-face-expanding, i.e. $L=L_e\cup L_f$ where $L_e$ is the set of $m$-edge-expanding vertices, $L_f$ is the set of $s$-face-expanding vertices.
    By definition we have 
    \begin{equation}\label{eqn:Lef-lower}
      |E_{\Lambda}(L_e,L)|\ge m|L_e|,\qquad |E_{\Lambda^2}(L_f,L)|\ge s|L_f|.  
    \end{equation}
    Since each labeled vertex $v$ is either active ($v\in \supp c_V$) or incident to a~face-active edge, and hence adjacent to at least $d$ active faces, we get 
    \[
    |L|\le \wt_{X}(c_V)+4\wt_X(c_F)/d\le \wt_{X}(c_V)+\wt_X(c_F)=\wt_X(c)\le \frac{a}{2w}
    \]
    Hence by $(a,\lambda')$-edge-expansion of $\Lambda$ we have
    $$|E(L_e,L)|\le\lambda'\sqrt{|L||L_e|}.$$
    Similarly, from $(a/2w,\lambda'')$-edge-expansion of $\Lambda^2$ we obtain
    $$|E(L_f,L)|\le\lambda''\sqrt{|L||L_f|}.$$
    Taking into account \eqref{eqn:Lef-lower}, we obtain
    $$m|L_e|\le \lambda'\sqrt{|L||L_e|},\qquad s|L_f|\le \lambda''\sqrt{|L||L_f|},$$
    and hence
    $$|L_e|\le \rbr{\frac{\lambda'}{m}}^2|L|\le \frac{|L|}{4},\qquad |L_f|\le \rbr{\frac{\lambda''}{s}}^2|L|\le\frac{|L|}{4}.$$
    Since $|L|=|L_e\cup L_f|\le |L|/2$, we obtain $|L|=0$. Since each active vertical edge by Lemma \ref{lemma:lbl-edge} contains labeled vertices, we have that the number of active vertical edges is 0, and hence $c=0$.
\end{proof}

\subsection{Proof of the theorems}

\begin{proposition}\label{prop:main}
For every finite field $\FF_q$, intervals $(\rho_0, \rho_1), (\rho'_0, \rho'_1) \subseteq (0,1)$, constant $\mu>0$, and infinite set $W\subseteq \NN$, there exist matrices  $h\in\FF_q^{r\times w}$, $h'\in\FF_q^{r'\times w}$ for sufficiently large $w\in W$ such that $r/w\in (\rho_0, \rho_1)$, $r'/w\in (\rho'_0, \rho'_1)$, and for every $G$-lifted $w$-regular $(a,\mu \sqrt{w})$-edge-expanding simple graph $\hat{\Gamma}$ and Tanner codes $\cA\in \fT_G(\hat{\Gamma};h)$, $\cB\in \fT_G(\hat{\Gamma};h')$ with a~free action of a~group~$G$ we have 
    $$\dLM^{(1)}(\cA\otimes_G\cB^*)\ge a/2w,$$ 
    $$\dLM^{(1)}(\cB\otimes_G\cA^*)\ge a/2w.$$ 
\end{proposition}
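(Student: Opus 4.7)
The plan is to combine Lemma~\ref{lemma:rand-gh} (local existence of product-expanding codes), Lemmas~\ref{lemma:expG'} and~\ref{lemma:expG''} (global expansion of $\Lambda$ and $\Lambda^2$), and Lemma~\ref{lemma:AB*vzero} (the locally-minimal-cycle lemma), exactly in the manner sketched in Subsection~\ref{sc:proof-out}. Fix $\eps := 1/6$ and set $m := w^{1/2+\eps}$, $s := w^{1+\eps}$. Pick rates $R_1 \in (\rho_0,\rho_1)$ and $R_2 \in (\rho'_0,\rho'_1)$.

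First I would invoke Lemma~\ref{lemma:rand-gh} twice: once with parameters $(R_1, R_2)$, yielding product-expansion of $(\ker h, \im h'^*)$, and once with the pair swapped to $(R_2, R_1)$, yielding the symmetric product-expansion of $(\ker h', \im h^*)$, in each case choosing $\alpha, \gamma$ so that the resulting $(\alpha w^{1+\eps}, \gamma w^{1/2+\eps}, \beta)$-product-expansion dominates the $(s, 2m, \beta)$-condition we need. Each failure probability tends to $0$ as $w \to \infty$, so the union bound produces, for all sufficiently large $w \in W$, full-rank matrices $h \in \FF_q^{r\times w}$ and $h' \in \FF_q^{r'\times w}$ with $r = \floorbr{R_1 w}$ and $r' = \floorbr{R_2 w}$ (in particular with rates in the required intervals), with $\min\rbr{d(\ker h), d(\ker h'), d(\im h^*), d(\im h'^*)} \ge \delta w$ for some $\delta > 0$, and such that \emph{both} pairs $(\ker h, \im h'^*)$ and $(\ker h', \im h^*)$ are $(s, 2m, \beta)$-product-expanding for a common $\beta > 0$.

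Now fix any $G$-lifted $w$-regular $(a, \mu\sqrt{w})$-edge-expanding simple graph $\hG$ and Tanner codes $\cA \in \fT_G(\hG; h)$, $\cB \in \fT_G(\hG; h')$. Lemma~\ref{lemma:expG'} gives $(a, \lambda')$-edge-expansion of $\Lambda = \hG \mathbin{\Box_G} \hG$ with $\lambda' := 2\mu\sqrt{w}$, and Lemma~\ref{lemma:expG''} gives $(a/2w, \lambda'')$-edge-expansion of $\Lambda^2$ with $\lambda'' := 8\mu^2 w(\ln w + 2)$. Since $m = w^{2/3}$, $s = w^{7/6}$, $\lambda' = \Theta(\sqrt{w})$, $\lambda'' = O(w \ln w)$, and $d \ge \delta w$, all five numerical hypotheses of Lemma~\ref{lemma:AB*vzero}---namely $\beta w \ge 4m+3$, $d \ge 4m$, $m \ge 4s/d$, $m \ge 2\lambda'$, and $s \ge 2\lambda''$---are satisfied for all sufficiently large $w$. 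Applying that lemma to $\cC := \cA \otimes_G \cB^*$ shows that every locally minimal $1$-cycle of block weight at most $a/2w$ vanishes, so $\dLM^{(1)}(\cA \otimes_G \cB^*) \ge a/2w$. The second bound $\dLM^{(1)}(\cB \otimes_G \cA^*) \ge a/2w$ is obtained by repeating the argument with $h$ and $h'$ exchanged, which is precisely where the symmetric product-expansion of $(\ker h', \im h^*)$ secured above is consumed.

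The heavy technical lifting has already been done in the earlier lemmas---in particular Lemma~\ref{lemma:AB*-vert-cases}, where local minimality and product-expansion are combined, and Lemma~\ref{lemma:rand-gh}, which is the random-matrix argument. From the perspective of this proposition the remaining task is parameter bookkeeping, matching the polynomial scales $\sqrt{w}$, $w^{2/3}$, $w^{7/6}$, and $w \ln w$; I expect the only mildly subtle point to be ensuring that both asymmetric product-expansion conditions are simultaneously enjoyed by a single random pair $(h, h')$, which is exactly what the union bound in the second paragraph accomplishes.
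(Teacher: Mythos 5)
Your proof is correct and follows essentially the same route as the paper: the same choice $\eps = 1/6$, $m = w^{1/2+\eps}$, $s = w^{1+\eps}$, the same double invocation of Lemma~\ref{lemma:rand-gh} with a union bound to secure both directions of product-expansion for a single pair $(h,h')$, and the same application of Lemma~\ref{lemma:AB*vzero} once in each order. The paper's write-up simply instantiates $\alpha = 1$, $\gamma = 2$, and $r = \floorbr{\frac12(\rho_0+\rho_1)w}$ explicitly, which you left implicit, but the argument is otherwise identical.
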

\begin{proof}
     Let $w$ be a parameter which we will fix later. Define $\eps:=1/6$, $m:=w^{1/2+\eps}$, $s:=w^{1+\eps}$, $r:=\floorbr{\frac12(\rho_0 +\rho_1) w}$, $r':=\floorbr{\frac12(\rho'_0 +\rho'_1) w}$. 
     By Lemma~\ref{lemma:rand-gh} with $\alpha:=1$, $\gamma:=2$, there exist $\beta_1,\beta_2>0$ and $\delta_1,\delta_2>0$ such that for random matrices $h\in\FF_q^{r\times w}$, $h'\in\FF_q^{r'\times w}$ as $w\to\infty$ the following three conditions hold with high probability\footnote{Note that Lemma~\ref{lemma:rand-gh} is used here twice. First time $h$ is interpreted as a~parity-check matrix, but $h'$ as a~generator matrix, and the second time vice versa.}:
     \begin{enumerate}
         \item the matrices $h$ and $h'$ have maximal rank, i.e. $\rk h=r$, $\rk h'=r'$;
         \item the pair $(\ker h, \im {h'}^*)$ is $(s,2m,\beta_1)$-product-expanding and $\min(d(\ker h),d(\im {h'}^*))\ge \delta_1 w$;
         \item the pair $(\im h^*, \ker h')$ is $(s,2m,\beta_2)$-product-expanding and $\min(d(\ker h'),d(\im h^*))\ge \delta_2 w$.
     \end{enumerate}
    Therefore by the union bound for a~sufficiently large $w_0\in\NN$ for every $w\ge w_0$ there exists a~pair $(h,h')$ that satisfies these three conditions.  Let $\beta:=\min(\beta_1,\beta_2)$, $d:=\min(\delta_1,\delta_2)w$, $\lambda:=\mu\sqrt{w}$, $\lambda':=2\lambda$, $\lambda'':=8\lambda^2(\ln w+2)$. We have
    $$d=\Theta(w),\quad\lambda'=\Theta(w^{\frac{1}{2}})=o(m),\quad \lambda''=\Theta(w\ln w)=o(s),\quad m=\Theta(w^{\frac{1}{2}+\eps})=o(w)$$
    as $w\to\infty$.
    Hence there exists $w_1$ such that for every $w\ge w_1$ the following inequalities hold:
    \begin{equation}\label{eq:d_beta_m_cond}
    d>4m,\quad \beta w\ge 4m+3,\quad m>\max\rbr{\frac{4s}{d},2\lambda'},\quad s>2\lambda''.    
    \end{equation}

    Since the set $W$ is infinite, we can take $w:=\min\{w\in W\mid w\ge \max(w_0,w_1)\}$ and fix some pair $(h,h')$ that satisfy the conditions 1--3.  Now consider a~$G$-lifted $(a,\lambda)$-edge-expanding graph $\hat{\Gamma}$ and some $G$-lifted Tanner codes $\cA\in \fT_G(\hat{\Gamma};h)$, $\cB\in \fT_G(\hat{\Gamma};h')$.
    
    Since $\min(d(\ker h),d(\ker h'),d(\im h^*),d(\im {h'}^*))\ge d$, and conditions (\ref{eq:d_beta_m_cond}) hold, we can apply Lemma \ref{lemma:AB*vzero} to the pair of codes $(h,h')$ and obtain that every non-zero locally minimal 1-cycle of the chain complex $\cA\otimes_G\cB^*$ has the weight at least $a/2w$. Hence we have
    $$\dLM^{(1)}(\cA\otimes_G\cB^*)\ge a/2w.$$
    Since lemma~\ref{lemma:AB*vzero} is also applicable to the pair $(h',h)$, we have
    $$\dLM^{(1)}(\cB\otimes_G\cA^*)\ge a/2w,$$
    which completes the proof of the proposition.
\end{proof}

\setcounter{theorem}{0}
\begin{theorem}
      For every number $R \in (0, 1/2)$ and finite field $\mathbb{F}_q$ it is
  possible to find universal constants $s$ and~$\omega$ such that there exists
  an~explicit family of $(\omega, s)$-locally testable classical LDPC codes with
  the~parameters $[n, k \geqslant Rn, d = \Theta (n)]_q$ as $n \rightarrow
  \infty$.
\end{theorem}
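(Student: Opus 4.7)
The plan is to apply Proposition~\ref{prop:main} with the family of bipartite double-covers $\bar X^{w-1,t}$ of Ramanujan Cayley graphs from Example~\ref{ex:Xpq}; these are $(n/\sqrt w,8\sqrt w)$-edge-expanding by Lemma~\ref{lemma:Xexp}, so the proposition applies with $\mu=8$. Given $R\in(0,1/2)$, I first choose rate intervals $(\rho_0,\rho_1),(\rho'_0,\rho'_1)\subseteq(0,1)$ so that $1-2\rho-1/(2\rho')>R$ holds for every $\rho\in(\rho_0,\rho_1)$ and $\rho'\in(\rho'_0,\rho'_1)$; this is feasible because $\sup_{(0,1)^2}(1-2\rho-1/(2\rho'))=1/2$, attained as $\rho\to 0$, $\rho'\to 1$. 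Applying Proposition~\ref{prop:main} with these intervals, the field $\F_q$, $\mu=8$, and $W$ the set of admissible values of $w$ fixes some large $w$ and matrices $h\in\F_q^{r\times w}$, $h'\in\F_q^{r'\times w}$. For each admissible prime $t$, set $\hG_t:=\bar X^{w-1,t}$, $G_t:=\mathrm{PSL}(\F_t^2)$, $\cA_t\in\fT_{G_t}(\hG_t;h)$, $\cB_t\in\fT_{G_t}(\hG_t;h')$, and form $\cC_t:=\cA_t\otimes_{G_t}\cB_t^*$; the proposition gives $\dLM^{(1)}(\cC_t)\ge a_t/(2w)=\Theta(|G_t|)$.

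The candidate code is $C_t:=\ker\partial_2\subseteq(\cC_t)_2\cong\F_q^{n_t}$ with $n_t=2r'w|G_t|$; since $\dim(\cC_t)_1=(4rr'+w^2)|G_t|$, the count $\dim C_t\ge\dim(\cC_t)_2-\dim(\cC_t)_1$ gives rate $\ge 1-w/(2r')-2r/w>R$. The parity-check matrix $\partial_2$ of the lifted product, taking the form $[A\otimes\id,\,-\id\otimes B^*]^\top$, has row- and column-weights bounded by a~constant depending only on the now-fixed $w,r,r'$, so $C_t$ is LDPC with some universal locality $\omega$. For local testability I apply the second claim of Lemma~\ref{lemma:dist-from-lm} at $i=1$: for every $x\in(\cC_t)_2$ with $\wt_X(\partial_2 x)<\dLM^{(1)}(\cC_t)$ one has $\wt_X(\partial_2 x)\ge d_X(x,C_t)$ in block weights. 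All block sizes being bounded constants ($r'$ on $\Eh$, $rr'$ on $V$, $1$ on $F$), the translation to Hamming weights costs only a~constant factor, yielding $|\partial_2 x|/m_t\ge(s/n_t)\,d(x,C_t)$ for some universal $s>0$; when $\wt_X(\partial_2 x)\ge\dLM^{(1)}(\cC_t)=\Theta(n_t)$ the inequality is immediate since the left side is already $\Theta(1)$.

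The main obstacle is the distance bound $d(C_t)=\Theta(n_t)$. Because $C_t=H_2(\cC_t)$ lies at the top of the $3$-term complex, every $2$-cycle is vacuously locally minimal, so $\dLM^{(2)}(\cC_t)$ coincides with $d(C_t)$ and a~direct invocation of Lemma~\ref{lemma:dist-from-lm} is circular. The plan is to prove the bound by adapting the argument of Proposition~\ref{prop:main} to $2$-cycles: a~non-zero $c\in\ker\partial_2$ simultaneously satisfies the \emph{vertex constraint} $(h\otimes\id)c=0$ (forcing each non-zero row of the restriction $c|_{E_{\rightarrow v}}\in\F_q^{r'\times w}$ to lie in $\ker h$ and hence have weight $\ge d(\ker h)$) and the \emph{face constraint} $(\id\otimes h'^*)c=0$, which invokes the $(s,2m,\beta)$-product-expansion of $(\ker h,\im h'^*)$ already secured by Proposition~\ref{prop:main} through Lemma~\ref{lemma:rand-gh}. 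Recasting the labeled-vertex edge/face-expansion dichotomy of Lemma~\ref{lemma:AB*vzero} with horizontal edges replacing the role played by vertices in the $1$-cycle proof, and invoking the edge-expansion of the graphs $\Lambda$ and $\Lambda^2$ from Lemmas~\ref{lemma:expG'}--\ref{lemma:expG''}, then forces $|c|_X=\Theta(|G_t|)$ and hence $d(C_t)=\Theta(n_t)$. The technical subtlety is correctly identifying the analogues of ``active'' and ``face-active'' cells at the top of $\cC_t$ and verifying that the edge- and product-expansion inputs already packaged by Proposition~\ref{prop:main} suffice in this dual setting.
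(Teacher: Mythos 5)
Your treatment of the rate, the locality, and the local testability matches the paper's: you invoke Proposition~\ref{prop:main} on the graphs $\bar X^{w-1,t}$ with suitably chosen rate intervals, obtain $k\ge Rn$ from the dimension count $\dim\ker\partial_2\ge\dim\cC_2-\dim\cC_1$, and derive the soundness bound from the second claim of Lemma~\ref{lemma:dist-from-lm} at $i=1$ together with the boundedness of the block sizes; this is exactly the paper's argument (up to an immaterial swap of the roles of $h$ and $h'$, i.e.\ of which of $r/w$, $r'/w$ is pushed toward $0$ and which toward $1$).

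The gap is the minimum-distance bound $d(\ker\partial_2)=\Theta(n)$. You correctly observe that Lemma~\ref{lemma:dist-from-lm} cannot be applied at the top of the complex, but your proposed remedy --- rerunning the labeled-vertex / edge-expansion / face-expansion dichotomy of Lemmas~\ref{lemma:AB*-vert-cases} and~\ref{lemma:AB*vzero} for $2$-cycles --- is only a plan, and it is doubtful as stated: those lemmas are driven by local minimality (Lemma~\ref{lemma:AB*-vert-cases} perturbs $c$ by $\partial(ue_j)$ and uses $\wt_X(c+\partial(ue_j))\ge\wt_X(c)$ to force $\beta w$-minimality of the coset-leader matrix $t$), and for $2$-chains there is no $3$-cell to flip, so that mechanism has no analogue; you would have to build a genuinely new argument, which you have not done. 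The paper's route is far shorter and is already latent in your own first observation: the constraint $(\partial_{\cA}\otimes_G\id)\,c=0$ says that, for each of the $2r'$ free generators of $\F_q^{r'}\hat V$ over $\F_qG$, the corresponding slice of $c$ lies in $\ker\partial_{\cA}=\cT(\bar X^{w-1,t};h)$, a classical expander code. Hence any nonzero $c\in\ker\partial_2$ has a nonzero slice of weight at least $d(\cT(\bar X^{w-1,t};h))$, which is $\Theta(|E(\bar X^{w-1,t})|)=\Theta(n)$ by the Sipser--Spielman bound once $w$ is large enough that $d(\ker h)>\lambda_2(\bar X^{w-1,t})=O(\sqrt w)$ --- a condition already guaranteed by the choice of $h$ in Proposition~\ref{prop:main}, since $d(\ker h)=\Theta(w)$. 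No product-expansion and no expansion of $\Lambda$ or $\Lambda^2$ is needed for this step.
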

\begin{proof}
    Fix some $R\in (0,1/2)$ and put $\eps:=(1-2R)/(6-2R)$.
    Note that for any $w$ from the infinite set $W:=\{p+1\in\NN\mid p\equiv 1\mod 4\mbox{ and }p\mbox{ is prime}\}$ there exist infinite family of graphs $\bar{X}^{w-1,t}$ from Example~\ref{ex:Xpq}. By Lemma \ref{lemma:Xexp} every graph $\bar{X}^{w-1,t}$ is $(n_0(t)/\sqrt{w},8\sqrt{w})$-edge-expanding, where $n_0(t)=t(t^2-1)=|V(\bar X^{w-1,t})|$.
    Consider the~chain complex 
    \[
    \cC := \cT(\bar{X}^{w-1,t},h)\otimes_G\cT^*(\bar{X}^{w-1,t},h'),
    \]
    with the boundary operator $\partial$, where $G:=\mathrm{PSL}(\FF_t^2)$, and $h$, $h'$ are the parity-check matrices of the local codes, which we will fix later. Let $|\cdot|$ be the block weight norm $\wt_X(\cdot)$ defined on $\cC$, considered as a~chain complex with a~local system on the cell poset $X = \bar{X}^{w-1,t}\times_G (\bar{X}^{w-1,t})^*$.
    By Proposition~\ref{prop:main} for the intervals $(1-\eps,1)$, $(0,\eps)$ and the parameter $\mu=8$ there exist $w\in W$ and matrices $h\in \F_q^{r\times w}$, $h'\in \F_q^{r'\times w}$ such that for every $\bar{X}^{w-1,t}$ we have
    $$\dLM^{(1)}(\cC)\ge n_0(t)/2w\sqrt{w}$$
    where $r/w>1-\eps$, $r'/w<\eps$.
    Let $n:=\dim\cC_2$ and $m:=\dim\cC_1$, then $n=n_0(t)r w$, $m=\frac{1}{2}n_0(t)(w^2+4rr')$.
    Hence $\dLM^{(1)}(\cC)\ge \frac{n}{2w^2r\sqrt{w}}>\frac{n}{2w^{7/2}}$.
    By Lemma \ref{lemma:dist-from-lm} for all $c\in\cC_2$ we have
    $$|\partial c|\ge\min(\dLM^{(1)}(\cC), |c+Z_2(\cC)|).$$
    Since $|y|\le \wt(y)$ for $y\in\cC$ and $\wt(c)\le r|c|\le w|c|$ for $c\in\cC_2$, taking into account that $n\ge \wt(c+Z_2(\cC))$ finally we obtain
    $$\wt(\partial c)\ge \min\rbr{\frac{n}{2w^{7/2}}, \frac{\wt(c+Z_2(\cC))}{w}}\ge \frac{1}{2w^{7/2}}\wt(c+Z_2(\cC)).$$
    We have 
    $$\frac{m}{n}=\frac{w^2+4rr'}{2rw}=\frac{1+4\frac{r}{w}\cdot\frac{r'}{w}}{2r/w}\le \frac{1+4\eps}{2(1-\eps)}=1-R.$$
    In particular, we have $m<n$, and hence
    $$\frac{1}{m}\wt(\partial c)\ge\frac{w^{-7/2}}{2m}\wt(c+Z_2(\cC))\ge \frac{w^{-7/2}}{2n}\wt(c+Z_2(\cC)).$$
    Therefore the code $Z_2(\cC)$ is $(\omega,s)$-locally testable where $\omega := 2w$ and  $s:=\frac{1}{2}w^{-7/2}$.
    For the dimension $k=\dim Z_2(\cC)$ we have $k\ge n-m\ge Rn$.
    
To complete the proof we also need to show that the linear code $Z_2(\cC)$ has the minimal distance $\Theta(n)$ as $n\to\infty$.
    It is not hard to see that the minimal distance of $Z_2(\cC)$ is not less than the distance of the component Tanner code $\cT(\bar{X}^{w-1,t},h)$, which is a~classical expander code~\cite{Sipser:1996}. Thus, as it follows from the proof of Proposition~\ref{prop:main}, we can fix a~sufficiently large number $w$ such that $d(\ker h) > \lambda_2(\bar{X}^{w-1,t})$ and obtain that $d(\cT(\bar{X}^{w-1,t},h))=\Theta(n)$ as $n\to\infty$.
\end{proof}

\begin{theorem}
    For every number $R \in (0, 1)$ and finite field $\mathbb{F}_q$ there exists an~explicit
    family of quantum LDPC codes over~$\mathbb{F}_q$
    with the~parameters {$\llbracket n, k \geqslant Rn, d = \Theta (n)
    \rrbracket_q$} as $n \rightarrow \infty$.
\end{theorem}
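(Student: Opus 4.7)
The plan is to apply Proposition~\ref{prop:main} to essentially the same family of chain complexes $\cC := \cT(\bar X^{w-1,t};h)\otimes_G \cT^*(\bar X^{w-1,t};h')$ used in the proof of Theorem~\ref{th:LTC}, but with the rate parameters of $h$ and $h'$ chosen so that the quantum CSS code $\cQ(\partial_1,\partial_2^\T)$ reaches rate at least $R$, and then to use \emph{both} locally minimal distance bounds provided by the proposition to control the two systolic distances simultaneously.

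First I would fix $R\in(0,1)$ and pick a small $\rho>0$ satisfying
\[
\frac{4\rho}{1+4\rho^{2}}\;<\;1-R,
\]
which is possible since the left-hand side tends to $0$ as $\rho\to 0$. I then invoke Proposition~\ref{prop:main} with both of the intervals $(\rho_0,\rho_1)$ and $(\rho_0',\rho_1')$ taken to be a small neighbourhood of $\rho$, with the Ramanujan double-covers $\bar X^{w-1,t}$ from Example~\ref{ex:Xpq} (which by Lemma~\ref{lemma:Xexp} are $(n_0(t)/\sqrt{w},8\sqrt{w})$-edge-expanding, so $\mu=8$), and with $W$ the infinite set of admissible $w$. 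This produces a sufficiently large $w$ and matrices $h\in\F_q^{r\times w}$, $h'\in\F_q^{r'\times w}$ with $r/w,r'/w\in(\rho_0,\rho_1)$ such that, for every prime $t$ in the allowed set,
\[
\dLM^{(1)}(\cA\otimes_G\cB^{*})\;\ge\;\tfrac{a}{2w},\qquad \dLM^{(1)}(\cB\otimes_G\cA^{*})\;\ge\;\tfrac{a}{2w},
\]
where $a=n_0(t)/\sqrt{w}$, $G=\mathrm{PSL}(\F_t^2)$, $\cA=\cT(\bar X^{w-1,t};h)$, $\cB=\cT(\bar X^{w-1,t};h')$.

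Next I would read off the parameters of the quantum CSS code $\cQ=\cQ(\partial_1,\partial_2^{\T})$ coming from $\cC$. Counting cells of $\bar X^{w-1,t}\times_G (\bar X^{w-1,t})^*$ as in the proof outline gives $n=\dim\cC_1=(w^{2}+4rr')|G|$ and check counts $m_X=2rw|G|$, $m_Z=2r'w|G|$. Hence
\[
\frac{k}{n}\;\ge\;1-\frac{m_X+m_Z}{n}\;=\;1-\frac{2(r/w+r'/w)}{1+4(r/w)(r'/w)}\;\ge\;R
\]
by the choice of $\rho$. For the minimum distance I would use $d(\cQ)\ge\min\bigl(d(H_1(\cC)),d(H_1(\cC^{*}))\bigr)$ together with Lemma~\ref{lemma:dist-from-lm} to reduce the problem to lower bounds on $\dLM^{(1)}(\cC)$ and $\dLM^{(1)}(\cC^{*})$. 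The first of these is the first bound supplied by Proposition~\ref{prop:main}; the second I would obtain by identifying the dual complex $\cC^{*}=(\cA\otimes_G\cB^{*})^{*}$ with (a relabelling of) the swapped $G$-lifted product $\cB\otimes_G\cA^{*}$ in a way that preserves the block-weight norm on the middle term, after which the second bound of the proposition applies. Since $a/2w=|G|/w^{3/2}$ and $|G|=t(t^{2}-1)/2$ grows linearly in $n$ for $w$ fixed, both systolic distances are $\Theta(n)$ as $t\to\infty$, giving $d(\cQ)=\Theta(n)$.

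The main obstacle I expect is making the symmetry identification $(\cA\otimes_G\cB^{*})^{*}\cong\cB\otimes_G\cA^{*}$ careful enough that the block weight $\wt_X(\cdot)$ used in the definition of $\dLM^{(1)}$ on the two sides really matches. With $G$ possibly non-abelian this requires switching between left and right module structures as in Remark~\ref{rm:left-right}, and verifying that the cell poset of the dualised complex agrees with $\hG\times_G\hG^{*}$ up to a reindexing of levels. Once this symmetry is spelt out, Proposition~\ref{prop:main} simultaneously provides the $X$-distance and $Z$-distance lower bounds, and the rate calculation together with the asymptotics $n=\Theta(|G|)$ finishes the proof.
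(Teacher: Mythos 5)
Your proposal is correct and follows essentially the same route as the paper: the same complex $\cT(\bar X^{w-1,t};h)\otimes_G\cT^*(\bar X^{w-1,t};h')$, an application of Proposition~\ref{prop:main} with small rate parameters for $h,h'$, Lemma~\ref{lemma:dist-from-lm} to pass from the two locally minimal distance bounds to $d_X$ and $d_Z$ via the identification $\cC^*\cong\cB\otimes_G\cA^*$, and the same cell-counting for the rate (the paper simply takes the intervals $(0,(1-R)/4)$ rather than a neighbourhood of a single $\rho$, which is an immaterial difference). The symmetry identification you flag as the main obstacle is exactly the step the paper handles via Remark~\ref{rm:left-right} and a footnoted isomorphism of based complexes, so no gap remains.
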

\begin{proof}
    Fix some $R\in (0,1)$.
        Note that for every~$w$ from the infinite set $W:=\{p+1\in\NN\mid p\equiv 1\mod 4\mbox{ and }p\mbox{ is prime}\}$ there exist infinite family of graphs $\bar{X}^{w-1,t}$ from Example~\ref{ex:Xpq}.
        By Lemma~\ref{lemma:Xexp} the graph $\bar{X}^{w-1,t}$ is $(n_0(t)/\sqrt{w},8\sqrt{w})$-edge-expanding where $n_0(t)=t(t^2-1)=|V(\bar X^{w-1,t})|$.
    As~in the proof of Theorem~\ref{th:LTC}, we consider the complex $\cC=\cT(\bar{X}^{w-1,t},h)\otimes_G\cT^*(\bar{X}^{w-1,t},h')$ with the boundary operator $\partial$ where $G=\mathrm{PSL}(\FF_t^2)$. Let $|\cdot|$ be the block weight defined on $\cC$.
    By Proposition~\ref{prop:main} for $\rho_0=\rho_0'=0$,  $\rho_1 = \rho_1' = (1-R)/4$, and $\mu=8$ there exist $w\in W$ and matrices $h\in \F_q^{r\times w}$, $h'\in\F_q^{r'\times w}$ such that for all $\bar{X}^{w-1,t}$ we have
    $$\dLM^{(1)}(\cC)\ge n_0(t)/2w\sqrt{w},\qquad \dLM^{(1)}(\cC^*)\ge n_0(t)/2w\sqrt{w}$$
    where $r/w<(1-R)/4$, $r'/w<(1-R)/4$. 
    Let $n:=\dim\cC_1$, then $n=\frac{1}{2}n_0(t)(w^2+4rr')< w^2n_0(t)$.
    The chain complex $\cC$ defines the quantum CSS code $\cQ = \cQ(H_X,H_Z)$ with the parity-check matrices $H_X := \partial_1$ and $H_Z := \partial_2^*$.
    By Lemma \ref{lemma:dist-from-lm} for the complex~$\cC$ we have
    $$d_X(\cQ)=d(H_1(\cC))\ge \dLM^{(1)}(\cC)\ge \frac{n_0(t)}{2w\sqrt{w}}> \frac{n}{2w^{7/2}}.$$
    Similarly, since the dual chain complex $\cC^*$ is isomorphic\footnote{We say that two based chain complexes $\cC$ and $\cC'$ over $\F_q$ are \emph{isomorphic} if there exists a~one-to-one $\F_q$-linear map $f\colon \cC\to\cC'$ such that $f(\tilde{\cC}_i) = \tilde{\cC}'_i$ for every $i\in\ZZ$.} to the chain complex $\cB \otimes_G \cA^*$, then by Lemma~\ref{lemma:dist-from-lm} we have
    $$d_Z(\cQ)=d(H_1(\cC^*))\ge \dLM^{(1)}(\cC^*) > \frac{n}{2w^{7/2}},$$
    and hence $d(\cQ)=\min(d_X(\cQ),d_Z(\cQ))\ge \frac12 n/w^{7/2}$. 
    To complete the proof we also need to estimate the dimension $k = \dim(H_1(\cC))$ of the quantum code $\cQ$. We have 
    $$\dim\cC_0=n_0(t)rw=2n\frac{rw}{w^2+4rr'}<n(1-R)/2,$$ 
    $$\dim\cC_2=n_0(t)r'w=2n\frac{r'w}{w^2+4rr'}<n(1-R)/2,$$
    and therefore
    $$k=\dim(H_1(\cC))\ge n-\dim\cC_0-\dim\cC_2>n-n(1-R)/2-n(1-R)/2=nR.$$
    Thus $\cQ$ is a~$w$-limited quantum CSS code with the parameters $\llbracket n, k \geqslant Rn, d \ge \frac12 n/w^{7/2}\rrbracket_q$.
\end{proof}


\section*{Conclusions}

In this work, we showed that there exist asymptotically good families of quantum LDPC codes, which proves the well-known qLDPC conjecture. We also conjecture that a~decoder, similar to the small-set-flip decoding algorithm from~\cite{Leverrier:2015} (see also~\cite{Evra:2020}), can be used to correct in linear time any adversarial errors up to the constant fraction of the code length. 

The constructed qLDPC codes were obtained from the $G$-lifted product of two $G$-lifted Tanner codes, and to obtain qLDPC codes of linear minimum distance a~non-abelian group~$G$ was used.
In~fact, it is not hard to see that Proposition~\ref{prop:main} implies that using the~$\mathbf{C}_\ell$-lifted product of two $\mathbf{C}_\ell$-lifted Tanner codes from~\cite{Panteleev&Kalachev:2021}, where $\mathbf{C}_\ell$ is the~cyclic group of size $\ell = \Theta(n/\log n)$, one can obtain qLDPC codes with the parameters~$\llbracket n, k = \Theta (n), d = \Theta (n / \log n) \rrbracket_q$ as $n\to\infty$. Note that very recent results on explicit  $\mathbf{C}_\ell$-lifted expander graphs from~\cite{Jeronimo:2021} implies that the construction of these qLDPC codes can also be made explicit. 

In addition, as~a~byproduct of our proof of the qLDPC conjecture, we show that the second homology groups of the constructed in this work chain complexes can be used to obtain asymptotically good families of classical LDPC codes, which are also locally testable with constant query and soundness parameters. This resolves an~important conjecture in the field of locally-testable codes\footnote{An~independent solution to this problem was also proposed in~\cite{Dinur:2021}.}.

Though all the constructions we propose here can be considered as explicit, the constant size local codes used in our expander codes are still obtained by probabilistic methods. We think that it is an~interesting open problem to find an~explicit construction of such codes. One possible option would be to use MDS codes such as Reed-Solomon codes. In fact, such non-binary local codes can be used even if we want to get codes over $\F_2$ since every classical and quantum code over $\F_{2^s}$ can be also considered as a~code over $\F_2$, and the rate and minimal distance of such a~code is at least as good as for the non-binary one. However, it is not clear whether one can find a~pair of MDS codes that satisfies the product-expansion property required for our proof to work.

We also hope that some of the methods developed in the current work can  be used to show the existence of locally-testable qLDPC codes required to prove the qLTC conjecture, which in turn implies~\cite{Eldar:2017} the NLTS conjecture. 
A~natural candidate for such a~code would be a~$5$-term chain complex, where the three middle terms corresponds to a~good qLDPC code, and the remaining two terms represents its $X$- and $Z$-meta-checks (i.e., checks on checks). In fact, similar $5$-term complexes were already used in the context of single-shot decoding of qLDPC codes~\cite[Figure~1]{Campbell:2019}.

\section*{Acknowledgment}
We would like to thank Nikolas Breuckmann and Jens Eberhardt for very helpful and insightful discussions of possible ways to get good qLDPC codes, and for an~opportunity to report our results in the QCDA seminar. We want to express our gratitude to many people, including Thomas Vidick, Sergey Sadov, Victor Albert, Shouzhen Gu, who read our manuscript and made a~number of valuable comments. We also want to thank anonymous reviewers for indicating several unclear places in our work and for pointing out the connection of our product-expansion property to the robust testability of tensor product codes.

This work was supported by the Ministry of Science and Higher Education of the Russian Federation (Grant 075-15-2020-801).

\bibliographystyle{IEEEtran}
\bibliography{codes.bib}

\appendix

\section{Chain complexes}\label{sc:chain}

Let $\F$ be a~field. We say
that an~$n$-dimentional vector space $V$ over $\mathbb{F}$ is {\tmem{based}} if it comes with some distinguished basis $\tilde{V} := \{ v_1, \ldots, v_n \} \subseteq V$. In this case we can
naturally identify $V$ with the~coordinate vector space~$\mathbb{F}^n$.
Moreover, we can consider the standard inner product $\langle v, v \rangle$
defined on the basis as $\langle v_i, v_j \rangle := \delta_{i \nospace
j}$ and extend it by linearity. This also allows us to identify
the~{\tmem{dual}} vector space $V^{\ast} := \tmop{Hom} (V, \mathbb{F})$
with $V$ and hence with $\mathbb{F}^n$ if for every $v \in V$ we let~$v (x)
:= \langle v, x \rangle$. Now consider an~$\F$-linear map $\varphi\colon  U \rightarrow V$ between   based vector spaces $U\cong \F^m$ and $V\cong\F^n$. We usually identify such maps with the corresponding $m\times n$ matrix over $\F$. For every such map $\phi\colon U\to V$, we can consider the corresponding \emph{transpose} map~$\varphi^{\ast} : V^{\ast} \rightarrow U^{\ast}$ that takes
each linear function $f \in V^{\ast}$ to the~function $f \circ \varphi \in
U^{\ast}$. It is easy to check that the $n\times m$ matrix of the transposed map $\phi^*$ is the transpose of the matrix for $\phi$.  

Consider a~field $\mathbb{F}$. A~{\tmem{chain complex}} ({\tmem{over}}
$\mathbb{F}$) \ is a~collection of vector spaces\footnote{In fact, the
definitions given below also can be generalized to the case when $\mathbb{F}$
is an~arbitrary commutative ring. In this case, instead of vector spaces over
$\mathbb{F}$ one should consider free $\mathbb{F}$-modules.}
$(\mathcal{C}_i)_{i \in \mathbb{Z}}$ over $\mathbb{F}$, which is convenient to
consider as one big vector space $\mathcal{C}= \bigoplus_{i \in \mathbb{Z}}
\mathcal{C}_i$, with some fixed linear operator $\partial\colon\mathcal{C}
\rightarrow \mathcal{C} $ called the~{\tmem{boundary map}} such that $\partial
\mathcal{C}_{i + 1} \subseteq \mathcal{C}_i$ and $\partial^2 = 0$ for all $i
\in \mathbb{Z}$. The~condition $\partial \mathcal{C}_{i + 1} \subseteq
\mathcal{C}_i$ says that one can define the~maps $\partial_i := \partial
|_{\cC_i}\colon\mathcal{C}_i \rightarrow \mathcal{C}_{i - 1}$, $i \in
\mathbb{Z}$; while the condition $\partial^2 = 0$ implies that $\partial_i
\circ \partial_{i + 1} = 0$ for all $i \in \mathbb{Z}$ or, equivalently, $B_i
(\mathcal{C}) \subseteq Z_i (\mathcal{C})$, where $B_i (\mathcal{C}) :=
\tmop{im} \partial_{i + 1}$, $Z_i (\mathcal{C}) := \ker \partial_i$.
Therefore for every~$i \in \mathbb{Z}$ we can define the quotient group $H_i
(\mathcal{C}) := Z_i (\mathcal{C}) / B_i (\mathcal{C})$ called
the~{\tmem{$i$-th homology group}} of the complex $\mathcal{C}$. The elements
from $\mathcal{C}_i$, $Z_i (\mathcal{C})$, and $B_i (\mathcal{C})$ are called
the {\tmem{$i$-chains}}, {\tmem{$i$-cycles}}, and {\tmem{$i$-boundaries}}
of~$\mathcal{C}$, respectively. We say that a~complex $\mathcal{C}$ is
{\tmem{based}} if every space $\mathcal{C}_i$ comes with a~distinguished basis
$\tilde{\mathcal{C}}_i \subseteq \mathcal{C}_i$, which elements are called
{{\tmem{$i$-cells}}}. In~this work we consider only {\tmem{bounded}} chain
complexes, i.e., when $\mathcal{C}_i = 0$ for all $i \nin [s, t]$. A~bounded
chain complex $\mathcal{C}$ is usually represented by the following diagram:
\[ \mathcal{C}_s \xrightarrow{\partial_s} \mathcal{C}_{s - 1}
   \xrightarrow{\partial_{s - 1}} \cdots \xrightarrow{\partial_{t + 1}}
   \mathcal{C}_t, \]
where $t - s + 1$ is called the {\tmem{length}} of $\mathcal{C}$. A complex  of length $n$ is also called an \emph{$n$-term} complex.

The definition of a~chain complex and the related terminology come from
algebraic topology, where {an~$i$-cell} $c \in \tilde{\mathcal{C}}_i$ usually
corresponds to some $i$-dimensional object, and $\partial c$ is an~algebraic
representation of its $(i - 1)$-dimensional boundary. For example, one can
consider for any simple graph $\Gamma = (V, E)$ its $2$-term chain complex $\mathcal{C}_{\bullet}
(\Gamma; \mathbb{F}_2)$ over $\mathbb{F}_2$:
\[ 
\underbrace{\mathbb{F}_2 E}_{\cC_1}  \xrightarrow{\partial_1}  \underbrace{\mathbb{F}_2 V}_{\cC_0}, 
\]
where $\tilde{\mathcal{C}}_0 := V$, $\tilde{\mathcal{C}}_1 := E$,
and the boundary map $\partial$ is defined as $\partial e := v + v'$, for every $e = \{ v, v' \} \in E$.

Sometimes it is also convenient to consider the~dual notion of a~chain complex
called {\tmem{cochain complex}}. If we have a~chain complex $\mathcal{C}$ we
can obtain the corresponding cochain complex for~$\mathcal{C}$ if we
replace $\mathcal{C}$ by its dual vector space $\mathcal{C}^{\ast} :=
\tmop{Hom} (\mathcal{C}, \mathbb{F}_q)$, and the boundary map {$\partial\colon 
\mathcal{C} \rightarrow \mathcal{C}$} by the~corresponding {\tmem{coboundary
map}} $\delta\colon\mathcal{C}^{\ast} \rightarrow \mathcal{C}^{\ast}$ that takes
each linear function $x \mapsto f (x) \in \mathcal{C}^{\ast}$ to $x \mapsto f
(\partial x) \in \mathcal{C}^{\ast}$.  Since $\partial^2 = 0$, it follows that
$\delta^2\colon x \mapsto f (\partial^2 x)$ is the~zero map, and we also get
$\delta^2 = 0$. Moreover, since $\mathcal{C}= \bigoplus_{i \in \mathbb{Z}}
\mathcal{C}_i$, we see that $\mathcal{C}^{\ast} = \bigoplus_{i \in \mathbb{Z}}
\mathcal{C}^i$ and $\delta (\mathcal{C}^i) \subseteq \mathcal{C}^{i + 1}$,
where $\mathcal{C}^i := \tmop{Hom} (\mathcal{C}_i, \mathbb{F}_q)$, $i \in
\mathbb{Z}$. Similar to the case of chain complexes, we can define
the~maps $\delta_i := \delta |_{\cC_i}\colon\mathcal{C}_i \rightarrow
\mathcal{C}_{i + 1}$, and the condition $\delta^2 = 0$ implies that $\delta_{i
+ 1} \circ \delta_i = 0$ for all $i \in \mathbb{Z}$, or, equivalently, $B^i
(\mathcal{C}) \subseteq Z^i (\mathcal{C})$, where $B^i (\mathcal{C}) :=
\tmop{im} \delta_{i - 1}$, $Z^i (\mathcal{C}) := \ker \delta_i$. 
Hence we have the spaces $\mathcal{C}^i$, $Z^i (\mathcal{C})$, and $B^i
(\mathcal{C})$ of {\tmem{$i$-cochains}}, {\tmem{$i$-cocycles}}, and
{\tmem{$i$-coboundaries}}, respectively. Since for every $i \in \mathbb{Z}$ we
have $B^i (\mathcal{C}) \subseteq Z^i (\mathcal{C})$, we can also define the
quotient group $H^i (\mathcal{C}) := Z^i (\mathcal{C}) / B^i
(\mathcal{C})$ called the~{\tmem{$i$-th cohomology group}} of $\mathcal{C}$.

Since in the current work we always assume that each $\mathcal{C}_i$ comes with some
distinguished basis $\tilde{\mathcal{C}}_i$, we can identify both
$\mathcal{C}_i$ and $\mathcal{C}^i$ with the corresponding coordinate vector
space $\mathbb{F}_q^{n_i}$, where $n_i := | \tilde{\mathcal{C}}_i |$. In
this case, the maps $\partial_i \colon \mathbb{F}_q^{n_i} \rightarrow
\mathbb{F}_q^{n_{i - 1}}$ and $\delta_{i - 1} \colon \mathbb{F}_q^{n_{i - 1}}
\rightarrow \mathbb{F}_q^{n_i}$ can be also identified with the corresponding
matrices over $\mathbb{F}_q$, and it is easy to verify that $\delta_{i - 1}$
is the transpose of~$\partial_i$.

Every chain (resp. cochain) complex can be also considered as a~cochain (resp.
chain) complex if we use the following convention $\mathcal{C}^i
=\mathcal{C}_{- i}$. Thus in what follows we are going to consider the 
cochain complex $\mathcal{C}^{\ast}$ also as the chain complex, in which case we call it the \emph{dual chain complex} of $\cC$. For example, if we
have a~chain complex, corresponding to a~quantum CCS code $\cQ$ with matrices $H_X$ and $H_Z$:
\[ 
\mathcal{C}_{\bullet} (H_X, H_Z) := \left( 
     \underbrace{\mathbb{F}_q^{m_Z}}_{\mathcal{C}_1} \xrightarrow{H^{\T}_Z}  \underbrace{\mathbb{F}_q^n}_{\mathcal{C}_0} 
     \xrightarrow{H_X}  \underbrace{\mathbb{F}_q^{m_X}}_{\mathcal{C}_{-1}} 
   \right),
\]
then its cochain complex is
\[ \mathcal{C}^{\bullet} (H_X, H_Z) := \left( \mathbb{F}_q^{m_Z}
   \xleftarrow{H_Z} \mathbb{F}_q^n \xleftarrow{H^{\T}_X} \mathbb{F}_q^{m_X}
   \right) \]
and the dual chain complex for~$\cC$ is
\[ \mathcal{C}_{\bullet}^{\ast} (H_X, H_Z) := \left( \mathbb{F}_q^{m_X}
   \xrightarrow{H_X^*} \mathbb{F}_q^n \xrightarrow{H_Z} \mathbb{F}_q^{m_Z}
   \right), \]
and we see that $\mathcal{C}_{\bullet}^{\ast} (H_X, H_Z)
=\mathcal{C}_{\bullet} (H_Z, H_X)$, i.e., the dual chain complex corresponds to the \emph{dual} CSS code $\cQ^*$, where the roles of $H_X$ and $H_Z$ are reversed.

\section{Lifted product of two classical codes}\label{sc:lp-mat}

The lifted product was introduced in~{\cite{Panteleev&Kalachev:2021}} as a~way
to generalize many known
constructions~{\cite{qldpc,Hagiwara:2007,Tillich&Zemor:2014,Haah:2011,Kovalev:2013}}
of qLDPC codes. The~general idea was to {\tmem{lift}} the~hypergraph product
construction~{\cite{Tillich&Zemor:2014}}, which, for any two classical codes
with parity-check matrices $A \in \mathbb{F}_q^{m_a \times n_a}$ and $B \in
\mathbb{F}_q^{m_b \times n_b}$, gives the~quantum CSS code $\HP (A, B)$ with
the~following parity-check matrices\footnote{If the characteristics of $\F_q$
is $2$, we can omit the sign in the definition of $\HX$.}:
\begin{eqnarray*}
  H_X & := & [A \otimes I_{m_b}, - I_{m_a} \otimes B],\\
  H_Z & := & [I_{n_a} \otimes B^{\ast}, A^{\ast} \otimes I_{n_b}] .
\end{eqnarray*}
If we replace the elements of the~matrices $A := (a_{i \nocomma j})_{m_a
\times n_a}$ and $B := (b_{i \nocomma j})_{m_b \times n_b}$ by some $\ell
\times \ell$ matrices over $\mathbb{F}_q$, we obtain matrices $\hat{A} := (\hat{a}_{i \nocomma j})_{m_a \times n_a} \in
R^{m_a \times n_a}$ and $\hat{B} := (\hat{b}_{i
\nocomma j})_{m_b \times n_b} \in R^{m_b \times n_b}$ over the matrix ring $R := \F_q^{\ell\times\ell}$. We can also consider the matrices $\hat{A}$ and $\hat{B}$ as
the $\ell$ times larger block matrices over $\F_q$, which, in turn, are used to define the $\ell$ times larger analogs of $H_X$
and $H_{Z}$ in the~following way:
\begin{equation}\label{eq:LP}
  \begin{array}{ccc}
    \hat{H}_X & := & [\hat{A} \otimes I_{m_b}, - I_{m_a}
    \otimes \hat{B}],\\
    \hat{H}_Z & := & [I_{n_a} \otimes \hat{B}^{\ast},
    \hat{A}^{\ast} \otimes I_{n_b}],
  \end{array} 
\end{equation}
where in the~transposed block matrices
$\hat{A}^{\ast}$ and $\hat{B}^{\ast}$ we also transpose each $\ell \times
\ell$ block. As~it was shown in~{\cite{Panteleev&Kalachev:2021}}, if every element (i.e., a~matrix from $R$) of $\hat{A}$ commutes with every element of
$\hat{B}$, then this~construction always gives a~quantum CSS code with
the~parity-check matrices $\hat{H}_X$ and $\hat{H}_Z$, called
the~{\tmem{lifted product}} of $\hat{A}$, $\hat{B}$ and denoted by $\LP
(\hat{A}, \hat{B})$. Actually, it is easy to see that this commutativity
condition is a~necessary and sufficient condition to produce a~well defined
CSS code. Indeed, we have:
\[ \hat{H}_X  \hat{H}_Z^{\ast} = 0 \quad \Longleftrightarrow \quad (\hat{A}
   \otimes I_{m_b})  \left( I_{n_a} \hspace{-0.17em} \otimes
   \hat{B} \right) = (I_{m_a} \hspace{-0.17em} \otimes \hat{B}) 
   (\hat{A} \otimes I_{n_b}), \]
where the~last equation is equivalent to $\hat{a}_{i \nocomma j}  \hat{b}_{s
\nocomma t} = \hat{b}_{s \nocomma t}  \hat{a}_{i \nocomma j}$ for all $i, j,
s, t$.

The most straightforward way to make this general definition always work is to
use $\ell \times \ell$ matrices from some commutative matrix ring $R \subseteq
\mathbb{F}_q^{\ell \times \ell}$. However, it also works well with
{\tmem{any}} $\ell$-dimensional associative algebra $R$ over $\mathbb{F}_q$,
not necessary a~commutative\footnote{Let us note that for all the~examples of
lifted products in {\cite{Panteleev&Kalachev:2021}} the~algebra $R$ is
commutative, and the first examples of non-abelian lifted products first
appeared in~{\cite{Breuckmann:balanced:2021}} in the context of a~very similar
construction called {\tmem{balanced product}}.} one, if we use the~right
(resp. left) regular matrix representation of its elements as the~entries of
$\hat{A}$ (resp. $\hat{B}$). Indeed, if we fix a~basis in the~algebra $R$,
then the~{\tmem{right}} (resp. {\tmem{left}}) {\tmem{regular matrix
representation}} of an~element~$r \in R$ is defined as the $\ell \times \ell$
matrix of the linear operator $\rho_r := x \mapsto xr$ (resp. $\lambda_r
:= x \mapsto rx$). Since the multiplication in $R$ is associative, then
for any $a, b \in R$ the~operators $\rho_a$ and $\lambda_b$ always commute:
\[ (\rho_a \lambda_b) (x) = (bx) a = b (xa) = (\lambda_b \rho_a) (x) . \]
\ \ \ Hence, for any two matrices $A \in R^{m_a \times n_a}$ and $B \in
R^{m_b \times n_b}$ we can replace their elements by the~corresponding right
and left matrix representations to obtain the block matrices $\hat{A}$,
$\hat{B}$ and get the well-defined CSS code using \Cref{eq:LP}, which we
denote by $\tmop{LP} (A, B)$.

Let us note that when the~algebra~$R$ is commutative, then $\rho_r =
\lambda_r$ for each $r \in R$, and we do not need to distinguish the left and
the right representations of $R$. A very simple example of a~lifted product
code in this case is Kitaev's toric code~{\cite{Kitaev:2002}}, which can be
obtained as $\tmop{LP} (1 + x, 1 + y)$ with the~ring~$R =\mathbb{F}_2 [x, y] /
(x^L - 1, y^L - 1)$. Another important example is Haah's cubic
code~{\cite{Haah:2011}}, which is equal to $\tmop{LP} (1 + x + y + z, 1 + xy +
xz + yz)$, and $R =\mathbb{F}_2 [x, y, z] / (x^L - 1, y^L - 1, z^L - 1)$.
In~these two examples the parameter $L$ is the lattice size.
We see that in both these cases the ring~$R$ is a group algebra $\F_q G$ for some
finite group~$G$. Indeed, $G = \mathbf{C}_L^2$ for Kitaev's code, and $G =
\mathbf{C}_L^3$ for Haah's code, where $\mathbf{C}_L$~is the cyclic group of
order~$L$.

\begin{remark}
  Let us note that lifted products can also be used not only for group rings
  $R =\mathbb{F}_q G$. For example, if $R =\mathbb{F}_q [x] / (x^{\ell} -
  \alpha)$, where $\alpha \in \mathbb{F}_q^{\times}$, then any matrix $H \in
  R^{m \times n}$ defines the code $\mathcal{C} (H)$, which is called
  {\tmem{quasi-twisted}} code, or {\tmem{constacyclic}} if $m = n = 1$. \ Such
  codes~{\cite{Berlekamp:1968,Aydin:2001,Jia:2012}} sometimes have better
  parameters than quasi-cyclic and cyclic codes, which are their special cases
  when $\alpha = 1$. Thus it is an~interesting open problem whether lifted
  products of these classical codes can give quantum CSS codes with good
  parameters (cf. {\cite{Lv:2020}}).
\end{remark}

\section{Normed abelian groups}\label{sc:normed-group}

Let $\mathcal{M}$ be a~finite metric space with a~distance function $d (x,
y)$. For any non-empty subset $\mathcal{C} \subseteq \mathcal{M}$ we can
define its {\tmem{minimal distance}} $d (\mathcal{C})$ as
\begin{equation}
  d (\cC) := \min \{d (x, y) \mid x \ne y ; x, y \in \cC \},
  \label{eq:min-dist}
\end{equation}
where we assume that $d (\mathcal{C}) := \infty$ if $| \mathcal{C} | =
1$.

We can also define $d (x, \mathcal{Y})$ and $d (\mathcal{X}, \mathcal{Y})$ for
$x \in \mathcal{M}$ and $\mathcal{X}, \mathcal{Y} \subseteq \mathcal{M}$ in
a~straightforward way:
\begin{eqnarray}
  d (x, \mathcal{Y}) & := & \min_{y \in \mathcal{Y}} d (x, y), \\
  d (\mathcal{X}, \mathcal{Y}) & := & \min_{x \in \mathcal{X}, y \in
  \mathcal{Y}} d (x, y) .  \label{eq:dist-sets}
\end{eqnarray}
In what follows, we always assume that the~metric space~$\mathcal{M}$ is
an~{\tmem{abelian normed group}}, which means that it has an~abelian group
structure $(\mathcal{M}, +, \tmmathbf{0})$, and the~distance $d (\cdot,
\cdot)$ is {\tmem{invariant}}, i.e., $d (x + h, y + h) = d (x, y)$ for any $x,
y, h \in \mathcal{M}$. For example, if we have a~based vector space $\cM\cong \F_q^n$, then the standard Hamming distance $d(x,y):= \wt(x - y)$ is invariant.  It~is a~well-known and easily verified fact that
the~invariant distances $d (\cdot, \cdot)$ are in a~one-to-one correspondence
with the~functions $| \cdot | \colon \mathcal{M} \rightarrow \mathbb{R}_{\geqslant
0}$ called {\tmem{norms}} such that for all $x, y \in \mathcal{M}$ we have:
\begin{eqnarray}
  |x| = 0 & \Longleftrightarrow & x =\tmmathbf{0},  \label{eq:weight1}\\
  | - x| & = & |x|,  \label{eq:weight2}\\
  |x + y| & \leqslant & |x| + |y| ;  \label{eq:weight3}
\end{eqnarray}
where the~correspondence is given by $d (x, y) := | x - y |$ and $| x |
:= d (x, \tmmathbf{0})$. Such invariant distances on normed groups are
sometimes also called {\tmem{group norm metrics}}~{\cite{Deza:2013o}}. One can
easily check that if $\mathcal{C}$ is a~subgroup of~$\mathcal{M}$, then
the~minimal distance $d (\mathcal{C})$ can be also found by the~formula:
\begin{equation}
  d (\mathcal{C}) = \min_{x \in \mathcal{C}\backslash \{ \tmmathbf{0} \}} | x
  | . \label{eq:min-dist2}
\end{equation}

In fact, a~group norm metric on $\mathcal{M}$ also induces the~corresponding
metric on the~quotient group $\mathfrak{M}=\mathcal{M}/\mathcal{N}$ called
the~{\tmem{quotient norm metric}}~{\cite{Deza:2013o}}, where $\mathcal{N}$ is
some subgroup of $\mathcal{M}$. In this case, the~norm $| \mathcal{X} |$ for
$\mathcal{X} \in \mathfrak{M}$ is defined as
\begin{equation}
  |\mathcal{X}| := \min_{x \in \mathcal{X}}  | x | . \label{eq:weight}
\end{equation}
It is trivial to check that this norm satisfies
(\ref{eq:weight1})-(\ref{eq:weight3}), and the~corresponding distance
\[ d (\mathcal{X}, \mathcal{Y}) := | \mathcal{X}-\mathcal{Y} | \]
for $\mathcal{X}, \mathcal{Y} \in \mathfrak{M}$ is equivalent to the~distance
defined by (\ref{eq:dist-sets}). Thus, the~quotient group $\mathfrak{M}$ is
a~metric space, and for any group~$\mathcal{C}$ such that $\mathcal{N}
\subseteq \mathcal{C} \subseteq \mathcal{M}$ we can define the~minimal
distance of the~subgroup $\mathfrak{C}=\mathcal{C}/\mathcal{N} \subseteq
\mathfrak{M}$ as in (\ref{eq:min-dist}):
\[ d (\mathfrak{C}) := \min \{d (\mathcal{X}, \mathcal{Y}) \mid
   \mathcal{X} \ne \mathcal{Y}; \mathcal{X}, \mathcal{Y} \in \mathfrak{C}\} .
\]
In fact, using~(\ref{eq:min-dist2}) and~(\ref{eq:weight}) we can get a~much
simpler formula:
\begin{equation}
  d (\mathfrak{C}) = \min_{\tmscript{\begin{array}{c}
    \mathcal{X} \in \mathfrak{C}\backslash \{ \mathcal{N} \}
  \end{array}}}  | \mathcal{X} | = \min_{x \in
  \mathcal{C}\backslash\mathcal{N}}  | x | . \label{eq:quo-min-dist}
\end{equation}
Moreover, if $[\cdot] \colon \mathcal{M} \rightarrow \mathfrak{M}$ is a~canonical
projection, giving by $x \in \mathcal{M}$ its coset $[x] = x +\mathcal{N} \in
\mathfrak{M}$, then we get: $d ([x], \mathcal{Y}) = d (x, \mathcal{Y})$ and $|[x] | = d (x, \mathcal{N})$ for $x \in \mathcal{M}$ and $\mathcal{Y} \in
\mathfrak{M}$. This allows us to define for any subgroup $\cN\subseteq \cM$ a~new norm on $\cM$ that we call a~\emph{systolic norm}  as 
\[
|x|_\cN := |[x] | = d (x, \mathcal{N}).
\]

\newpage
\section{List of symbols and standard notations}\label{sc:symbols}

\begin{center}
\begin{tabular}{cp{0.45\textwidth}}
  $[n]$ & set $\{1,2,\dots,n\}$\\
  $\F_q$ & finite field with $q$ elements\\ 
  $R^{m\times n}$ & set of $m\times n$ matrices over $R$ \\
  $I_n$ & identity $n\times n$ matrix \\
  $\ker A$ & kernel of the linear map $v\mapsto A v$\\ 
  $\im A$ & image of the linear map $v\mapsto A v$\\ 
  $A^*$ & transpose map or transposed matrix for $A$\\
  $\cC^*$ & dual chain complex\\
  $\cF X$ & abelian group of formal sums $\sum_{x\in X} a_x x$ with \mbox{coefficients} $a_x\in\cF_x$ in a~local system $\cF$ \\
  $\wt(a)$ &  Hamming weight of $a\in\F_q^n$\\
  $\wt_S(a)$ &  block Hamming weight of $a\in\cF X$ relative to the subset~$S\subseteq X$\\
  $\abs{a}$ &  norm of $a\in A$ in a~normed abelian group $A$\\
  $\supp a$ & support $\{x\in X\mid a_x\ne 0\}$ for $a\in \cF X$\\
  $a|_S$ & restriction $\sum_{x\in S} a_x x$ to the subset $S\subseteq X$ of the formal sum $a=\sum_{x\in X} a_x x\in \cF X$ or a~vector $a\in \F_q^X$\\
  $\mathbb{K} G$ & group algebra over~$\mathbb{K}$ for the~group~$G$ \\
  $v \adj_e v'$ & $e$ connects $v$ and $v'$\\
  $G$-lift & $\abs{G}$-fold regular cover\\
  $A(\Gamma)$ &  adjacency matrix of $\Gamma$\\
  $\Gamma^2$ & square of the graph $\Gamma$, i.e., $A(\Gamma^2) = (A(\Gamma))^2$\\
  $E_\Gamma(S,T)$ &  set of oriented edges from $S$ to $T$ in~$\Gamma$\\
  $x\succ_P y$ & $x$ covers $y$ in a~poset~$P$ \\
  $\bar X^{p,q}$ &  double-cover of the~Ramanujan graph~$X^{p,q}$  \\
  $\cA \otimes_G \cB$ & $G$-lifted product of complexes $\cA$ and $\cB$\\
  $X \times_G Y$ & $G$-lifted product of posets $X$ and $Y$\\
  $[x : x']$ & incidence number for $x\in X(i)$, $x'\in X(i-1)$\\
  $\fT(\Gamma;h)$ & Tanner codes on $\Gamma$ with local code $\ker h$ \\
  $\fT_G(\hat{\Gamma};h)$ & $G$-lifted Tanner codes from $\fT(\hG;h)$\\
  $A\sim B$ & permutation equivalent codes or matrices\\
  $Z_i(\cC)$, $B_i(\cC)$ & spaces of $i$-cycles and $i$-boundaries for $\cC$\\
  $H_i(\cC)$ & $i$-th homology group of $\cC$\\
  $\dLM^{(i)}(\cC)$ & $i$-th locally minimal distance of $\cC$  \\
  $\partial_{S\to T}$ & restriction $\partial_{S\to T}\colon \cF S \to \cF T$ of a~boundary map $\partial\colon\cF X\to\cF X$ from $\cC_\bullet(X;\cF)$\\
\end{tabular}
\end{center}

\end{document}